\documentclass[journal,onecolumn]{IEEEtran}

\usepackage{graphicx} 
\usepackage{latexsym}

\usepackage{amssymb,amsmath,amsthm} 
\usepackage{mathrsfs}
\usepackage{multicol}

\usepackage[T1]{fontenc}
\usepackage{amsfonts} \usepackage{bm} \usepackage{dsfont}
\usepackage{color} \usepackage[mathscr]{eucal} \usepackage{cite}
\usepackage{color} \usepackage{balance} \usepackage{url}
\usepackage{bbm} \usepackage{subfigure} \usepackage{subfigmat}

\usepackage{multirow} % this package allows you to use multirows and
% multicolumns in a table

\usepackage{tabularx}

\usepackage{array} \newcolumntype{Z}{>{\centering\arraybackslash}
  m{.16\linewidth} }
\usepackage{array} \newcolumntype{W}{>{\centering\arraybackslash}
  m{.33\linewidth} }
\usepackage{array} \newcolumntype{A}{>{\centering\arraybackslash}
  m{.11\linewidth} }
\usepackage{array} \newcolumntype{V}{>{\centering\arraybackslash}
  m{.29\linewidth} }

\usepackage{array} \newcolumntype{B}{>{\arraybackslash}
  m{.16\linewidth} }

\usepackage{framed}

\newcommand{\eqdown}{\mathbin{\rotatebox[origin=c]{-90}{$=$}}}
\newcommand{\subsetdown}{\mathbin{\rotatebox[origin=c]{-90}{$\subset$}}}

\newcommand{\subsetneqdown}{\mathbin{\rotatebox[origin=c]{-90}{$\subsetneq$}}}

% \graphicspath{{/Users/sudeep/Desktop/Dropbox/RESEARCH/figs/two_unicast_hard_figs/}{/Users/sudeep/Desktop/Dropbox/RESEARCH/figs/GNS_two_unicast_and_symmetric_demands_figs/}{/Users/sudeep/Desktop/Dropbox/RESEARCH/figs/symmetric_demands_meta_theorem_figs/}{/Users/sudeep/Desktop/Dropbox/RESEARCH/figs/GNS_Netcod2011_figs/}{/Users/sudeep/Desktop/Dropbox/RESEARCH/figs/GNS_edge_cut_figs/}}

% \addtolength{\oddsidemargin}{-.875in}
% \addtolength{\evensidemargin}{-.875in}
\addtolength{\textwidth}{-0.64in} \addtolength{\topmargin}{0.18in}
\addtolength{\textheight}{-0.35in}
\addtolength{\oddsidemargin}{0.14in}
\addtolength{\evensidemargin}{0.14in}
% \addtolength{\bottommargin}{0.38in}

\theoremstyle{plain}
\newtheorem{theorem}{Theorem} 
\newtheorem{lemma}{Lemma}

\theoremstyle{definition}
\newtheorem{definition}{Definition}

\theoremstyle{remark}
\newtheorem{remark}{Remark}
\newtheorem{example}{Example}

% \newenvironment{proof1}[1][Proof:]{\begin{trivlist}
%   \item[\hskip \labelsep {\bfseries #1}]}{\end{trivlist}}
% %\newenvironment{definition}[1][Definition:]{\begin{trivlist}
% %  \item[\hskip \labelsep {\bfseries #1}]}{\end{trivlist}}
% \newtheorem{definition}{Definition}
% \newenvironment{example}[1][Example:]{\begin{trivlist}
%   \item[\hskip \labelsep {\bfseries #1}]}{\end{trivlist}}
% \newenvironment{remark}[1][Remark:]{\begin{trivlist}
%   \item[\hskip \labelsep {\bfseries #1}]}{\end{trivlist}}

% \newcommand{\qed}{\nobreak \ifvmode \relax \else \ifdim\lastskip<1.5em
%   \hskip-\lastskip \hskip1.5em plus0em minus0.5em \fi \nobreak \vrule
%   height0.75em width0.5em depth0.25em\fi}

\begin{document}

\title{The two-unicast problem}

\author{ \IEEEauthorblockN{Sudeep Kamath\IEEEauthorrefmark{1}, Venkat
    Anantharam\IEEEauthorrefmark{2}, David Tse\IEEEauthorrefmark{3},
    Chih-Chun Wang\IEEEauthorrefmark{4}\footnote{Parts of this paper
      were presented at the International Symposium on Network Coding
      2011, Beijing, China, the International Symposium on Information
      Theory 2013, Istanbul, Turkey, and the International Symposium
      on Information Theory 2014, Honolulu, Hawaii.}
  }\\
  \IEEEauthorblockA{\IEEEauthorrefmark{1}ECE Department, Princeton
    University,
    \\ sukamath@princeton.edu}\\
  \IEEEauthorblockA{\IEEEauthorrefmark{2}EECS Department, University
    of California, Berkeley, \\ ananth@eecs.berkeley.edu}\\
  \IEEEauthorblockA{\IEEEauthorrefmark{3}EE Department, Stanford
    University, \\ dntse@stanford.edu}\\
  \IEEEauthorblockA{\IEEEauthorrefmark{4}School of Electrical and
    Computer Engineering, Purdue University, \\ chihw@purdue.edu} }
\date{\today}
\maketitle
\thispagestyle{plain}
\pagestyle{plain}

\begin{abstract}
  We consider the communication capacity of wireline networks for a
  two-unicast traffic pattern. The network has two sources and two
  destinations with each source communicating a message to its own
  destination, subject to the capacity constraints on the directed
  edges of the network. We propose a simple outer bound for the
  problem that we call the Generalized Network Sharing (GNS) bound. We
  show this bound is the tightest edge-cut bound for two-unicast
  networks and is tight in several bottleneck cases, though it is not
  tight in general. We also show that the problem of computing the GNS
  bound is NP-complete. Finally, we show that despite its seeming
  simplicity, the two-unicast problem is as hard as the most general
  network coding problem. As a consequence, linear coding is
  insufficient to achieve capacity for general two-unicast networks,
  and non-Shannon inequalities are necessary for characterizing
  capacity of general two-unicast networks.
\end{abstract}

\section{Introduction}
\label{sec:intro}

% {\bf MAKE SURE YOU ADD ZEGER'S REFERENCE ON HOW SOLVABILITY FOR
% MULTIMESSAGE MULTICAST CAN ALWAYS BE REDUCED TO THE SOLVABILITY OF
% SOME MULTIPLE UNICAST PROBLEM.  }

The holy grail of network information theory is the characterization
of the information capacity of a general network. While there has been
some success towards this goal, a complete capacity characterization
has been open for even simple networks such as the broadcast channel,
the relay channel, and the interference channel. 

The seminal work of Ahlswede et al. \cite{ACLY00} characterized the
information capacity of a family of networks assuming a simple
\emph{network model} ((i.e. assuming a directed wireline network where
links between nodes are unidirectional, orthogonal and noise-free),
and a simple \emph{traffic pattern} (i.e. multicast, where the same
information is to be transmitted from one source node to several
destination nodes). Under this network model, the complex aspects of
real-world communication channels, such as broadcast, superposition,
interference, and noise are absent. Similarly, the traffic pattern is
simple and ensures that there is no interference from multiple
messages. \cite{ACLY00} showed that for multicast in directed wireline
networks, a simple outer bound on capacity, namely the cutset bound
\cite{ElGamal81} was achievable, completing the capacity
characterization. Later, \cite{LYC03} and \cite{KoetterMedard} showed that a
simple class of coding strategies - linear network coding over a
finite field - can achieve the multicast capacity of a wireline
network. In spite of the simplicity of this network model,
understanding the capacity region of multicast in directed wireline
networks has proved very useful in offering insight into capacity and
coding strategies for other network models, such as Gaussian networks
\cite{KannanRV11}. Subsequently, the directed wireline network
capacity characterization problem was solved for the traffic patterns
of two-level multicast \cite{KoetterMedard} (i.e. a source node
produces $k$ messages, each message to be delivered at exactly one
destination (first-level), and in addition, a collection of nodes
(second-level) requiring \emph{all} the messages) and two-receiver
multicast with private and common data \cite{ErezFeder03,
  NgaiYeung04}. In both these cases, the cutset bound was shown to be
tight.

However, it was soon discovered that the capacity characterization of
a \emph{general traffic pattern} for the \emph{simplified network
  model} of directed wireline networks was still a problem of
considerable difficulty. \cite{Zeger_insufficiency} showed that linear
codes were not sufficient to achieve capacity for the general traffic
pattern. \cite{Zeger_matroid} showed that the so-called LP bound
\cite{Yeung99}, which is a computable outer bound on the capacity,
tighter than the cutset bound, derived from all possible so-called
Shannon-type inequalities, was not tight in general. \cite{ChanGrant}
showed that if we can compute the network capacity region for the
general traffic pattern, then we can characterize all the so-called
non-Shannon information inequalities. However, the networks presented
as \emph{counterexamples} in all the above works either have more than
two sources or more than two destinations, often much more. It is a
natural question to ask whether the difficulty of the problem stems
from this. This is hardly an unusual sentiment since many different
but related problems enjoy a simplicity with 2 users that is not
shared by the corresponding problems with 3 or more users. For
instance:
\begin{itemize}
\item For two-unicast \emph{undirected} networks, the cutset bound is
  tight but this is not the case for three-unicast undirected networks
  \cite{Hu63}.
\item The capacity of two-user interference channels is known to
  within one bit \cite{ETW08} but no such result is known for
  three-user interference channels.
\item The capacity region up to unit rates for layered linear
  deterministic networks has been characterized for two-unicast
  networks \cite{WangKamathTse11} but not for three-unicast networks.
\item The degrees of freedom for layered wireless networks is known
  for two-unicast networks \cite{ShomoronyAvestimehr13} but not for
  three-unicast networks.
\item There are no known analogs with three-receivers for the
  aforementioned two-receiver multicast problem with private and
  common data \cite{ErezFeder03, NgaiYeung04}.
\end{itemize}

The central candidate for the simplest unsolved problem in capacity of
wireline networks is the two-unicast traffic pattern, i.e. the problem
of communication between two sources and two destinations, each source
with an independent message for its own destination. The only complete
capacity result in the literature dealing with the two-unicast network
capacity is \cite{Wang10} which characterizes the necessary and
sufficient condition for achieving $(1,1)$ in a two-unicast network
with all links having integer capacities. This result unfortunately
relies heavily on the assumption of integer link capacities, and hence
cannot give us necessary and sufficient conditions for achieving other
points such as $(2,2)$ or $(3,3)$ by scaling of link capacities. This
success with the $(1,1)$ rate pair in two-unicast networks stands in
strong contrast with the intractability of the general $k$-unicast
problem \cite{Zeger_insufficiency, Zeger_matroid, ChanGrant}.
Although the two-unicast capacity characterization for general rates
$(R_1,R_2)$ remains open, it is often believed that the two-unicast
problem enjoys a similar simplicity as other two-user information
theoretic problems. There are many existing results that aim to
characterize the general achievable rate region for the two-unicast
problem (not limited to the (1,1) case in \cite{Wang10}) and/or the
$k$-unicast problem with small $k.$ For example,
\cite{HuangRamamoorthy11}, \cite{HuangRamamoorthy12}, and
\cite{HuangRamamoorthy11b} study capacity of two-unicast,
three-unicast, and $k$-unicast networks respectively, from a
source-destination cut-based analysis. The authors of \cite{ZCM12}
present an edge-reduction lemma using which they compute an achievable
region for two-unicast networks. In a subsequent work \cite{ZCM13}
they show that the Generalized Network Sharing bound that we will
study in this paper gives necessary and sufficient conditions for
achievability of the $(N,1)$ rate pair in a special class of
two-unicast networks (networks with Z-connectivity and satisfying
certain richness conditions). Unfortunately, none of the above results
is able to fully characterize the capacity region for general
two-unicast networks even for the second simplest instance of the rate
pair $(1,2)$, let alone the capacity region for three-unicast or
$k$-unicast networks (for small $k$). Such a relatively frustrating
lack of progress prompts us to re-examine the problem at hand and
investigate whether the lack of new findings is due to the inherent
hardness of the two-unicast problem.

In this paper, we have contributions along two main themes. Along the
first theme, we present and investigate a new outer bound for the
two-unicast problem that is stronger than the cutset bound. This new
bound is a simple improvement over the Network Sharing outer bound of
\cite{YYZ06}, and we call it the Generalized Network Sharing (GNS)
outer bound. We observe that the GNS bound is the tightest edge-cut
bound for the two-unicast problem, and is tight in various
``bottleneck'' cases. However, we find that the GNS bound is
NP-complete to compute and is also not tight for the two-unicast
problem. Along the second theme, we show that the lack of success so
far with a complete characterization of capacity for the two-unicast
problem is due to its inherent hardness. We show that the two-unicast
problem with a general rate pair $(R_1,R_2)$ captures all the
complexity of the multiple unicast problem, i.e. solving the
two-unicast problem for general rate pairs is as hard as solving the
$k$-unicast problem for any $k\geq 3.$ Thus, the two-unicast problem
is the ``hardest network coding problem''. We show that given any
multiple-unicast rate tuple, there is a rate tuple with suitable
higher rates but \emph{fewer} sources that is more difficult to
ascertain achievability of. In particular, we show that solving the
well-studied but notoriously hard $k$-unicast problem with unit-rates
(eg. \cite{Zeger_insufficiency}, \cite{Zeger_matroid}) is no harder
than solving the two-unicast problem with rates $(k-1,k).$
Furthermore, by coupling our results with those of
\cite{Zeger_insufficiency}, \cite{Zeger_matroid}, we show the
existence of a two-unicast network for which linear codes are
insufficient to achieve capacity, and a two-unicast network for which
non-Shannon inequalities can provide tighter outer bounds on capacity
than Shannon-type inequalities alone.

We mention here that since it first appeared in \cite{KTA11}, the GNS
bound has found three distinct interpretations: an algebraic
interpretation \cite{ZCM12}, a network concatenation interpretation
\cite{ShomoronyAvestimehr14}, and a maximum acyclic subgraph bound
through a connection with index coding \cite{ShanmugamDimakis14}. The
bound has also found a number of applications,
eg. \cite{KamathKannanViswanath14}, \cite{ZCM13}, \cite{WangChen14}.

The rest of the paper is organized as follows. In
Sec.~\ref{sec:prelim}, we set up preliminaries and notation. We
propose an improved outer bound for the two-unicast problem that we
call the Generalized Network Sharing (GNS) bound, and study its properties
in Sec.~\ref{sec:GNS-bound}. We show that the two-unicast problem with
general rate pairs is as hard as the $k$-unicast problem with
$k\geq 3,$ in Sec.~\ref{sec:two-unicast-hard}.

\section{Preliminaries}
\label{sec:prelim}

A $k$-unicast network $\mathcal{N}$ consists of a directed acyclic
graph $\mathcal{G}=(\mathcal{V},\mathcal{E})$ ($\mathcal{V}$ being the
vertex set and $\mathcal{E}$ being the edge set), along with an
assignment of edge-capacities
$\underbar{C}=(C_e)_{e\in\mathcal{E}(\mathcal{G})}$ with
$C_e\in\mathbb{R}_{\geq 0} \cup \{\infty\} \ \forall
e\in\mathcal{E}(\mathcal{G}).$
It has $k$ distinguished vertices $s_1,s_2,\ldots,s_k$ called sources
(not necessarily distinct) and $k$ distinguished vertices
$t_1,t_2,\ldots, t_k$ called destinations (not necessarily distinct
from each other or from the sources), where source $s_i$ has
independent information to be communicated to destination
$t_i, i=1,2,\ldots,k.$

For edge $e=(v,v^\prime)\in\mathcal{E}(\mathcal{G})$, define
$\mathsf{tail}(e):=v$ and $\mathsf{head}(e):=v^\prime,$ the edge being
directed from the tail to the head.  For
$v\in\mathcal{V}(\mathcal{G}),$ let $\mathsf{In}(v)$ and
$\mathsf{Out}(v)$ denote the edges entering into and leaving $v$
respectively.

For $S\subseteq\mathcal{E}(\mathcal{G}),$ define $C(S):=\sum_{e\in S} C_e.$
For sets $A,B\subseteq\mathcal{V}(\mathcal{G}),$ we say
$S\subseteq\mathcal{E}(\mathcal{G})$ is an $A-B$ cut if there is no directed
path from any vertex in $A$ to any vertex in $B$ in the graph
$\mathcal{G}\setminus S.$ Define the \emph{mincut} from $A$ to $B$ by
$c(A;B):=\min\left\{C(S): S\mbox{\ is an }A-B\mbox{ cut}\right\},$
where by convention, $c(A;B) = 0,$ if there are no directed paths from
$A$ to $B$ or if either $A$ or $B$ is empty, and also $c(A;B) =
\infty$ if $A\cap B\neq \emptyset.$

\begin{definition}\label{def:achievable}
Given a $k$-unicast network $\mathcal{N}=(\mathcal{G},\underbar{C})$
for source-destination pairs $\{(s_i;t_i)\}_{i=1}^k,$ we say
that the non-negative rate tuple $(R_1,R_2,\ldots,R_k)$ is
\textit{achievable}, if for any $\epsilon>0,$ there exist positive
integers $N$ and $T$ (called block length and number of epochs
respectively), a finite alphabet $\mathcal{A}$ with $|\mathcal{A}|\geq
2$ and using notation $H_v:=\Pi_{i: v=s_i}
\mathcal{A}^{\lceil NTR_i\rceil}$ (with an empty product being the
singleton set),
  \begin{itemize}
  \item encoding functions for $1\leq t\leq T, e=(u,v)\in\mathcal{E},$ \\
    $f_{e,t}:H_u\times \Pi_{e^\prime\in\mathsf{In}(u)}\left(\mathcal{A}^{\lfloor NC_{e^\prime}\rfloor}\right)^{(t-1)}\mapsto\mathcal{A}^{\lfloor NC_e\rfloor},$
  \item decoding functions at destinations $t_i$ for $i\in\mathcal{I},$ \\
    $f_{t_i}:H_{t_i}\times \Pi_{e^\prime\in\mathsf{In}(t_i)}\left(\mathcal{A}^{\lfloor NC_{e^\prime}\rfloor}\right)^T\mapsto\mathcal{A}^{\lceil NTR_i\rceil}$
  \end{itemize}

  \noindent  with the property that under the uniform probability distribution on  $\Pi_{i\in\mathcal{I}} \mathcal{A}^{\lceil NTR_i\rceil},$
\begin{align}
&\mathsf{Pr}\left(g(m_1, m_2, \ldots,m_k)\neq (m_1, m_2, \ldots,m_k)\right)\leq \epsilon,
\end{align}
\noindent where $g:\Pi_{i\in\mathcal{I}} \mathcal{A}^{\lceil
  NTR_i\rceil}\mapsto \Pi_{i\in\mathcal{I}} \mathcal{A}^{\lceil
  NTR_i\rceil}$ is the global decoding function induced inductively by
$\{f_{e,t}:e\in\mathcal{E}(\mathcal{G}), 1\leq t\leq T\}$ and
$\{f_{t_i}: i=1,2,\ldots,k\}.$ 
\end{definition}

The \emph{Shannon capacity} (also
simply called \emph{capacity}) of a $k$-unicast network $\mathcal{N},$
denoted
$\mathcal{C}(\mathcal{N})=\mathcal{C}(\mathcal{G},\underbar{C}),$ is
defined as the closure of the set of achievable rate tuples. The
closure of the set of achievable rate tuples over choice of
$\mathcal{A}$ as any finite field and all functions being linear
operations on vector spaces over the finite field, is called the
\emph{vector linear coding Shannon capacity} or simply \emph{linear
  coding capacity}. If we further have $N=1,$ then the convex closure
of achievable rate tuples is called the \emph{scalar linear coding
  capacity}.

The \emph{zero-error capacity}, \emph{zero-error linear coding
  capacity} are similarly defined as the closure of the set of
achievable rate tuples with zero error using general and vector linear
codes respectively. Finally, the \emph{zero-error exactly achievable
  region} and \emph{zero-error exactly achievable linear coding
  region} are similarly defined as the set of achievable rate tuples
with zero error (no closure taken), using general and vector linear
codes respectively. This notion of exact achievability when studied
with unit rates, is also called \emph{solvability} in the literature
\cite{Zeger_nonreversibility}.

Table~\ref{tb:capacity} summarizes these definitions.
{\large
  \begin{table*}[ht] % table* gives us a table that spans both
                     % columns, just like figure* gives us a figure
                     % that spans both columns
    {\begin{center}
      \begin{tabularx}{\textwidth}{Z|WAV} % V here is a newly
                                % defined column type using the array
                                % package        
        %\hline
        & & \\
        & \begin{center}Vector Linear codes\end{center} & & \begin{center}General codes\end{center}
        \\
        & & \\
        \hline 
        & & \\ 
        \begin{center}Zero-error Exact Achievability\end{center} 

      & \begin{center}\begin{framed}Set of linearly achievable rate
        tuples with zero error\end{framed}\end{center} 

        & \begin{center}{\LARGE $\subsetneq$}

          ($\neq$ from \cite{Zeger_insufficiency})
          \end{center}
                
        & \begin{center}\begin{framed}Set of achievable rate
            tuples with zero error\end{framed} \end{center} 

        \\
        &  {\LARGE $\subsetdown$} 

        ($=, \neq$ unknown)
        & & {\LARGE $\subsetneqdown$} 

        ($\neq$ from \cite{Zeger_unachievability}) \\
        % \hline 
        & & \\
        \begin{center} Zero-error Capacity \end{center} 

        & \begin{center}\begin{framed}Closure of set of linearly
            achievable rate tuples with zero
            error\end{framed}\end{center}

    &\begin{center}{\LARGE $\subsetneq$}

          ($\neq$ from \cite{Zeger_insufficiency})
        \end{center}

      % \begin{center}$\subseteqdown$ ($=$ or $\neq$
      %   unknown)\end{center}

        & \begin{center}\begin{framed}Closure of set of achievable rate
        tuples with zero error\end{framed}\end{center} 
      
      \\
        &  {\LARGE $\eqdown$} 

        ($=$ from simple argument; see Remark~\ref{rem:linear-codes}
        at end of Appendix~\ref{subsec:two-unicast-hard})
        & & {\LARGE $\subsetdown$} 

        ($=,\neq$ unknown \cite{LangbergEffros11}) \\
      % \hline
      & & \\
        \begin{center}Shannon Capacity\end{center} 

      & \begin{center}\begin{framed}Closure of set of linearly
        achievable rate tuples with vanishing error\end{framed}\end{center}

    &\begin{center}{\LARGE $\subsetneq$}

          ($\neq$ from \cite{Zeger_insufficiency})
        \end{center}
    
      % \begin{center}$\subseteqdown$ ($=$ or $\neq$
      %   unknown)\end{center}
       & \begin{center}\begin{framed}Closure of set of achievable rate
        tuples with vanishing error\end{framed}\end{center} 

        \\
        & & \\
        %\hline

      \end{tabularx}
    \end{center}
  }
   \caption{Different notions of capacity}
    \label{tb:capacity}
  \end{table*}
}

% There are a few notions of capacity of the network. Some of these
% notions are known to be not equivalent, whereas it is an open problem
% whether some others are equivalent. We will mostly deal with the
% Shannon capacity but we will briefly also consider other notions of
% capacity in Section {\color{blue}(TWO-UNICAST HARD!)}.

\section{Generalized Network Sharing Outer Bound}
\label{sec:GNS-bound}

We introduce the Generalized Network Sharing (GNS) bound as a simple
outer bound on the Shannon capacity of a two-unicast network that is
tighter than the cutset bound \cite{ElGamal81}.

We say a set of edges $S\subseteq\mathcal{E}(\mathcal{G})$ form a
\textit{Generalized Network Sharing cut (GNS-cut)} if
\begin{itemize}
\item $\mathcal{G}\setminus S$ has no paths from $s_1$ to $t_1,$ $s_2$ to
  $t_2$ and $s_2$ to $t_1$ OR
\item $\mathcal{G}\setminus S$ has no paths from $s_1$ to $t_1,$ $s_2$ to
  $t_2$ and $s_1$ to $t_2.$
\end{itemize}

We adopt the natural convention that if $u$ and $v$ are the same node,
then, no cut separates $u$ from $v.$ Thus, if $s_2 = t_1,$ then a set
of edges $S$ forms a GNS-cut only if it removes all paths from $s_1$
to $t_1,$ $s_2$ to $t_2$ and $s_1$ to $t_2.$

% Alternatively, $S$ is a GNS-cut if $S$ is an $s_1,s_2-t_i$ cut and
% an $s_j-t_j$ cut for $(i,j)=(1,2)$ or $(2,1).$

\begin{theorem}\label{thm:gns-bound} (GNS outer bound)
  For a two-unicast network $\mathcal{N}=(\mathcal{G},\underbar{C})$ and a GNS
  cut $S\subseteq\mathcal{E}(\mathcal{G}),$ we have $R_1+R_2\leq C(S)\
  \forall\ (R_1,R_2)\in\mathcal{C}(\mathcal{N}).$
\end{theorem}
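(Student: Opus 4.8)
The plan is to use a standard cut-based information-theoretic argument, working with a fixed $(N,T)$-code that achieves a rate pair $(R_1,R_2)$ with error probability at most $\epsilon$, and then letting $\epsilon\to 0$ and $N,T\to\infty$. Fix a GNS-cut $S$, and without loss of generality assume we are in the first case: $\mathcal{G}\setminus S$ has no path from $s_1$ to $t_1$, from $s_2$ to $t_2$, or from $s_2$ to $t_1$. Let $M_1,M_2$ be the (independent, uniform) messages, and let $Y_S$ denote the tuple of all symbols carried on the edges of $S$ across all $T$ epochs. The key combinatorial fact I would extract first is this: because $S$ is a cut separating $\{s_1,s_2\}$ from $t_1$ in the usual sense (no $s_1$–$t_1$ path and no $s_2$–$t_1$ path in $\mathcal{G}\setminus S$), everything entering $t_1$ is a deterministic function of $Y_S$ together with the information locally generated at $t_1$; but $t_1$ generates nothing unless $t_1$ is itself a source, and even then that source's message is $M_1$ only, which is irrelevant since $t_1$ must decode $M_1$ anyway. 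More carefully: the received symbols at $t_1$ are a function of $(M_1, Y_S)$, hence $M_1$ is (approximately) a function of $(M_1,Y_S)$ — tautological — so instead I use that $S$ also blocks $s_2$ from $t_1$, which gives that the symbols at $t_1$ are a function of $Y_S$ alone together with quantities derivable from $M_1$.

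The cleaner route, which I would actually follow, is the classical one: since $\mathcal{G}$ is acyclic, order edges topologically; the symbols on $S$ together with $M_1$ determine (by induction along the topological order, using that no path from $s_2$ avoids $S$ before reaching $t_1$) all symbols received at $t_1$. Hence by Fano's inequality,
\begin{align}
NTR_1 &= H(M_1) = H(M_1\mid Y_S, M_1?) \ \text{— rewrite via} \ I(M_1;\text{received at }t_1)\nonumber
\end{align}
— let me state it correctly without the slip: $H(M_1 \mid Y_S) \le H(M_1 \mid \text{rec}(t_1)) \le NT\epsilon_N$ for some $\epsilon_N\to 0$, because $\text{rec}(t_1)$ is a function of $(M_1,Y_S)$ and $t_1$ decodes $M_1$, so conditioning on $Y_S$ and $M_1$ is at least as informative; more precisely $H(M_1\mid Y_S,M_1)=0$ is useless, so I instead argue $H(M_1\mid Y_S)\le I(M_1;\text{rec}(t_1)\mid ?)$... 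The correct standard statement is: $\text{rec}(t_1)$ is determined by $(M_1,Y_S)$, and since $S$ blocks all $s_2$–$t_1$ paths, in fact the \emph{dependence on} $M_1$ only enters through $M_1$'s own injected copies, so conditioned on $M_1$, $\text{rec}(t_1)$ is a function of $Y_S$; combined with correct decoding, Fano gives $H(M_1\mid Y_S, M_1)$ — again trivial. I will therefore present it as: $H(M_1) \le I(M_1; \text{rec}(t_1)) + NT\epsilon_N \le I(M_1; Y_S) + NT\epsilon_N$, the last step because $\text{rec}(t_1)$ is a function of $(M_1,Y_S)$ and a standard data-processing/chain-rule manipulation using $H(\text{rec}(t_1)\mid M_1, Y_S)=0$ yields $I(M_1;\text{rec}(t_1))\le I(M_1;Y_S)$. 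Symmetrically, using that $S$ blocks $s_1$–$t_1$... no: here I instead use the \emph{second} destination's decodability together with the fact that $S$ blocks $s_2$–$t_2$, but I must be careful because $S$ need not block $s_1$–$t_2$. The trick is: $H(M_2) \le I(M_2;\text{rec}(t_2)) + NT\epsilon_N$, and $\text{rec}(t_2)$ is a function of $(M_1,M_2,Y_S)$, so $I(M_2;\text{rec}(t_2)) \le I(M_2; M_1, Y_S) = I(M_2; Y_S \mid M_1)$ since $M_1\perp M_2$.

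Putting the two bounds together and using independence of $M_1,M_2$:
\begin{align}
NT(R_1+R_2) &\le I(M_1;Y_S) + I(M_2;Y_S\mid M_1) + 2NT\epsilon_N\nonumber\\
&= I(M_1,M_2;Y_S) + 2NT\epsilon_N \le H(Y_S) + 2NT\epsilon_N.\nonumber
\end{align}
Since $Y_S$ consists of the symbols on the edges of $S$ over $T$ epochs, and each edge $e$ carries an alphabet of size $|\mathcal{A}|^{\lfloor NC_e\rfloor}$ per epoch, we get $H(Y_S) \le T\sum_{e\in S}\lfloor NC_e\rfloor \log|\mathcal{A}| \le NT\log|\mathcal{A}|\, C(S)$. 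Dividing by $NT\log|\mathcal{A}|$ (the rate normalization implicit in Definition~\ref{def:achievable}) and letting $\epsilon\to 0$, then $N\to\infty$, yields $R_1+R_2\le C(S)$. The second case of the GNS-cut definition (blocking $s_1$–$t_2$ instead of $s_2$–$t_1$) follows by the symmetric argument with the roles of the two pairs swapped. I expect the main obstacle to be stating precisely the "functional dependence" step — that $\text{rec}(t_1)$ depends on $(M_1,Y_S)$ only and $\text{rec}(t_2)$ on $(M_1,M_2,Y_S)$ only — which requires a careful induction along the topological order of $\mathcal{G}$, checking that any edge feeding into $t_1$ whose value is not already a function of $Y_S$ must lie on an $s_i$–$t_1$ path in $\mathcal{G}\setminus S$, and that $s_2$–$t_1$ paths are excluded by the GNS-cut property; the remaining information-theoretic manipulations are routine.
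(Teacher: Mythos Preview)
Your approach is the same as the paper's: Fano plus the chain rule to get $NT(R_1+R_2)\le I(M_1,M_2;Y_S)+o(NT)\le H(Y_S)+o(NT)\le NT\,C(S)+o(NT)$. The paper writes it slightly more directly by applying Fano to $H(M_1\mid Y_S)$ and $H(M_2\mid M_1,Y_S)$ rather than going through $I(M_i;\text{rec}(t_i))$, but the content is identical.

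One point to tighten: your data-processing steps as literally stated do not follow from the premises you cite. The inequality $I(M_1;\text{rec}(t_1))\le I(M_1;Y_S)$ is \emph{false} in general if you only know that $\text{rec}(t_1)$ is a function of $(M_1,Y_S)$ (take $\text{rec}(t_1)=M_1$ with $Y_S\perp M_1$). What you need, and what actually holds, is that $\text{rec}(t_1)$ is a function of $Y_S$ \emph{alone}: this is exactly what ``$S$ blocks both $s_1\!\to\! t_1$ and $s_2\!\to\! t_1$'' gives you, and you said so correctly at the start before weakening it in the ``more carefully'' rewrite. Likewise, $I(M_2;\text{rec}(t_2))\le I(M_2;M_1,Y_S)$ requires $\text{rec}(t_2)$ to be a function of $(M_1,Y_S)$, not merely of $(M_1,M_2,Y_S)$; this holds because $S$ blocks $s_2\!\to\! t_2$. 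Once you state the functional dependencies with the right arguments, both inequalities are immediate data processing and the rest of your chain is fine.
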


\begin{remark}
  Our name for the Generalized Network Sharing bound is derived from
  an earlier bound in the literature called the Network Sharing bound
  \cite{YYZ06} which may be described as follows: Fix $(i,j)=(1,2)$ or
  $(2,1).$ For a two-unicast network
  $\mathcal{N}=(\mathcal{G},\underbar{C}),$ if
  $T\subseteq\mathcal{E}(\mathcal{G})$ is an $\{s_1,s_2\}-\{t_1,t_2\}$
  cut and if $S\subseteq T$ is such that for each edge
  $e\in T\setminus S,$ we have that $\mathsf{tail}(e)$ is reachable
  from $s_i$ but not from $s_j$ in $\mathcal{G}$(*) and
  $\mathsf{head}(e)$ can reach $t_j$ but not $t_i$ in $\mathcal{G},$
  then we have
  $R_1+R_2\leq C(S)\ \forall (R_1,R_2)\in\mathcal{C}(\mathcal{N}).$ If
  the (*) were replaced by $\mathcal{G}\setminus S,$ then we would get
  the GNS bound. Thus, the improvement in the bound is very subtle but
  important.

  The proof of the GNS bound relies on the same idea that was used to
  prove the Network Sharing bound. The GNS bound subsumes the Network
  Sharing bound and it can be strictly tighter, as shown by the grail
  network in Fig.~\ref{fig:GNS_better_than_NS}. In
  Theorem~\ref{thm:gns-tightest}, we will show that the GNS bound is
  fundamental to the two-unicast problem: in the class of edge-cut
  bounds, it is the tightest and cannot be further improved upon.

  \begin{figure}[htbp]
    \begin{center}
      \subfigure[\emph{Grail} network with general edge
      capacities]{\includegraphics[width=1.5in,
        height=!]{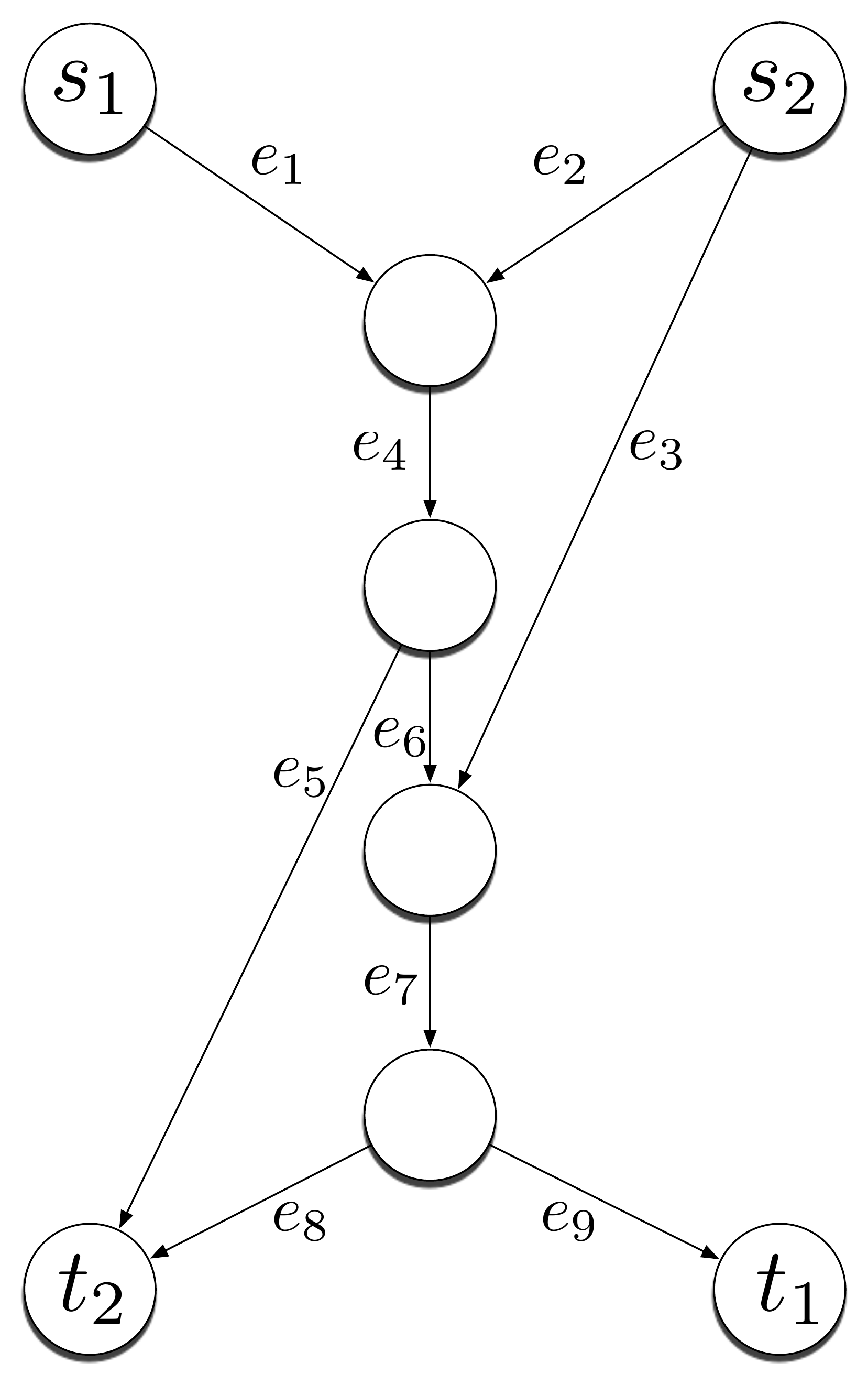}}\hspace{25pt}
      \subfigure[Grail with variable edge capacities. $e_2, e_7$ have
      unit capacity and all other edges have capacity 2
      units.]{\includegraphics[width=1.5in,
        height=!]{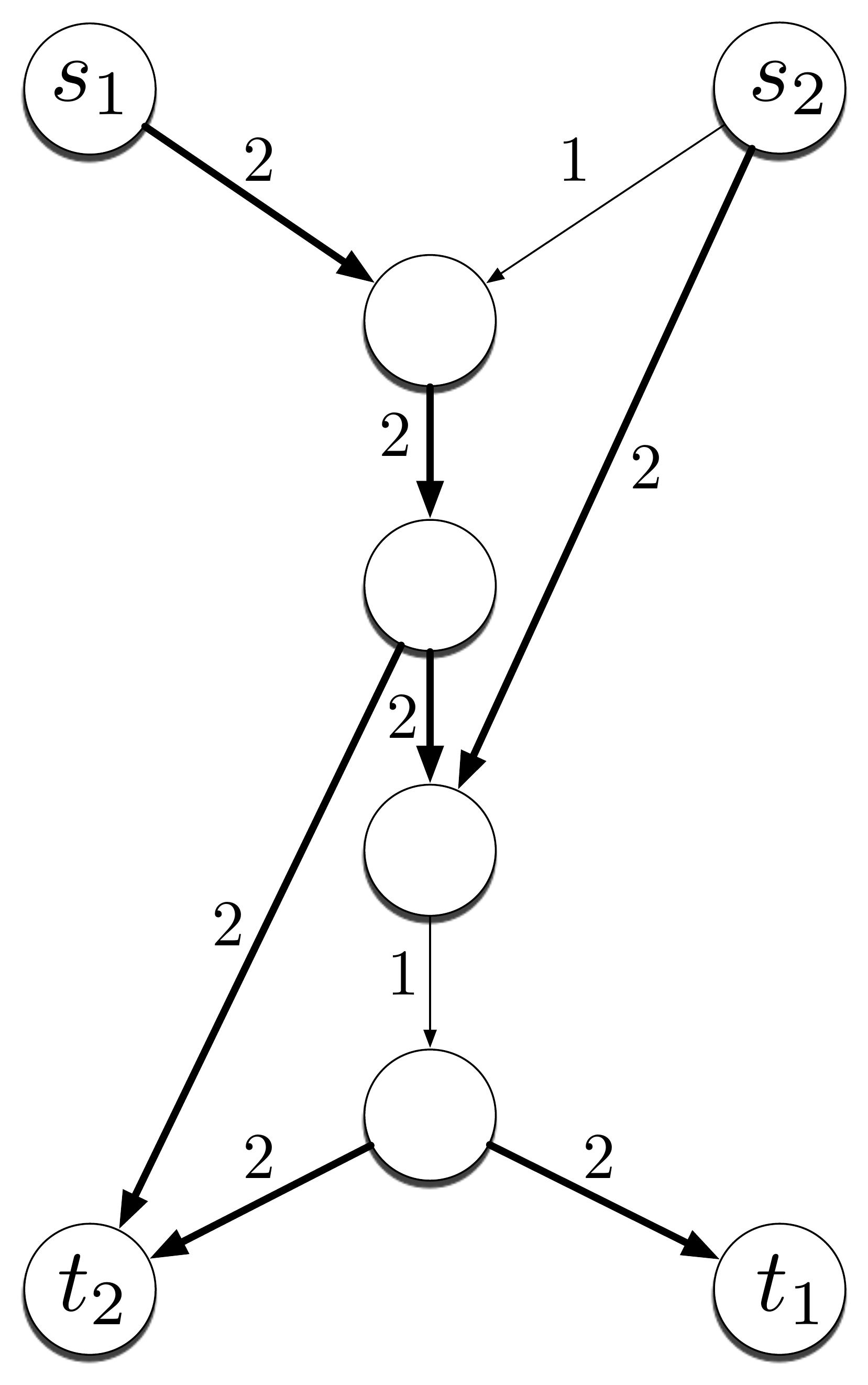}}\hspace{25pt}
      \subfigure[Network Sharing outer bound and GNS outer bound for
      the network in (b)]{\includegraphics[width=2.5in,
        height=!]{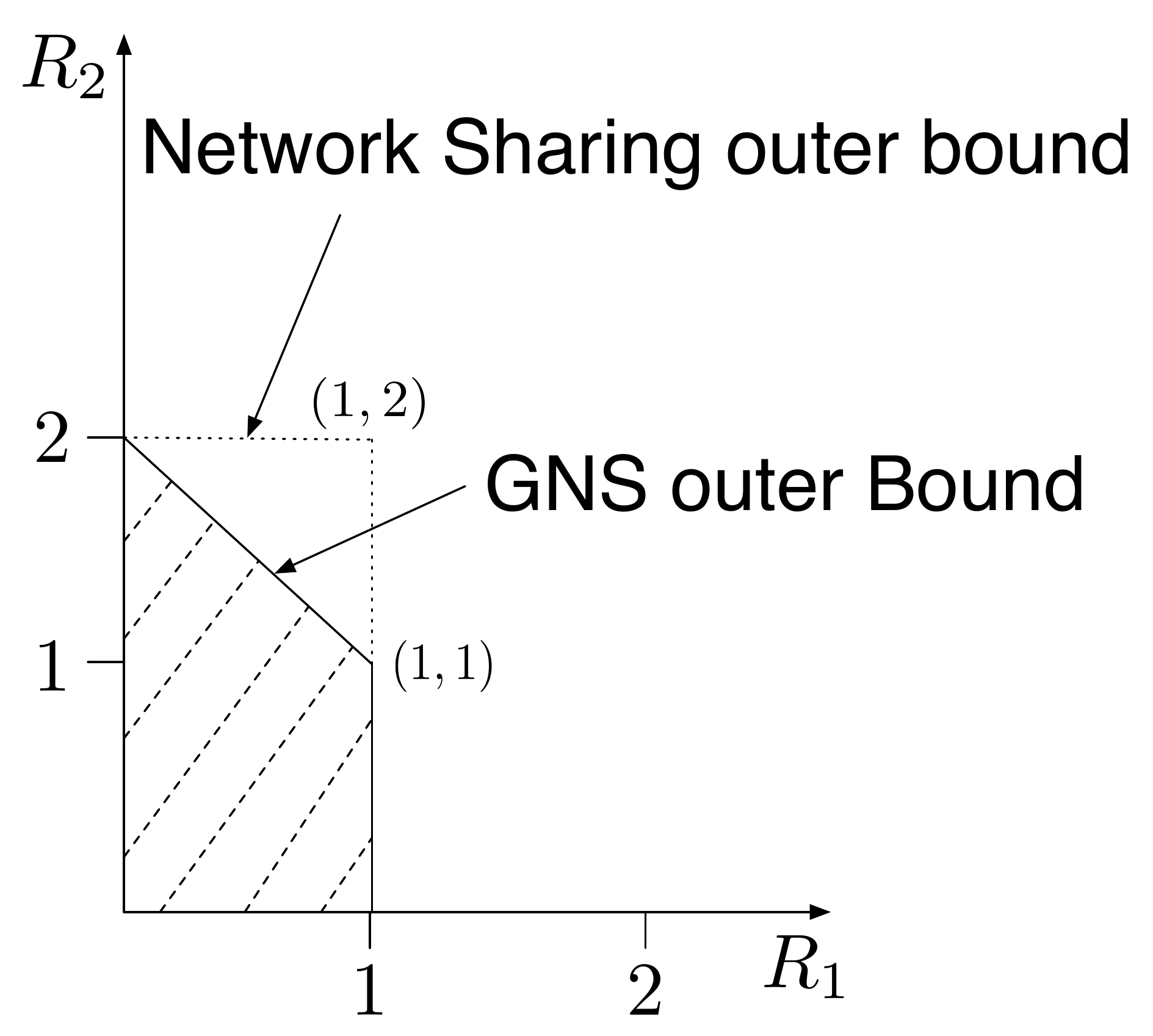}}
      \caption{The GNS outer bound can be strictly better than the
        Network Sharing outer bound \cite{YYZ06}. $\{e_2,e_7\}$ is a
        GNS-cut.}
      \label{fig:GNS_better_than_NS}
    \end{center}
  \end{figure}

\end{remark}
\begin{proof}
  Suppose for a set of edges $S\subseteq\mathcal{E}(\mathcal{G}),$
  $\mathcal{G}\setminus S$ has no paths from $s_1,s_2$ to $t_1$ and no
  paths from $s_2$ to $t_2.$ Fixing $0<\epsilon<\frac 12,$ consider a
  scheme of block length $N,$ achieving the rate pair $(R_1,R_2)$ over
  alphabet $\mathcal{A}$ with error probability at most
  $\epsilon$. Let $W_1,W_2$ be independent and distributed uniformly
  over the sets $\mathcal{A}^{\lceil NR_1\rceil}$ and
  $\mathcal{A}^{\lceil NR_2\rceil}$ respectively. For each edge $e,$
  define $X_e$ as the concatenated evaluation of the functions
  specified by the scheme for edge $e.$ For
  $S\subseteq\mathcal{E}(\mathcal{G}),$ let $X_S:=(X_e)_{e\in S}.$ To
  simplify calculations, we will assume all logarithms are to base
  $|\mathcal{A}|.$

  As $\mathcal{G}\setminus S$ has no paths from $s_1$ or $s_2$ to
  $t_1,$ it follows that the message $W_1$ can be recovered
  successfully from $X_S$ with probability at least $1-\epsilon.$
  Then, by Fano's inequality,
  \begin{align}
    H(W_1|X_S) & \leq h(\epsilon)+\epsilon\lceil NR_1\rceil.
  \end{align}
  Similarly, since $\mathcal{G}\setminus S$ has no paths from $s_2$ to
  $t_2,$ it follows that the message $W_2$ can be recovered
  successfully from $X_S$ and $W_1$ with probability at least
  $1-\epsilon.$ Fano's inequality gives
  \begin{align}
    H(W_2|W_1,X_S) & \leq h(\epsilon)+\epsilon\lceil NR_2\rceil.
  \end{align}
  This gives
  \begin{align}
    N(R_1+R_2)& \leq H(W_1,W_2) \\
    & = I(W_1,W_2;X_S) + H(W_1,W_2|X_S) \\
    & = I(W_1,W_2;X_S) + H(W_1|X_S) + H(W_2|W_1,X_S) \\
    & \leq H(X_S) + 2h(\epsilon) + \epsilon (N(R_1+R_2)+2) \\
    & \leq NC(S) + 2(\epsilon+h(\epsilon)) +
    N\epsilon(R_1+R_2) \\
    R_1+R_2 & \leq C(S) + \frac{2(\epsilon+h(\epsilon))}{N} +
    \epsilon(R_1+R_2) \\
    & \leq C(S) + 2(\epsilon+h(\epsilon)) +
    \epsilon(R_1+R_2).\label{eq:last}
  \end{align}
  Since $\epsilon$ can be made arbitrarily small by a suitable coding
  scheme with a suitable block length, from \eqref{eq:last}, we must
  have $R_1+R_2\leq C(S).$ As this inequality holds for every
  vanishing error achievable rate pair $(R_1,R_2),$ it also holds for
  every point in the closure of the set of vanishing error achievable
  rate pairs.
\end{proof}

For a given two-unicast network
$\mathcal{N}=(\mathcal{G},\underbar{C}),$ let the \textit{GNS sum-rate
  bound} $c_\mathsf{gns}(s_1,s_2;t_1,t_2)$ be defined as
$c_\mathsf{gns}(s_1,s_2;t_1,t_2)
:=\min\{C(S):S\subseteq\mathcal{E}(\mathcal{G})\mbox{ is a
  GNS-cut}\}.$ The \textit{GNS outer bound} is defined as the region
$\{(R_1,R_2): R_1\leq c(s_1;t_1), R_2\leq c(s_2;t_2), R_1+R_2\leq
c_\mathsf{gns}(s_1,s_2;t_1,t_2)\}.$ Note that the GNS sum-rate bound
is a number while the GNS outer bound is a region.

Before moving to properties of the GNS bound, we briefly remark that
the GNS bound may be extended to multiple unicast networks as stated
in Theorem~\ref{thm:gns-bound-multiple-unicast} below. The proof is
very similar to that of Theorem~\ref{thm:gns-bound} and is omitted.
\begin{theorem}\label{thm:gns-bound-multiple-unicast}
  Consider a $k$-unicast network $\mathcal{N}=(\mathcal{G},\underbar{C}).$
  For non-empty $I\subseteq \{1,2,\ldots,k\}$ and
  $S\subseteq\mathcal{E}(\mathcal{G}),$ suppose there exists a bijection
  $\pi:I\mapsto \{1,2,\ldots,|I|\}$ such that $\forall\ i,j\in I,$
  $\mathcal{G}\setminus S$ has no paths from source $s_i$ to destination
  $t_j$ whenever $\pi(i)\geq \pi(j). $ Then,
$$\sum_{i\in I} R_i\leq C(S)\ \forall (R_1,R_2,\ldots,R_n)\in\mathcal{C}(\mathcal{N}).$$
\end{theorem}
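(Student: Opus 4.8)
The plan is to mimic the proof of Theorem~\ref{thm:gns-bound} almost verbatim, generalizing the two-message Fano argument to $|I|$ messages using the ordering $\pi$. First I would fix $0<\epsilon<\tfrac12$, take a block-length-$N$ scheme over alphabet $\mathcal{A}$ achieving the rate tuple $(R_1,\ldots,R_k)$ with error probability at most $\epsilon$, let $W_1,\ldots,W_k$ be the independent uniform messages, and for each edge $e$ let $X_e$ be the concatenated evaluation of the scheme's functions on $e$, with $X_S:=(X_e)_{e\in S}$; all logarithms are to base $|\mathcal{A}|$. Relabel the indices in $I$ via $\pi$ so that, writing $I=\{i_1,\ldots,i_{|I|}\}$ with $\pi(i_\ell)=\ell$, the hypothesis says $\mathcal{G}\setminus S$ has no path from $s_{i_a}$ to $t_{i_b}$ whenever $a\ge b$.

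The key step is a chain-rule decomposition. I would show by induction on $b$ that $W_{i_b}$ can be recovered from $\big(X_S, W_{i_1},\ldots,W_{i_{b-1}}\big)$ with probability at least $1-\epsilon$: indeed, since for every $a\ge b$ there is no $s_{i_a}$--$t_{i_b}$ path in $\mathcal{G}\setminus S$, every message whose source can still reach $t_{i_b}$ lies in $\{i_1,\ldots,i_{b-1}\}$, so $X_S$ together with those earlier messages determines all edge symbols feeding $t_{i_b}$, and hence (via the decoder at $t_{i_b}$, up to the error event) determines $W_{i_b}$. Fano's inequality then gives $H(W_{i_b}\mid X_S, W_{i_1},\ldots,W_{i_{b-1}})\le h(\epsilon)+\epsilon\lceil NR_{i_b}\rceil$ for each $b$. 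Summing the chain rule,
\begin{align}
N\sum_{i\in I} R_i &\le H\big(W_{i_1},\ldots,W_{i_{|I|}}\big)\\
&= I\big(W_{i_1},\ldots,W_{i_{|I|}};X_S\big) + \sum_{b=1}^{|I|} H\big(W_{i_b}\mid X_S,W_{i_1},\ldots,W_{i_{b-1}}\big)\\
&\le H(X_S) + |I|\,h(\epsilon) + \epsilon\Big(N\sum_{i\in I}R_i + |I|\Big)\\
&\le NC(S) + |I|\big(h(\epsilon)+\epsilon\big) + N\epsilon\sum_{i\in I} R_i,
\end{align}
so that $\sum_{i\in I}R_i \le C(S) + \tfrac{|I|(h(\epsilon)+\epsilon)}{N} + \epsilon\sum_{i\in I}R_i$. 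Letting $\epsilon\to 0$ (along schemes with suitable block length) yields $\sum_{i\in I}R_i\le C(S)$ for every vanishing-error achievable tuple, hence for every point of the closure $\mathcal{C}(\mathcal{N})$.

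The only real obstacle is the inductive recoverability claim, i.e.\ making precise that $X_S$ plus the earlier messages functionally determine $X_e$ for every edge $e$ that can influence the input to $t_{i_b}$; this is the exact analogue of the step in Theorem~\ref{thm:gns-bound} where $\mathcal{G}\setminus S$ having no $s_1,s_2$--$t_1$ path lets one recover $W_1$ from $X_S$, now iterated and with the ordering $\pi$ controlling which messages must be conditioned on. Everything else is the same Fano-plus-chain-rule bookkeeping as in the two-unicast case, which is why the paper is content to omit it.
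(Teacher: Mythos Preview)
Your approach is exactly what the paper intends (the paper simply says the proof is ``very similar'' to that of Theorem~\ref{thm:gns-bound} and omits it), and the Fano-plus-chain-rule bookkeeping is correct. There is, however, one genuine slip that matters when $I\subsetneq\{1,\ldots,k\}$.

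You assert that ``every message whose source can still reach $t_{i_b}$ lies in $\{i_1,\ldots,i_{b-1}\}$.'' The hypothesis of the theorem only forbids paths $s_{i_a}\to t_{i_b}$ for $a\ge b$, i.e.\ for indices \emph{inside} $I$; sources $s_j$ with $j\notin I$ may well reach $t_{i_b}$ in $\mathcal{G}\setminus S$. Consequently $W_{i_b}$ is in general \emph{not} recoverable from $(X_S,W_{i_1},\ldots,W_{i_{b-1}})$ alone, and the Fano bound on $H(W_{i_b}\mid X_S,W_{i_1},\ldots,W_{i_{b-1}})$ need not hold.

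The fix is immediate: write $W_{I^c}:=(W_j)_{j\notin I}$ and use independence to start from $H(W_I)=H(W_I\mid W_{I^c})$. Then
\[
N\sum_{i\in I}R_i \le H(W_I\mid W_{I^c}) = I(W_I;X_S\mid W_{I^c}) + \sum_{b=1}^{|I|} H\bigl(W_{i_b}\,\big|\,X_S,W_{I^c},W_{i_1},\ldots,W_{i_{b-1}}\bigr),
\]
and now the recoverability claim is correct: the only sources that can reach $t_{i_b}$ in $\mathcal{G}\setminus S$ lie in $\{i_1,\ldots,i_{b-1}\}\cup I^c$, all of whose messages are in the conditioning. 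Since $I(W_I;X_S\mid W_{I^c})\le H(X_S)\le NC(S)$, the rest of your computation goes through unchanged. With this one-line amendment the proof is complete and matches the paper's intended argument.
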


\subsection{GNS bound is the tightest edge-cut outer bound for
  two-unicast}\label{subsec:GNS-tightest-edge-cut}
For an uncapacitated two-unicast network $\mathcal{G}$ (i.e. a
two-unicast graph), an inequality of the form $\alpha_1 R_1 + \alpha_2
R_2 \leq C(S),$ with $\alpha_1, \alpha_2\in\{0,1\},
S\subseteq\mathcal{E}(\mathcal{G})$ is called an \emph{edge-cut bound}
if the inequality holds for all
$(R_1,R_2)\in\mathcal{C}(\mathcal{G},\underbar{C}),$ for each choice
of edge capacities $\underbar{C}.$ The cutset outer bound
\cite{ElGamal81}, the Network Sharing bound \cite{YYZ06} and the GNS
bound are all collections of edge-cut bounds.

\begin{theorem} \label{thm:gns-tightest} Let $\mathcal{G}$ be an
  uncapacitated two-unicast network, and let
  $S\subseteq\mathcal{E}(\mathcal{G})$ be such that $R_1+R_2\leq C(S)$
  is an edge-cut bound, i.e. $R_1+R_2\leq C(S)$ holds for all
  $(R_1,R_2)\in\mathcal{C}(\mathcal{G},\underbar{C})$ for all choices
  of $\underbar{C}.$ Then, exactly one of the following is true:
  \begin{itemize}
  \item $S$ is a GNS-cut
  \item $S$ is not a GNS-cut but $c(s_1;t_1)+c(s_2;t_2)\leq C(S)$ for
    all choices of $\underbar{C}.$
  \end{itemize}
\end{theorem}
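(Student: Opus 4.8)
The plan is to show that if $S$ is \emph{not} a GNS-cut, then necessarily $c(s_1;t_1)+c(s_2;t_2)\leq C(S)$ for all edge-capacity assignments $\underbar{C}$, since the two bullets are mutually exclusive (by Theorem~\ref{thm:gns-bound}, a GNS-cut always gives the bound $R_1+R_2\le C(S)$, and one constructs easy capacity assignments showing a non-GNS-cut need not). So assume $S$ is not a GNS-cut. By definition, this means that in $\mathcal{G}\setminus S$ there is a surviving $s_1$-$t_1$ path, a surviving $s_2$-$t_2$ path, and a surviving ``cross'' path --- but we must be careful: ``not a GNS-cut'' means the first bullet fails \emph{and} the second bullet fails, i.e. $\mathcal{G}\setminus S$ contains a path realizing $\{s_1\to t_1,\ s_2\to t_2,\ s_2\to t_1\}$ jointly-not-all-cut, and likewise for $\{s_1\to t_1,\ s_2\to t_2,\ s_1\to t_2\}$. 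First I would unpack this into the clean combinatorial statement: $S$ is not a GNS-cut iff \emph{either} $\mathcal{G}\setminus S$ has an $s_1$-$t_1$ path and an $s_2$-$t_2$ path, \emph{or} ($\mathcal{G}\setminus S$ has an $s_1$-$t_1$ path, an $s_2$-$t_1$ path missing---wait) --- the precise extraction is that $S$ fails to be a GNS-cut exactly when $\mathcal{G}\setminus S$ retains an $s_1$-$t_1$ path \emph{and} an $s_2$-$t_2$ path \emph{and} (an $s_2$-$t_1$ path \emph{or} an $s_1$-$t_2$ path survives in each of the two ``OR'' branches appropriately). The cleanest route: $S$ is not a GNS-cut $\iff$ $\mathcal{G}\setminus S$ has both an $s_1\!-\!t_1$ path $P_1$ and an $s_2\!-\!t_2$ path $P_2$, and additionally for \emph{each} branch the corresponding cross-path survives, which one shows reduces to: there exist in $\mathcal{G}\setminus S$ paths $P_1: s_1\to t_1$ and $P_2: s_2\to t_2$, together with at least one surviving cross path.

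Given such surviving paths, the heart of the argument is: take \emph{any} edge-capacity assignment $\underbar{C}$, and show $c(s_1;t_1)+c(s_2;t_2)\le C(S)$. The natural approach is an exchange/flow argument: let $F_1$ be a max-flow of value $c(s_1;t_1)$ from $s_1$ to $t_1$ and $F_2$ a max-flow of value $c(s_2;t_2)$ from $s_2$ to $t_2$. I want to exhibit from $F_1+F_2$ a flow from $\{s_1,s_2\}$ to $\{t_1,t_2\}$ of value $c(s_1;t_1)+c(s_2;t_2)$ that is ``routed through $S$'' in the sense that every flow path crosses $S$; then $C(S)$ bounds it. The key topological input is that $S$, while not a GNS-cut, \emph{is} an $\{s_1,s_2\}$-$\{t_1,t_2\}$ cut whenever $R_1+R_2\le C(S)$ is to be a valid edge-cut bound --- indeed if $S$ were not even an ordinary cut, then taking all capacities in $S$ equal to $0$ and routing freely off $S$ would violate $R_1+R_2\le C(S)$. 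So from the hypothesis ``$R_1+R_2\le C(S)$ is an edge-cut bound'' I first deduce $S$ is an $\{s_1,s_2\}$-$\{t_1,t_2\}$ cut. Combined with $S$ not being a GNS-cut, the surviving paths $P_1,P_2$ must, when concatenated with the surviving cross path, produce a path that goes $s_i \to t_j$ in $\mathcal{G}\setminus S$ but such a path cannot connect $\{s_1,s_2\}$ to $\{t_1,t_2\}$ --- contradiction unless the cross structure is ``internal'', i.e. the only way both non-GNS conditions and the cut condition coexist is the second-bullet scenario forces a flow decomposition.

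The concrete mechanism I expect to use: because $S$ is an $\{s_1,s_2\}$-$\{t_1,t_2\}$ cut, it induces a partition $\mathcal{V}=A\sqcup B$ with $s_1,s_2\in A$, $t_1,t_2\in B$ (after possibly enlarging to the set reachable from $\{s_1,s_2\}$ in $\mathcal{G}\setminus S$) and all edges from $A$ to $B$ lie in $S$. Then $C(S)\ge$ (capacity of the $A$-$B$ edge cut) $\ge c(\{s_1,s_2\};\{t_1,t_2\})$. It remains to show the non-GNS condition forces $c(\{s_1,s_2\};\{t_1,t_2\}) = c(s_1;t_1)+c(s_2;t_2)$ for every $\underbar{C}$, equivalently that there is no ``crossing-inducing'' bottleneck --- which is exactly what the failure of \emph{both} GNS branches buys: a surviving $s_2$-$t_1$ path off $S$ would let us augment, so its \emph{absence} modulo $S$... here I need to invert the logic carefully. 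Actually the right statement is the contrapositive of GNS: $S$ being a GNS-cut is the obstruction; its failure means we can always find, for any $\underbar{C}$, matching routings, so $c(\{s_1,s_2\};\{t_1,t_2\}) = c(s_1;t_1) + c(s_2;t_2)$, which I would prove by a max-flow-min-cut / Menger-style argument: a min $\{s_1,s_2\}$-$\{t_1,t_2\}$ cut $S'$ that is \emph{strictly smaller} than $c(s_1;t_1)+c(s_2;t_2)$ would have to be a GNS-cut, contradicting that $S'$ (hence also, by minimality considerations, $S$) is not.

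\textbf{Main obstacle.} The delicate point is the exact combinatorial equivalence ``$S$ is not a GNS-cut $\iff$ for every $\underbar{C}$, $c(s_1;t_1)+c(s_2;t_2)=c(\{s_1,s_2\};\{t_1,t_2\})$ realized through $S$'', and in particular showing that a minimum $\{s_1,s_2\}$-$\{t_1,t_2\}$ cut of value $<c(s_1;t_1)+c(s_2;t_2)$ must be a GNS-cut. I expect to prove this by contradiction: if a min-cut $S'$ is not a GNS-cut, then (by the branch analysis above) $\mathcal{G}\setminus S'$ has $s_1\!-\!t_1$, $s_2\!-\!t_2$ paths plus the requisite cross paths; using flow-decomposition of max flows $F_1,F_2$ and the surviving cross paths to splice flow together, I build an $\{s_1,s_2\}\!-\!\{t_1,t_2\}$ flow of value $c(s_1;t_1)+c(s_2;t_2)$ entirely crossing $S'$, forcing $C(S')\ge c(s_1;t_1)+c(s_2;t_2)$, contradiction. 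Making this splicing argument airtight --- handling flow cancellations, the case $s_2=t_1$ per the stated convention, and ensuring the cross-path can be inserted without exceeding capacities --- is where the real work lies; I would organize it as a clean lemma about two flows and a connecting path in a DAG.
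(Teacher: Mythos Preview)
Your proposal has the combinatorics exactly backwards, and this breaks the whole approach. From the hypothesis that $R_1+R_2\le C(S)$ is an edge-cut bound, what you can deduce is that $\mathcal{G}\setminus S$ has \emph{no} $s_1$--$t_1$ path and \emph{no} $s_2$--$t_2$ path (otherwise set all capacities off $S$ to $\infty$ and route along the surviving direct path to make $R_i$ arbitrarily large while $C(S)$ stays finite). Given this, the statement ``$S$ is not a GNS-cut'' unwinds to: $\mathcal{G}\setminus S$ \emph{does} contain an $s_1$--$t_2$ path \emph{and} an $s_2$--$t_1$ path. So the direct paths are gone and \emph{both} cross paths survive --- the opposite of the picture you are working with (``surviving paths $P_1:s_1\to t_1$ and $P_2:s_2\to t_2$, together with at least one surviving cross path''). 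In particular, $S$ is \emph{not} an $\{s_1,s_2\}$--$\{t_1,t_2\}$ cut, so your key inequality $C(S)\ge c(\{s_1,s_2\};\{t_1,t_2\})$ is false in general, and the flow-splicing / Menger-type plan built on top of it cannot go through.

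The paper's argument is of a quite different nature: it uses \emph{achievability}, not a pure cut/flow comparison. One first fixes any nonnegative $(c_e)_{e\in S}$, sets $C_e=c_e$ for $e\in S$ and $C_e=\infty$ for $e\notin S$, so that $c(s_i;t_i)=C_i(S):=\min\{C(T):T\subseteq S,\ T\text{ an }s_i\text{--}t_i\text{ cut}\}$. Because both cross paths survive off $S$, the cross mincuts are infinite; the Two-Multicast Lemma (Lemma~\ref{lem:two-multicast}) then says the pair $(C_1(S),C_2(S))$ is achievable for two-multicast, hence for two-unicast. Applying the hypothesized edge-cut bound to this achievable pair gives $C_1(S)+C_2(S)\le C(S)$ for every choice of $(c_e)_{e\in S}$, which is a purely combinatorial statement about $S$. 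Finally, for an arbitrary capacity assignment $\underbar{C}$ one trivially has $c(s_i;t_i)\le C_i(S)$, yielding $c(s_1;t_1)+c(s_2;t_2)\le C(S)$. The point to take away is that the leverage comes from feeding an \emph{achievable} rate pair back into the assumed bound, not from comparing $C(S)$ to a multiway mincut.
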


\begin{remark}
  Since the cutset bound is tight for single unicast, the cutset
  bounds provide all possible edge-cut bounds on the individual
  rates. Theorem~\ref{thm:gns-tightest} says that the GNS cuts
  together provide all possible edge-cut bounds on the sum rate that
  are not already implied by the individual rate cutset bounds.
  % Thus, the GNS outer bound is the tightest outer
  % bound that can be derived from edge-cut bounds for the two-unicast
  % problem.
\end{remark}

\begin{proof} Suppose $R_1+R_2\leq C(S)$ holds for all
  $(R_1,R_2)\in\mathcal{C}(\mathcal{G},\underbar{C})$ for all choices
  of $\underbar{C}.$ Then, clearly $\mathcal{G}\setminus S$ has no
  paths from $s_1$ to $t_1$ and no paths from $s_2$ to $t_2.$ Suppose
  $S$ is not a GNS-cut so that $\mathcal{G}\setminus S$ has no paths
  from $s_1$ to $t_1$ or from $s_2$ to $t_2$ but it has paths from
  $s_1$ to $t_2$ and $s_2$ to $t_1.$ Define
  $C_i(S):=\min\{C(T): T\subseteq S,\ T \mbox{ is an }s_i-t_i \mbox{
    cut} \}$
  for $i=1,2.$ Fix any choice of non-negative reals $\{c_e:e\in S\}.$
  Consider the following choice of link capacities:
  $C_e= c_e\ \forall e\in S$ and $C_e= \infty\ \forall e\notin S.$ For
  this choice of link capacities, the individual rate mincuts are
  given by $c(s_i;t_i)=C_i(S), i=1,2.$ We will use the following lemma
  proved in Appendix~\ref{subsec:lem_two_multicast}.

  \begin{lemma}\label{lem:two-multicast} (Two-Multicast Lemma)
    For a two-multicast network
    $\mathcal{N}=(\mathcal{G},\underbar{C})$ with sources $s_1$ and
    $s_2$ multicasting independent messages at rates $R_1$ and $R_2$
    respectively to be recovered at both the destinations $t_1$ and
    $t_2,$ the capacity region is given by
    \begin{align*}
      R_1 & \leq \min\{c(s_1;t_1), c(s_1;t_2)\}, \\
      R_2 & \leq \min\{c(s_2;t_1), c(s_2;t_2)\},\\
      R_1 + R_2 & \leq \min\{c(s_1,s_2;t_1), c(s_1,s_2;t_2)\}.
    \end{align*}
  \end{lemma}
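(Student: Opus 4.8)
The plan is to prove the two inclusions separately: the converse by a routine cut-set argument, and the achievability by reducing the two-multicast problem to the classical single-source multicast theorem via a super-source gadget.

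\emph{Converse.} Suppose $(R_1,R_2)$ is achievable. Fix any $s_1$-$t_1$ cut $S$. Because the message $W_1$ enters the network only at $s_1$, every symbol observed by $t_1$ is a deterministic function of $X_S$ together with $W_2$ (the only source information that can still reach $t_1$ in $\mathcal{G}\setminus S$). Since $t_1$ recovers $W_1$, a Fano inequality of exactly the flavor used in the proof of Theorem~\ref{thm:gns-bound} gives $NR_1 = H(W_1\mid W_2) \le I(W_1;X_S\mid W_2) + o(N) \le H(X_S) + o(N) \le N\,C(S) + o(N)$, so $R_1\le c(s_1;t_1)$; since $t_2$ also recovers $W_1$, the same argument gives $R_1\le c(s_1;t_2)$, and symmetrically $R_2\le\min\{c(s_2;t_1),c(s_2;t_2)\}$. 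For the sum-rate, fix any $\{s_1,s_2\}$-$t_1$ cut $S$; now every symbol at $t_1$ is a function of $X_S$ alone, and since $t_1$ recovers both messages, $N(R_1+R_2)\le H(X_S)+o(N)\le N\,C(S)+o(N)$, whence $R_1+R_2\le c(s_1,s_2;t_1)$ and likewise $\le c(s_1,s_2;t_2)$. These closed inequalities pass to the closure, i.e.\ to all of $\mathcal{C}(\mathcal{N})$.

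\emph{Achievability.} It suffices to show that every $(R_1,R_2)$ in the stated region lies in $\mathcal{C}(\mathcal{N})$. Form the augmented network $\mathcal{G}^{\ast}$ by adjoining a new super-source $s^{\ast}$ and two edges $(s^{\ast},s_1),(s^{\ast},s_2)$ of capacities $R_1$ and $R_2$, and consider the single-source multicast problem in which $s^{\ast}$ multicasts one message of rate $R_1+R_2$ to both $t_1$ and $t_2$. Since $s^{\ast}$ is the unique source of $\mathcal{G}^{\ast}$ and its total out-capacity is exactly $R_1+R_2$, the pair of symbol blocks carried on $(s^{\ast},s_1)$ and $(s^{\ast},s_2)$ is, after relabeling the message, the message itself; consequently a rate-$(R_1+R_2)$ multicast code on $\mathcal{G}^{\ast}$ is precisely a two-multicast code of rate $(R_1,R_2)$ on $\mathcal{G}$, with the relabeled sub-messages $W_1,W_2$ injected at $s_1,s_2$. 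By the max-flow/min-cut theorem for single-source multicast \cite{ACLY00}, achieved by linear network codes \cite{LYC03,KoetterMedard}, such a code exists as soon as the min-cut from $s^{\ast}$ to $t_j$ in $\mathcal{G}^{\ast}$ is at least $R_1+R_2$ for $j=1,2$.

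\emph{Evaluating the augmented min-cut.} A minimal $s^{\ast}$-$t_j$ cut in $\mathcal{G}^{\ast}$ breaks into four cases according to which of the two new edges it uses: it may take both new edges (value $R_1+R_2$); take only $(s^{\ast},s_1)$, in which case its remaining edges must form an $s_2$-$t_j$ cut in $\mathcal{G}$ (value $R_1+c(s_2;t_j)$); take only $(s^{\ast},s_2)$ (value $R_2+c(s_1;t_j)$); or take neither, in which case it must be an $\{s_1,s_2\}$-$t_j$ cut in $\mathcal{G}$ (value $c(s_1,s_2;t_j)$). Hence
\begin{align*}
c_{\mathcal{G}^{\ast}}(s^{\ast};t_j)=\min\bigl\{\,R_1+R_2,\ R_1+c(s_2;t_j),\ R_2+c(s_1;t_j),\ c(s_1,s_2;t_j)\,\bigr\}.
\end{align*}
Under the hypotheses $R_1\le c(s_1;t_j)$, $R_2\le c(s_2;t_j)$ and $R_1+R_2\le c(s_1,s_2;t_j)$, every term on the right is at least $R_1+R_2$, so $c_{\mathcal{G}^{\ast}}(s^{\ast};t_j)\ge R_1+R_2$ for both $j$, and the desired multicast (hence two-multicast) code exists. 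The step I expect to require the most care is the claimed equivalence between multicast codes on $\mathcal{G}^{\ast}$ and two-multicast codes on $\mathcal{G}$ --- i.e.\ that the super-source may be assumed to merely forward $W_1$ on $(s^{\ast},s_1)$ and $W_2$ on $(s^{\ast},s_2)$ --- which for vanishing-error codes rests on the out-capacity-counting argument sketched above together with the routine bookkeeping needed to handle irrational rates/capacities and the floor operations in the edge-capacity constraints.
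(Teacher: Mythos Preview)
Your proof is correct and follows essentially the same approach as the paper: the converse is the cut-set bound, and achievability is obtained by attaching a super-source with outgoing edges of capacities $R_1,R_2$ and invoking the single-source multicast theorem \cite{ACLY00,LYC03,KoetterMedard}. The paper's proof is terser---it declares the converse ``obvious from the cutset bound'' and leaves the augmented min-cut computation and the multicast-to-two-multicast equivalence implicit---whereas you spell out both of these, but the underlying argument is identical.
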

  \vspace{10pt}
  By Lemma~\ref{lem:two-multicast},
  $(R_1,R_2)$ is achievable for two-multicast from $s_1,s_2$ to
  $t_1,t_2$ if and only if $R_1\leq C_1(S)$ and $R_2\leq C_2(S),$
  since $c(s_1,s_2;t_1)\geq c(s_2;t_1) = \infty, c(s_1,s_2;t_2)\geq
  c(s_1;t_2) = \infty.$ Thus, $(C_1(S), C_2(S))$ is achievable for
  two-multicast and hence, also for two-unicast. Since $R_1+R_2\leq
  C(S)$ holds for all $(R_1,R_2)\in\mathcal{C}(\mathcal{G},\underbar{C})$, we
  must have $C_1(S) + C_2(S)\leq C(S)\ \forall \{C_e:e\in S\}.$
  % \begin{align}\label{eq:key}
  %   C_1(S) + C_2(S)\leq C(S)\ \ \forall \{c_e:e\in S\}.
  % \end{align}
  This is a purely graph theoretic property about the structure of the
  set of edges $S$ relative to the uncapacitated network $\mathcal{G}.$
  Now, for an arbitrary assignment of link capacities $\underbar{C},$
  we have by definition, $c(s_1;t_1)\leq C_1(S)$ and $c(s_2;t_2)\leq
  C_2(S).$ Thus, we have $c(s_1;t_1)+c(s_2;t_2)\leq C(S).$
\end{proof}

\begin{example}
  Consider the butterfly network in
  Fig.~\ref{fig:butterfly_network}. One can show that $R_1+R_2\leq
  C_{e_1}+C_{e_2}$ is an edge-cut bound, that is it holds for each
  $(R_1,R_2)\in\mathcal{C}(\mathcal{G},\underbar{C}),$ for each choice
  of edge-capacities $\underbar{C}.$ However, $\{e_1,e_2\}$ is not a
  GNS-cut. So, in accordance with Theorem~\ref{thm:gns-tightest}, this
  edge-cut bound must be implied by the individual rate cutset bounds
  and indeed it follows from the cutset bounds $R_1\leq C_{e_1},
  R_2\leq C_{e_2}.$

  \begin{figure}[h]
    \begin{center}
      \includegraphics[width = 1.5in,
      height=!]{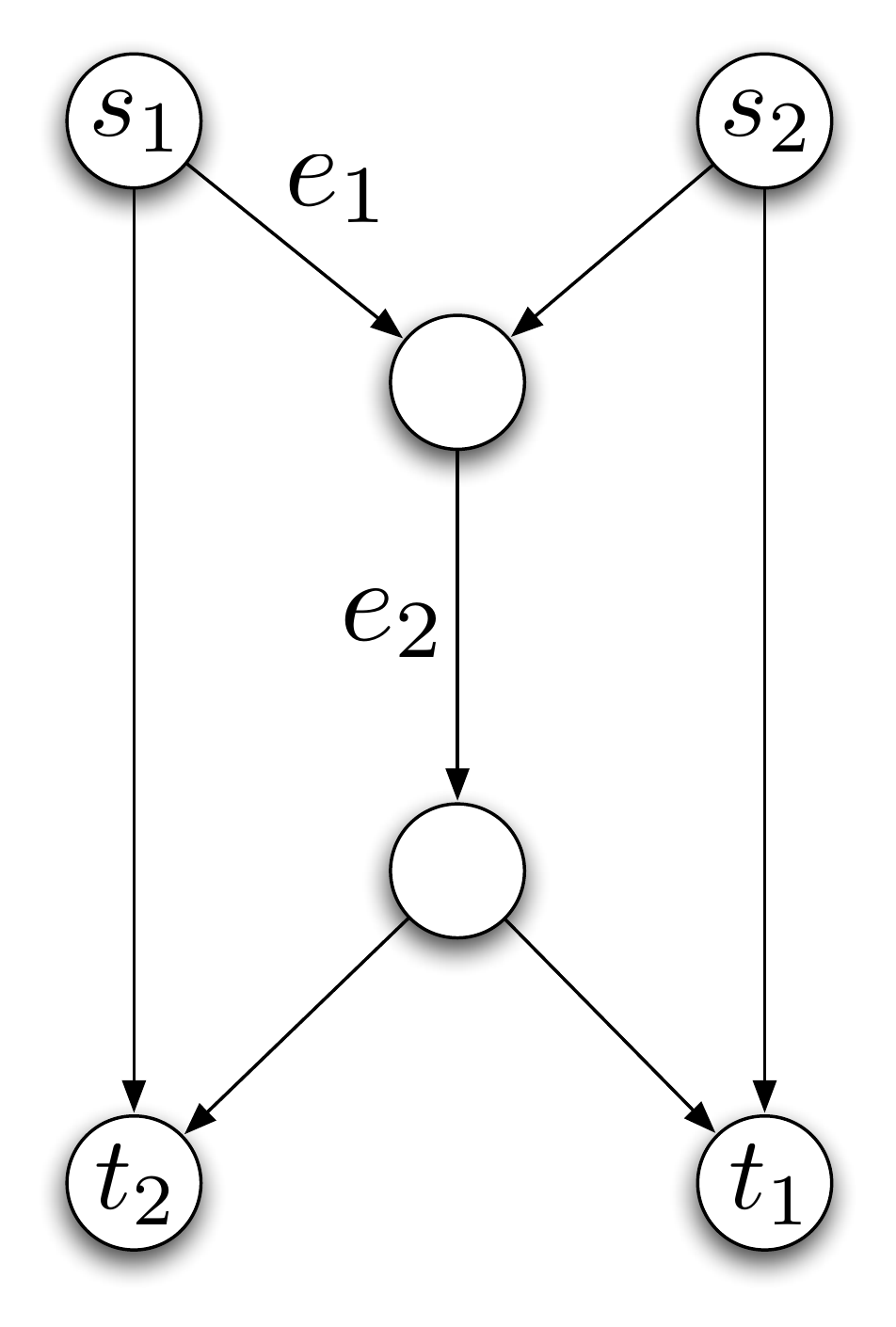}
      \caption{Butterfly Network}
      \label{fig:butterfly_network}
    \end{center}
  \end{figure}

\end{example}

\begin{example}
  Below are the three bounds for the grail network in
  Fig.~\ref{fig:GNS_better_than_NS}(a). Since each edge in the grail
  network has a path from it to both destinations or has a path from
  both sources to it, the Network Sharing bound and Cutset bound are
  identical. The one inequality that is different in the two
  collections has been highlighted. As Theorem~\ref{thm:gns-tightest}
  will show, the GNS outer bound collection on the right hand side
  below in fact, contains all possible edge-cut bounds for the grail
  network.

\begin{multicols}{2}
  \begin{center}Cutset Bound and Network Sharing Bound for the Grail
    Network in Fig.~\ref{fig:GNS_better_than_NS}(a)\end{center}
  \vspace{0pt}
        \begin{align*}
          R_1 & \leq C_{e_1} \\
          R_1 & \leq C_{e_4} \\
          R_1 & \leq C_{e_6} \\
          R_1 & \leq C_{e_7} \\
          R_1 & \leq C_{e_9} \\
          R_2 & \leq C_{e_2}+C_{e_3} \\
          R_2 & \leq C_{e_5}+C_{e_8} \\
          R_2 & \leq C_{e_2}+C_{e_8} \\
          \mathbf{R_2} & \mathbf{\leq C_{e_2}+C_{e_7}} \\
          R_1+R_2 & \leq C_{e_1}+C_{e_2}+C_{e_3}\\
          R_1+R_2 & \leq C_{e_3}+C_{e_4}\\
          R_1+R_2 & \leq C_{e_1}+C_{e_2}+C_{e_7}\\
          R_1+R_2 & \leq C_{e_3}+C_{e_5}+C_{e_6}\\
          R_1+R_2 & \leq C_{e_4}+C_{e_7}\\
          R_1+R_2 & \leq C_{e_1}+C_{e_2}+C_{e_8}+C_{e_9}\\
          R_1+R_2 & \leq C_{e_5}+C_{e_7}\\
          R_1+R_2 & \leq C_{e_4}+C_{e_8}+C_{e_9}\\
          R_1+R_2 & \leq C_{e_5}+C_{e_8}+C_{e_9}\\
        \end{align*}

\columnbreak

\begin{center}Generalized Network Sharing (GNS) Bound for the Grail Network
  in Fig.~\ref{fig:GNS_better_than_NS}(a)\end{center}
  \vspace{-20pt}
        \begin{align*}
          R_1 & \leq C_{e_1} \\
          R_1 & \leq C_{e_4} \\
          R_1 & \leq C_{e_6} \\
          R_1 & \leq C_{e_7} \\
          R_1 & \leq C_{e_9} \\
          R_2 & \leq C_{e_2}+C_{e_3} \\
          R_2 & \leq C_{e_5}+C_{e_8} \\
          R_2 & \leq C_{e_2}+C_{e_8} \\
          \mathbf{R_1+R_2} & \mathbf{\leq C_{e_2}+C_{e_7}} \\
          R_1+R_2 & \leq C_{e_1}+C_{e_2}+C_{e_3}\\
          R_1+R_2 & \leq C_{e_3}+C_{e_4}\\
          R_1+R_2 & \leq C_{e_1}+C_{e_2}+C_{e_7}\\
          R_1+R_2 & \leq C_{e_3}+C_{e_5}+C_{e_6}\\
          R_1+R_2 & \leq C_{e_4}+C_{e_7}\\
          R_1+R_2 & \leq C_{e_1}+C_{e_2}+C_{e_8}+C_{e_9}\\
          R_1+R_2 & \leq C_{e_5}+C_{e_7}\\
          R_1+R_2 & \leq C_{e_4}+C_{e_8}+C_{e_9}\\
          R_1+R_2 & \leq C_{e_5}+C_{e_8}+C_{e_9}\\
        \end{align*}
\end{multicols}

\end{example}

\begin{remark}
  The GNS outer bound is a special case of the edge-cut bounds in
  \cite{Kramer06, Harvey06, Thakor09}. However, it has the advantage
  of being simpler and more explicit. Furthermore, from
  Theorem~\ref{thm:gns-tightest}, the GNS outer bound is the tightest
  possible collection of edge-cut bounds for two-unicast networks and
  hence, is equivalent to the bounds in \cite{Kramer06, Harvey06,
    Thakor09} for two-unicast networks. 
\end{remark}

\subsection{Tightness under GNS-cut bottleneck}

The next theorem shows that any minimal GNS-cut, i.e. a GNS-cut with
no proper subset that is also a GNS-cut, provides an outer bound that
is not obviously loose.

\begin{theorem}\label{thm:gns-not-obviously-loose}
  For a given two-unicast graph $\mathcal{G},$ let
  $S\subseteq\mathcal{E}(\mathcal{G})$ be a minimal GNS-cut. Choose an
  arbitrary collection of non-negative reals $\{c_e:e\in S\}.$
  Consider the following link-capacity-vector $\underbar{C}:$
  $C_e=c_e\ \forall e\in S,\ \ \ C_e= \infty \ \forall e\notin S.$
  Then, for the two-unicast network $(\mathcal{G},\underbar{C})$, the GNS
  outer bound is identical to the capacity region
  $\mathcal{C}(\mathcal{G},\underbar{C}),$ i.e. the GNS outer bound is tight.
\end{theorem}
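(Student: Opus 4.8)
The plan is to establish the one non-trivial inclusion: that the GNS outer bound is \emph{achievable} for this network, the reverse inclusion being Theorem~\ref{thm:gns-bound} together with the individual-rate cutset bounds. By the relabelling $1\leftrightarrow 2$ (which also interchanges the two types of GNS-cut) assume $S$ is a GNS-cut of the first type, so $\mathcal{G}\setminus S$ has no $s_1$-$t_1$, $s_2$-$t_2$ or $s_2$-$t_1$ path. I would first exploit minimality twice. \emph{(i)} No proper subset of $S$ is a GNS-cut, and any set of edges of finite total capacity must avoid the $\infty$-edges and hence lie inside $S$; therefore $S$ is the unique finite-capacity GNS-cut, $c_{\mathsf{gns}}(s_1,s_2;t_1,t_2)=C(S)$, and (since $S$ is in particular an $s_1$-$t_1$ and an $s_2$-$t_2$ cut) $c(s_1;t_1),c(s_2;t_2)\le C(S)<\infty$. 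So the region to be achieved is $\mathcal{R}=\{(R_1,R_2)\ge 0:\ R_1\le c(s_1;t_1),\ R_2\le c(s_2;t_2),\ R_1+R_2\le C(S)\}$. \emph{(ii)} Deleting any $e\in S$ from $S$ re-creates one of the three forbidden paths, which must traverse $e$; hence $\mathsf{tail}(e)$ is reachable from $s_1$ or $s_2$, and $\mathsf{head}(e)$ reaches $t_1$ or $t_2$, both within $\mathcal{G}\setminus S$. Writing $S_i^{\mathrm{src}}$ (resp.\ $S_i^{\mathrm{dst}}$) for the edges of $S$ with tail reachable from $s_i$ (resp.\ head reaching $t_i$) in $\mathcal{G}\setminus S$, this gives $S=S_1^{\mathrm{src}}\cup S_2^{\mathrm{src}}=S_1^{\mathrm{dst}}\cup S_2^{\mathrm{dst}}$.

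Since $\mathcal{R}$ is a polygon, by time-division it suffices to achieve its extreme points; the axis and origin points are trivial (single-unicast routing), so only the two ``knee'' points matter, and it suffices to treat one of them, the other following from the $1\leftrightarrow 2$ relabelling above. Next I would reduce to a coding problem purely ``across $S$''. Because every edge outside $S$ is uncapacitated, a vertex $v$ in $\mathcal{G}\setminus S$ has available exactly the functions of $\{W_i:\ s_i\rightsquigarrow v\ \text{in}\ \mathcal{G}\setminus S\}$ and $\{X_{e'}:\ \mathsf{head}(e')\rightsquigarrow v\ \text{in}\ \mathcal{G}\setminus S\}$, where $X_{e'}$ is the symbol carried on $e'\in S$; hence a scheme is the same thing as a choice, for each $e\in S$, of a symbol $X_e$ of dimension $\lfloor NC_e\rfloor$ that is a function of $\{W_i:\ s_i\rightsquigarrow\mathsf{tail}(e)\}$ and $\{X_{e'}:\ \mathsf{head}(e')\rightsquigarrow\mathsf{tail}(e)\}$ (reachabilities in $\mathcal{G}\setminus S$), after which $t_1$ sees $(X_e)_{e\in S_1^{\mathrm{dst}}}$ and $t_2$ sees $(X_e)_{e\in S_2^{\mathrm{dst}}}$, plus $W_1$ itself precisely when $s_1\rightsquigarrow t_2$ in $\mathcal{G}\setminus S$. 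This reduction is routine: the infinite part of $\mathcal{G}$ can simulate any such routing up to a rate loss vanishing as $N\to\infty$, and nothing else is ever available.

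On this reduced picture I would build a vector-linear code over a large field by superposing two flows --- a maximum flow of value $c(s_1;t_1)$ from $s_1$ to $t_1$ and one of value $c(s_2;t_2)$ from $s_2$ to $t_2$ --- decomposed into path-flows, each placing a demand on the edges of $S$ it crosses. On an edge used by a single flow I route the corresponding message symbols; on an edge contested by both flows I carry generic linear combinations of the incoming $W_1$- and $W_2$-symbols, allocating its $\lfloor NC_e\rfloor$ dimensions between the two demands (this is where $R_1+R_2\le C(S)$ is used). Correctness then amounts to two rank statements: the transfer map $(W_1,W_2)\mapsto(X_e)_{e\in S_1^{\mathrm{dst}}}$ has image determining $W_1$, and the transfer map to $(X_e)_{e\in S_2^{\mathrm{dst}}}$, augmented by the genie $W_1$, determines $W_2$. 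Over a large enough field a Schwartz--Zippel/Koetter--Medard argument makes generic coefficients satisfy both simultaneously, \emph{provided} a purely combinatorial ``enough edges reach each destination'' condition holds, and that condition is what the identities $S=S_1^{\mathrm{dst}}\cup S_2^{\mathrm{dst}}$ and ``$S$ is an $s_i$-$t_i$ cut'' are there to furnish.

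The main obstacle is this last combinatorial alignment: showing that once the two flows are laid down, the overlap/residual capacity structure of $S$ really does leave $t_1$ able to invert for $W_1$ and $t_2$ able to invert for $W_2$ (given $W_1$), all while respecting the topological order among the $X_e$'s --- which forbids, for instance, an edge early in $S$ from using a combination resolvable only further downstream. One natural route is induction on $|S|$: the base case $|S|=1$ is pure routing (split the single bottleneck edge's capacity between the two messages and fan the parts out through the infinite subgraph to $t_1$ and $t_2$), and the inductive step peels off a topologically-extreme edge of $S$ and checks that the contracted/reduced network is still a minimal-GNS-cut bottleneck instance for a suitably adjusted rate pair, from which the code is lifted back; making ``suitably adjusted'' precise is the delicate point. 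Finally one must handle the degenerate cases permitted by Definition~\ref{def:achievable} --- coincident sources/destinations, $s_2=t_1$, vanishing mincuts, and whether the genie $W_1$ is genuinely available at $t_2$ (when it is not, the $t_2$-reaching part of the construction must be arranged to need no $W_1$ side information) --- but each of these only shrinks $\mathcal{R}$ or trivializes a piece of the argument.
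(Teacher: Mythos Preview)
Your setup is sound and matches the paper's: the reduction to corner points, the observation that minimality forces every $e\in S$ to have its tail reached from some source and its head reaching some destination in $\mathcal{G}\setminus S$, and the ``code only across $S$'' viewpoint are all correct and are exactly how the paper begins. The gap is in the achievability scheme itself.

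Your proposed mechanism --- superpose two max flows, put generic linear combinations on contested edges, and invoke Schwartz--Zippel/Koetter--M\'edard --- does not work as stated. The Koetter--M\'edard argument certifies that a \emph{single} transfer matrix is generically full rank; here you need more. Destination $t_1$ has no $W_2$ genie (indeed $s_2\not\rightsquigarrow t_1$ in $\mathcal{G}\setminus S$), so for $t_1$ to recover $W_1$ you need the $W_2$-interference at $t_1$ to be \emph{aligned} into a small subspace, not merely that the $W_1$-part is full rank. Generic coefficients do exactly the opposite: they maximise the rank of the $W_2$-interference, which then eats into the dimensions $t_1$ has for resolving $W_1$. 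Your ``allocate $\lfloor NC_e\rfloor$ dimensions between the two demands'' step is also underspecified: when $f_1(e)+f_2(e)>C_e$ on an edge, some flow must be dropped, and you give no rule for which symbols survive and why decodability is preserved. You acknowledge this (``the main obstacle is this last combinatorial alignment''), but the induction-on-$|S|$ alternative is only a hope; peeling an extreme edge of $S$ does not in general leave a minimal GNS-cut instance, and you do not show how to reattach the peeled edge to the lifted code.

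The paper closes exactly this gap with an explicit, non-random construction. It refines your partition into nine classes $S_x^y$ (indexing which sources reach the tail and which destinations the head reaches in $\mathcal{G}\setminus(S\setminus e)$), and then builds the code in stages: a basic routing\,+\,butterfly scheme over the classes $S_{12}^{12}$ and $S_2^1$, followed by a controlled rate-exchange along edge-disjoint $s_1$--$t_1$ paths in $\mathcal{G}\setminus\hat S_1$ (and then $s_2$--$t_2$ paths in $\mathcal{G}\setminus\hat S_2$) using simple XOR gadgets. The point of these gadgets is precisely what your random code cannot do: when an $s_1$-bit must traverse edges that were carrying $s_2$-bits, either a spare $S_2^1$ edge delivers the needed side information to $t_1$, or $s_2$ imposes a parity constraint on its own bits so that the interference at $t_1$ cancels --- trading one bit of $R_2$ for one of $R_1$ while keeping the sum rate at $C(S)$. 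A case split on whether $\mathcal{G}\setminus S$ retains an $s_1$--$t_2$ (or $s_2$--$t_1$) path governs whether the genie is available and whether the region is pentagonal. This construction is over $\mathbb{F}_2$ and is fully explicit; the work is combinatorial bookkeeping, not a genericity argument.
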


\begin{remark}
  Theorem~\ref{thm:gns-not-obviously-loose} does not say that a sum
  rate of $c_\mathsf{gns}(s_1,s_2;t_1,t_2)=C(S)$ is achievable, only that all rate pairs in
  $\{(R_1,R_2): R_1\leq c(s_1;t_1), R_2\leq c(s_2;t_2), R_1+R_2\leq
  c_\mathsf{gns}(s_1,s_2;t_1,t_2)\}$ are achievable. A sum rate of $C(S)$ is achievable only
  when $C(S)\leq c(s_1;t_1)+c(s_2;t_2)$ for the choice of capacities.
\end{remark}

The proof is relegated to
Appendix~\ref{subsec:proof-gns-not-obvi-loose}. Theorem~\ref{thm:gns-not-obviously-loose}
also holds when $C_e$ for $e\notin S$ are all finite and sufficiently
large, i.e. when $C_e\geq C(S)\ \forall e\notin S.$ This can be
concluded from the proof by using the fact that the coding scheme is
linear over the binary field $\mathbb{F}_2.$

\subsection{The GNS outer bound is not tight}
\label{subsec:GNS-counterexample}

We discussed in Section~\ref{subsec:GNS-tightest-edge-cut} that for
two-unicast networks, the GNS outer bound is equivalent to the bounds
in \cite{Kramer06, Harvey06, Thakor09}. However, the GNS outer bound
is also a special case of the so-called LP bound in \cite{Yeung99},
which is the tighest outer bound obtainable using Shannon information
inequalities alone. In this subsection, we will show that the LP bound
is tighter than the GNS outer bound for general two-unicast networks.
We provide an example of a two-unicast network, the crossfire network
in Fig.~\ref{fig:GNS_counterexample}(a) showing that:
\begin{itemize}
\item the GNS outer bound is not tight, so edge-cut bounds do not
  suffice to characterize the capacity region;
\item the trade-off between rates on the boundary of the capacity
  region need not be 1:1;
\item the capacity region may have a non-integral corner point even if
  all links have integer capacity and thus;
\item scalar linear coding is not sufficient to achieve capacity.
\end{itemize}

Achievability of the capacity region in
Fig.~\ref{fig:GNS_counterexample}(c) follows from a two time step
vector linear coding scheme over $\mathbb{F}_2$ that achieves
$(1,1.5)$ shown in Fig.~\ref{fig:GNS_counterexample}(b).

\begin{figure}[htbp] {\center \subfigure[Crossfire network with all edges having
    unit
    capacity]{\includegraphics[height=3.5in]{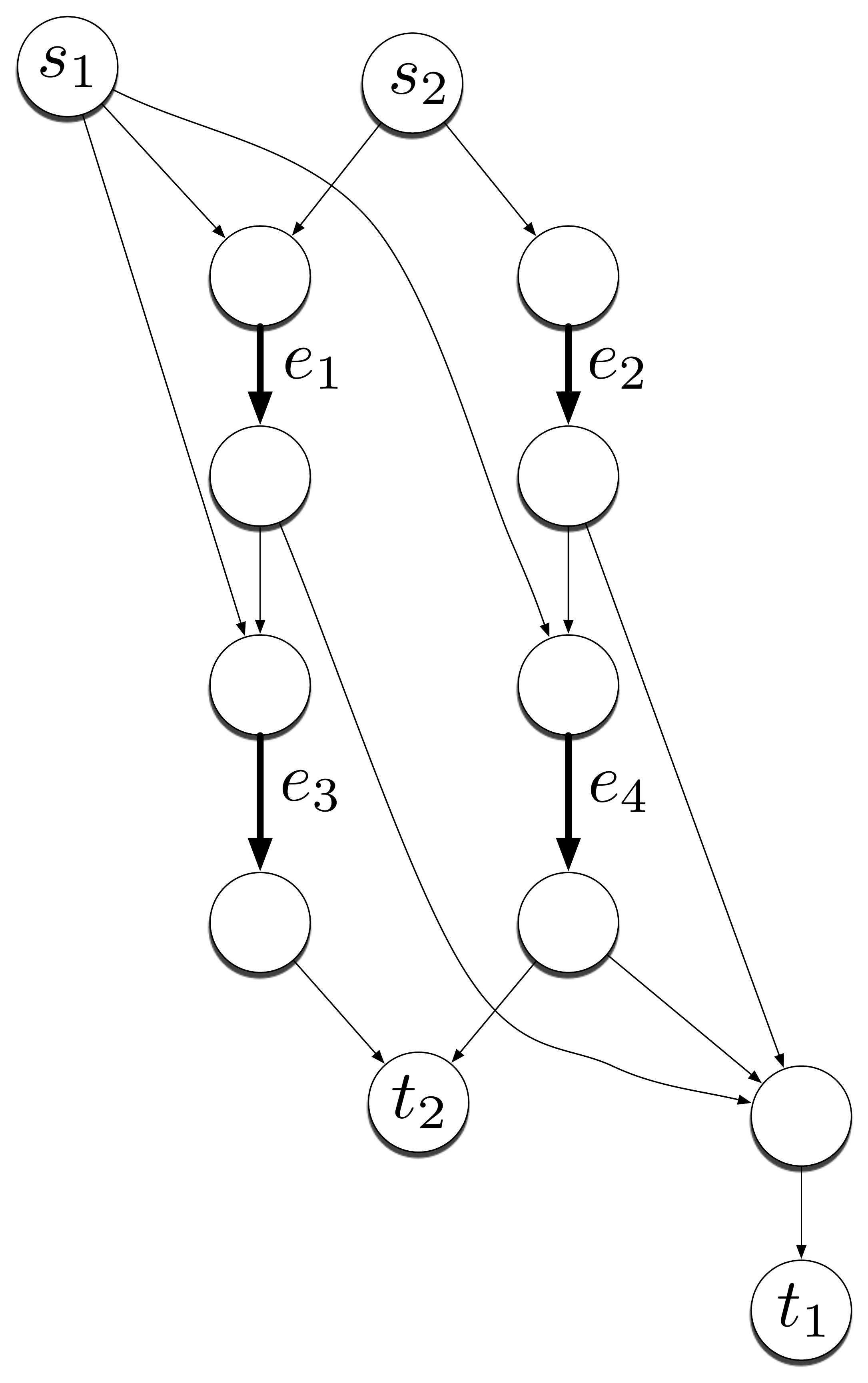}}\hspace{5pt}
    \subfigure[Vector linear scheme over $\mathbb{F}_2$ achieving
    (1,1.5)]{\includegraphics[height=3.5in]{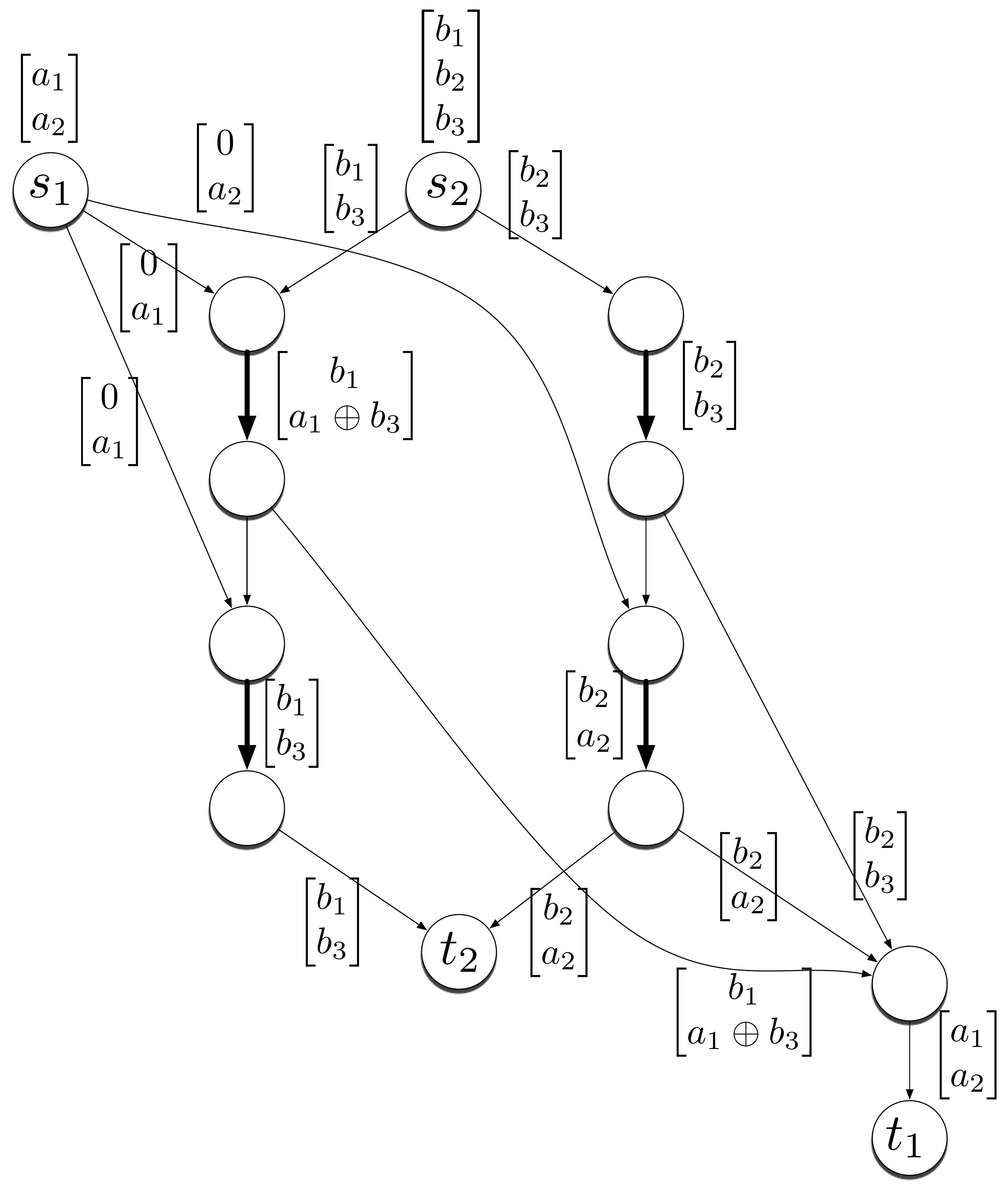}}
    \subfigure[Capacity region of the crossfire network in
    (a)]{\includegraphics[width=2in]{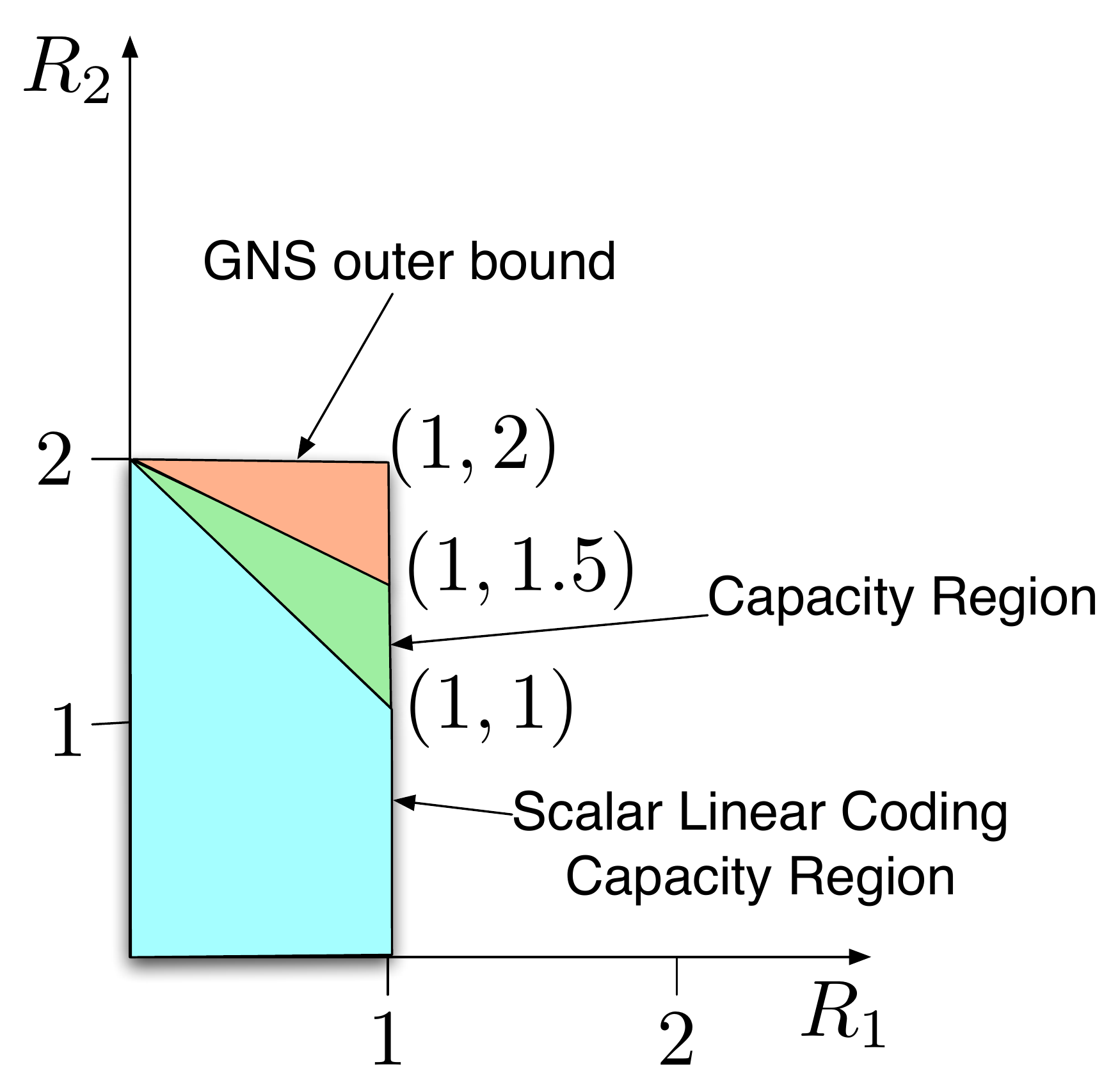}} \caption{GNS
      outer bound is not tight} \label{fig:GNS_counterexample} }
\end{figure}

% \begin{figure}[htbp]
%   \begin{center}
%     \includegraphics[height=0.8in,width=!]{GNS_counterexample_capacity_region.pdf}
%     \caption{The GNS outer bound and the capacity region for the
%     network in Fig.~\ref{fig:GNS_counterexample}(a)}
%     \label{fig:GNS_counterexample_capacity}
%   \end{center}
% \end{figure}

Suppose $(R_1,R_2)$ is a vanishing error achievable rate pair. Fixing
$0<\epsilon<\frac 12,$ consider a scheme of block length $N,$
achieving the rate pair $(R_1,R_2)$ over alphabet $\mathcal{A}$ with
error probability at most $\epsilon$. Let $W_1,W_2$ be independent and
distributed uniformly over the sets $\mathcal{A}^{\lceil NR_1\rceil}$
and $\mathcal{A}^{\lceil NR_2\rceil}$ respectively. For each edge
$e=e_1,e_2,e_3,e_4,$ define $X_e$ as the concatenated evaluation of
the functions specified by the scheme for edge $e.$ We will assume all
logarithms are to base $|\mathcal{A}|.$

By Fano's inequality,
  \begin{align}
    H(W_1|X_{e_1},X_{e_2},X_{e_4}) & \leq
    h(\epsilon)+\epsilon\lceil NR_1\rceil,\label{eq:one_fano}\\
    H(W_2|W_1,X_{e_1},X_{e_2},X_{e_4}) & \leq h(\epsilon)+\epsilon\lceil NR_2\rceil. \label{eq:two_fano}
  \end{align}
Now, 
  \begin{align}
    NR_1 & \leq H(W_1) \\
    & = I(X_{e_1},X_{e_2},X_{e_4};W_1) +
    H(W_1|X_{e_1},X_{e_2},X_{e_4}) \\
    & = {\color{blue}I(X_{e_1},X_{e_2};W_1)} + {\color{red}I(X_{e_4};W_1|X_{e_1},X_{e_2})} +
    h(\epsilon)+\epsilon\lceil NR_1\rceil \mbox{\ \ \ (from
      \eqref{eq:one_fano})} \label{eq:split-two-terms}\\
    {\color{blue}I(X_{e_1},X_{e_2};W_1)} & =I(X_{e_1},X_{e_2};W_1,W_2) - I(X_{e_1},X_{e_2};W_2|W_1) \\
    & = H(X_{e_1},X_{e_2}) - H(W_2|W_1) +H(W_2|W_1,X_{e_1},X_{e_2}) \\
    & = H(X_{e_1},X_{e_2}) - H(W_2) +h(\epsilon)+\epsilon\lceil
    NR_2\rceil \mbox{\ \ \ \ \ \ \ \ \ \ \ \ \ \ \ \ \ \ (from
      \eqref{eq:two_fano})} \\
    & \leq NC_{e_1}+NC_{e_2} - NR_2 +h(\epsilon)+\epsilon\lceil
    NR_2\rceil \label{eq:first-term}\\    
    {\color{red}I(X_{e_4};W_1|X_{e_1},X_{e_2})} & = I(X_{e_4};W_1,X_{e_1},X_{e_2}) - I(X_{e_4};X_{e_1},X_{e_2}) \\
    & \leq H(X_{e_4}) - I(X_{e_4};W_2) \\
    & = H(X_{e_4}) - I(X_{e_3},X_{e_4};W_2) + I(X_{e_3};W_2|X_{e_4}) \\
    & \leq H(X_{e_4})-H(W_2) + H(X_{e_3}|X_{e_4}) \\
    & = H(X_{e_3},X_{e_4}) - H(W_2) \\
    & \leq NC_{e_3}+NC_{e_4} - NR_2 \label{eq:second-term}
  \end{align}
  From \eqref{eq:split-two-terms}, \eqref{eq:first-term},
  \eqref{eq:second-term}, we can deduce
  \begin{align}
    R_1+2R_2\leq C_{e_1}+C_{e_2}+C_{e_3}+C_{e_4} +
    2(\epsilon+h(\epsilon))+\epsilon(R_1+R_2)
  \end{align}

  Since $\epsilon$ can be made arbitrarily small by a suitable coding
  scheme with a suitable block length, we must have $R_1+R_2\leq
  C_{e_1}+C_{e_2}+C_{e_3}+C_{e_4}=4.$ As this inequality holds for
  every vanishing error achievable rate pair $(R_1,R_2),$ it also
  holds for every point in the closure of the set of vanishing error
  achievable rate pairs. Thus, the network has a capacity region as
  shown in Fig.~\ref{fig:GNS_counterexample}(c).

  Finally, simple routing strategies achieve $(1,1)$ and $(0,2),$
  whereas due to the constraint $R_1+2R_2\leq 4,$ no coding strategy
  can achieve $(1,2).$ As routing is a special case of scalar linear
  coding, the scalar linear coding region is the convex closure of the
  integral rate pairs $(1,1), (0,2), (0,1), (0,0)$ and hence, is as
  shown in Fig.~\ref{fig:GNS_counterexample}(c).

  \subsection{NP-completeness of minimum GNS-cut}
  \label{subsec:np-complete}
  We have shown in Section~\ref{subsec:GNS-tightest-edge-cut} that the
  GNS bound provides the tightest collection of edge-cut bounds for
  two-unicast networks. This brings up a natural question of
  computational complexity of the GNS bound. Since the number of
  GNS-cuts is in general, exponential in the size of the network,
  listing all of them is intractable. For a single-unicast problem, we
  know that there exists an algorithm \cite{FordFulkerson56, Elias56}
  that computes the mincut and reveals a minimizing cut efficiently
  in spite of there being exponentially many edge-cuts. Given a
  two-unicast network, can we find an algorithm that efficiently
  finds, among all GNS-cuts $S,$ one that has the smallest value of
  $\sum_{e\in S} C_e$?  Theorem~\ref{thm:np-completeness} shows
  unfortunately that we cannot (unless P=NP). Define the following
  decision problem:

  \begin{center}\emph{\bf MIN-GNS-CUT}\end{center}

  \begin{center}
    \emph{Instance}: A two-unicast (capacitated) network $\mathcal{N} =
    (\mathcal{G},\underbar{C}).$
  \end{center}
  \begin{center}
    \emph{Question}: Is there a GNS-cut $S$ so that $C(S)\leq K$?
  \end{center}

\begin{theorem}\label{thm:np-completeness}
  MIN-GNS-CUT is NP-complete.
\end{theorem}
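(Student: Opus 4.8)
The plan is to prove membership in NP directly and NP-hardness by a polynomial-time reduction from 3-SAT.
\emph{Membership in NP.} A certificate is a set $S\subseteq\mathcal{E}(\mathcal{G})$ together with one bit naming which of the two alternatives in the definition of a GNS-cut $S$ is supposed to satisfy. Verification consists of a constant number of directed-reachability computations in $\mathcal{G}\setminus S$ (confirming that the at most three relevant source--destination pairs are disconnected) and one comparison of $\sum_{e\in S}C_e$ with $K$, all in time polynomial in the size of $\mathcal{N}$. Hence MIN-GNS-CUT $\in$ NP.

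\emph{The reduction.} Given a 3-CNF formula $\varphi$ with variables $x_1,\dots,x_n$ and clauses $C_1,\dots,C_m$, I would construct a two-unicast DAG $\mathcal{G}_\varphi$ with sources $s_1,s_2$, destinations $t_1,t_2$, and an integer $K$ such that $\varphi$ is satisfiable if and only if $\mathcal{G}_\varphi$ has a GNS-cut of value at most $K$. The guiding idea is that the part of a GNS-cut common to both defining alternatives is ``$\mathcal{G}_\varphi\setminus S$ has no $s_1$-$t_1$ path and no $s_2$-$t_2$ path'', so this is where the combinatorics should live. For each variable $x_i$ one introduces two unit-capacity \emph{decision edges} $T_i,F_i$ in series and makes a private $s_1$-$t_1$ path and a private $s_2$-$t_2$ path run through both of them; with every non-decision edge given capacity exceeding $K$ and the per-variable paths otherwise edge-disjoint, any GNS-cut of value $\le K$ must contain exactly one of $\{T_i,F_i\}$ for each $i$ and thus encodes a truth assignment ($T_i\in S$ meaning $x_i$ true). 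For each clause $C_j$ one adds a private $s_1$-$t_1$ path threading, in increasing variable index, through exactly the decision edge of each of its literals ($T_i$ for a literal $x_i$, $F_i$ for a literal $\bar x_i$); in $\mathcal{G}_\varphi\setminus S$ this path is destroyed precisely when $S$ contains the decision edge of a satisfied literal of $C_j$. One sets $K=n$: a value-$\le n$ cut must use its whole budget on one decision edge per variable and must also destroy every clause path, which is possible iff the induced assignment satisfies $\varphi$; the converse is immediate. Listing the vertices in the order $s_1,s_2$, variable $1$, \dots, variable $n$, $t_1,t_2$ keeps $\mathcal{G}_\varphi$ acyclic, and the whole construction is polynomial in $|\varphi|$. (Alternatively one may cite the known NP-hardness of minimum directed multicut on acyclic graphs and check that it specializes to computing $c_\mathsf{gns}$.)

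\emph{The main obstacle.} NP-membership and the ``satisfiable $\Rightarrow$ small cut'' direction are routine; the real work is the converse together with controlling the two GNS alternatives simultaneously. Both alternatives require disconnecting $s_1$ from $t_1$ and $s_2$ from $t_2$, but also a third pair ($s_2$-$t_1$ or $s_1$-$t_2$), and because the decision edges are reachable from $s_2$ while lying on $s_1$-$t_1$ paths, the graph necessarily contains $s_2$-$t_1$ paths (symmetrically $s_1$-$t_2$ paths) obtained by ``splicing'' the prefix of one clause gadget onto the suffix of another at a shared decision edge or along a variable path. One must ensure both that a value-$\le n$ cut is forced to be a one-edge-per-variable encoding of a \emph{satisfying} assignment (no spurious cheap cut, and no spliced path imposing a bogus extra constraint) and that such a cut actually disconnects all the spliced paths as well, so that it is a legitimate GNS-cut for at least one alternative. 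Making the gadgets consistent in this respect --- through the global topological ordering together with additional devices such as guard edges, per-clause private copies of decision edges linked by a small consistency sub-gadget, or routing every clause path through all $n$ variable slots --- is the technical heart of the proof; getting it exactly right so that minimum GNS-cuts correspond bijectively to satisfying assignments is where the difficulty lies.
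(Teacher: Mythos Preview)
Your NP-membership argument is fine, but the hardness proof is not complete: you yourself identify the ``main obstacle'' (handling the third disconnection requirement in the GNS definition and the spliced cross-paths it creates) and then list several possible fixes---guard edges, per-clause copies, routing through all variable slots---without committing to any one of them or verifying that it works. As written, the reduction does not establish that a value-$n$ cut is necessarily a GNS-cut, nor that every satisfying assignment yields one; the proof ends precisely where the real argument would have to begin. The parenthetical appeal to directed multicut on DAGs is likewise a hand-wave: that problem asks to disconnect a prescribed set of pairs, whereas a GNS-cut lets the adversary \emph{choose} which of two triples to disconnect, so ``checking that it specializes'' is not automatic.

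The paper sidesteps exactly the difficulty you flag, and the trick is worth knowing. It reduces from the undirected $3$-terminal multiway cut problem (NP-complete by Dahlhaus et al.) and \emph{identifies} $s_2$ and $t_1$ with a single terminal. By the convention that no cut separates a vertex from itself, this kills the first alternative in the GNS definition outright, so any GNS-cut must disconnect $s_1$ from $t_1$, $s_2$ from $t_2$, and $s_1$ from $t_2$---which is precisely ``separate the three terminals $x=s_1$, $z=s_2=t_1$, $y=t_2$''. Each undirected edge is replaced by a small directed gadget with one cheap central edge and four expensive flank edges, so that any cut within budget uses only central edges and corresponds bijectively to an undirected edge set. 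The ambiguity between the two GNS alternatives, which is the crux of your obstacle, simply never arises.
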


\begin{proof}
  It is clear that MIN-GNS-CUT is in NP. We give a polynomial
  transformation from the multiterminal cut problem for three
  terminals which is known to be NP-complete \cite{Dahlhaus94}. In the
  multiterminal cut problem, we are given a number $K$ and an
  unweighted undirected graph $\mathcal{H}$ with three special
  vertices or `terminals' $x,y,z.$ We are asked whether there is a
  subset of edges $F$ of the graph $\mathcal{H}$ with $|F|\leq K$ such
  that $\mathcal{H}\setminus F$ has no paths between any two of
  $x,y,z.$ Given $(\mathcal{H},K),$ we construct a corresponding
  instance of MIN-GNS-CUT as follows. Let the number of edges of
  $\mathcal{H}$ be $N$ with $K\leq N.$

  The two-unicast capacitated network $\mathcal{G}$ is obtained by
  replacing each undirected edge $(u,v)$ of $\mathcal{H}$ with a
  gadget as shown in Fig.~\ref{fig:gadget}. The gadget introduces two
  new vertices $w,w^\prime$ and constitutes five edges, the one
  \emph{central} edge having unit capacity and four \emph{flank} edges
  each having capacity $N+1$ units. Finally, $s_1$ is identified with
  terminal $x,$ $t_2$ with terminal $y$ and both $s_2$ and $t_1$ with
  terminal $z.$

  \begin{figure}[h]
    \begin{center}
      \subfigure[]{\includegraphics[width=1in,height=!]{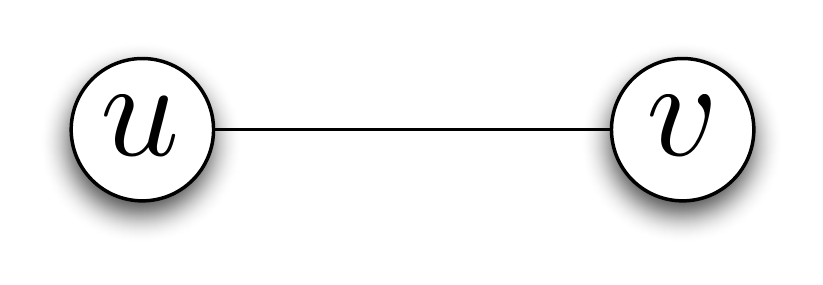}}\hspace{20pt}
      \subfigure[]{\includegraphics[width=1.3in,height=!]{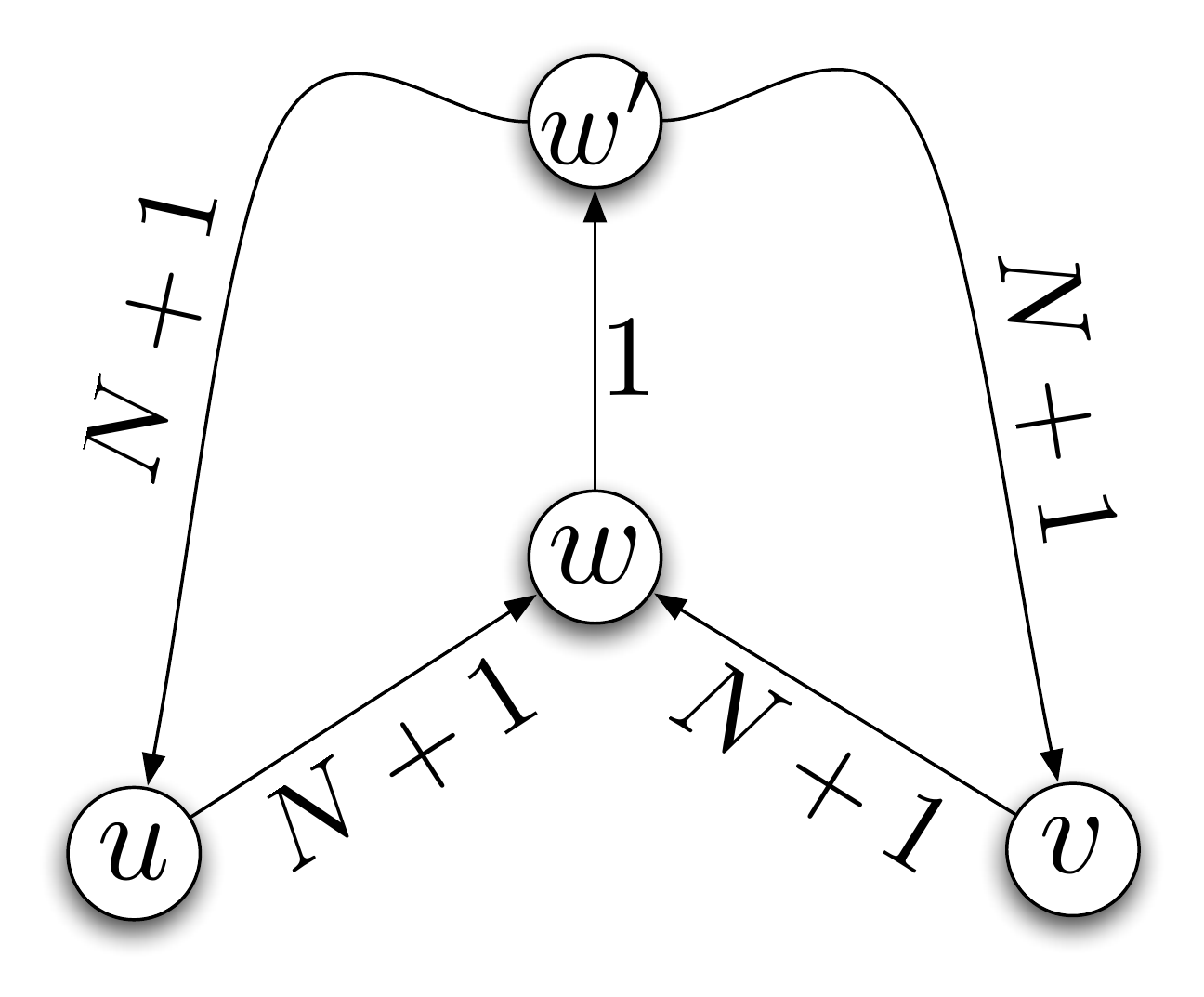}}
      \caption{(a) shows an undirected edge and (b) the corresponding
        gadget}
      \label{fig:gadget}
    \end{center}
  \end{figure}

  We will show that $\mathcal{G}$ has a GNS-cut $S$ with $\sum_{e\in
    S} C_e\leq K$ if and only if $\mathcal{H}$ has a set of edges $F$
  forming a multiterminal cut with $|F|\leq K.$

  Suppose that in the undirected graph $\mathcal{H},$ there is a
  multiterminal cut $F$ with at most $K$ edges. Then, picking the
  central edge of the gadgets corresponding to the edges in $F$ gives
  a GNS-cut $S$ in $\mathcal{G},$ such that $\sum_{e\in S} C_e = |S|
  \leq K.$

  Conversely, suppose there is a GNS-cut $S$ in $\mathcal{G}$ which
  satisfies $\sum_{e\in S} C_e\leq K.$ As $s_2$ and $t_1$ are
  identified, it must be that $\mathcal{G}\setminus S$ has no paths
  from $s_1$ to $t_1,$ from $s_2$ to $t_2$ and from $s_1$ to $t_2.$
  Moreover, as $K\leq N,$ the GNS-cut $S$ cannot contain any flank
  edge, and hence must consist exclusively of central edges of
  gadgets. Choosing the undirected edges of $\mathcal{H}$
  corresponding to the gadgets whose central edges lie in $S$ gives an
  edge set $F$ of $\mathcal{H}$ that has at most $K$ edges and is a
  multiterminal cut in $\mathcal{H}.$
\end{proof}

\subsection{Recent Results}

The Generalized Network Sharing bound has found some interpretations
and extensions in recent years. In this subsection, we summarize some
of these new results:

\begin{itemize}
\item The GNS-cut and GNS bound can be defined analogously for a
  $k$-unicast network (see Theorem~\ref{thm:gns-bound-multiple-unicast})
\item The GNS bound has been given three different interpretations:
  \begin{itemize}
  \item Algebraic interpretation \cite{ZCM12};
  \item Graph-theoretic interpretation via index coding
    \cite{ShanmugamDimakis14};
  \item Network concatenation interpretation \cite{ShomoronyAvestimehr14}.
  \end{itemize}
\item The GNS bound has been observed to be special cases of general
  bounds for a larger family of networks:
  \begin{itemize}
  \item The Generalized Cutset bound for deterministic networks
    \cite{ShomoronyAvestimehr14};
  \item The Chop-and-Roll Directed Cutset bound for general noisy
    networks \cite{KamathKim14}.
  \end{itemize}
\item The GNS-cut for a $k$-unicast network can be approximated to
  within an $O(\log^2 k)$ factor in polynomial time
  \cite{ShanmugamDimakis14}. This has been improved to an
  approximation algorithm that approximates it to within an
  $O(\log k)$ factor in \cite{CKKV15}.
\end{itemize}

\section{Two-unicast is as hard as $k$-unicast}
\label{sec:two-unicast-hard}

A number of recent results have provided evidence that the $k$-unicast
problem for general $k$ is a hard problem \cite{Zeger_insufficiency,
  Zeger_matroid}. We show that in fact, the simplest case $k=2$
encapsulates all the hardness of the general $k$-unicast problem.
Speicifically, we show that the problem of determining whether or not
the rate point $(1,1,\ldots,1)$ is zero-error exactly achievable in a
general $k$-unicast network can be solved if the problem of
determining whether or not the rate point $(k-1,k)$ is zero-error
exactly achievable in a general two-unicast network is solved. For
important technical reasons, we will restrict to zero-error exact
achievability. We will discuss other notions of capacity later in this
section (see Table~\ref{table:reduction} below and also
Remark~\ref{rem:linear-codes} at the end of
Appendix~\ref{subsec:two-unicast-hard}). For simplicity, we will also
restrict to networks with integer link capacities and zero-error
exactly achievable integer rates. These are without loss of
generality. The notion of `hardness' in this section is distinct from
the NP-hardness of computation of GNS-cut that we described in
Sec.~\ref{subsec:np-complete}. We say two-unicast is as hard as
$k$-unicast in the sense that even for two-unicast networks, linear
codes are insufficient for achieving capacity
\cite{Zeger_insufficiency} and Shannon-type information inequalities
are insufficient to characterize capacity \cite{Zeger_matroid}.

\begin{definition}
  For integer rate tuples
  $\underbar{R}=(R_1,R_2,\ldots, R_k),
  \underbar{R}^\prime=(R^\prime_1,R^\prime_2,\ldots, R^\prime_n),$
  we say $\underbar{R}\preceq \underbar{R}^\prime$ if any algorithm or
  computational procedure that can determine whether
  $\underbar{R}^\prime$ is zero-error achievable (or zero-error
  achievable by vector linear coding) in any given network can be used
  to determine whether $\underbar{R}$ is zero-error achievable
  (respectively zero-error achievable by vector linear coding) in any
  given network.
\end{definition}

\begin{figure}[h]
  \centering
  \begin{subfigure}[]{}
    \includegraphics[width=0.3\linewidth]{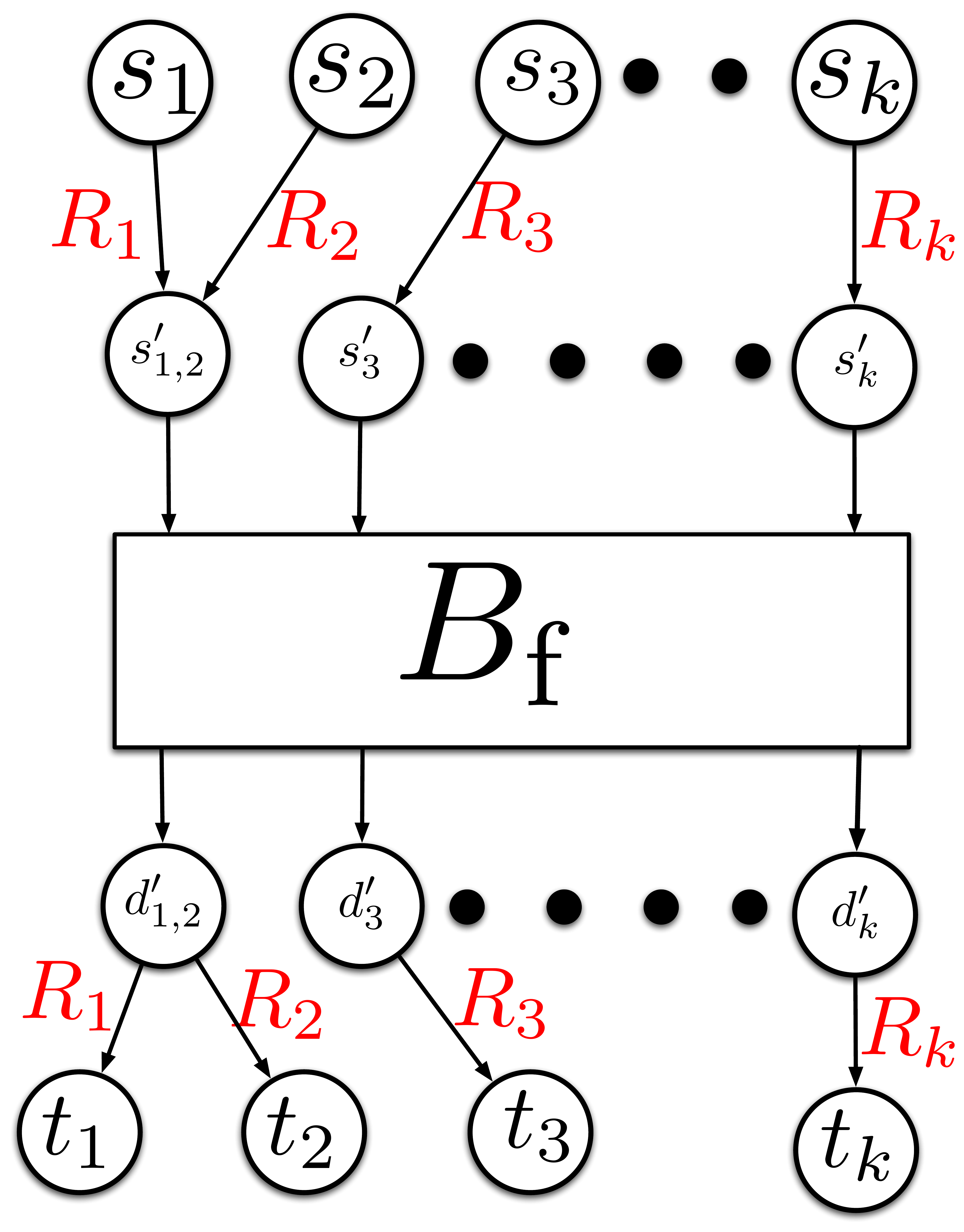}
    % \caption{Fusion}
    \label{fig:fusion}
  \end{subfigure}%
  ~ %add desired spacing between images, e. g. ~, \quad, \qquad etc.
  % (or a blank line to force the subfigure onto a new line)
  \hspace{30pt}
  \begin{subfigure}[]{}
    \includegraphics[width=0.3\linewidth]{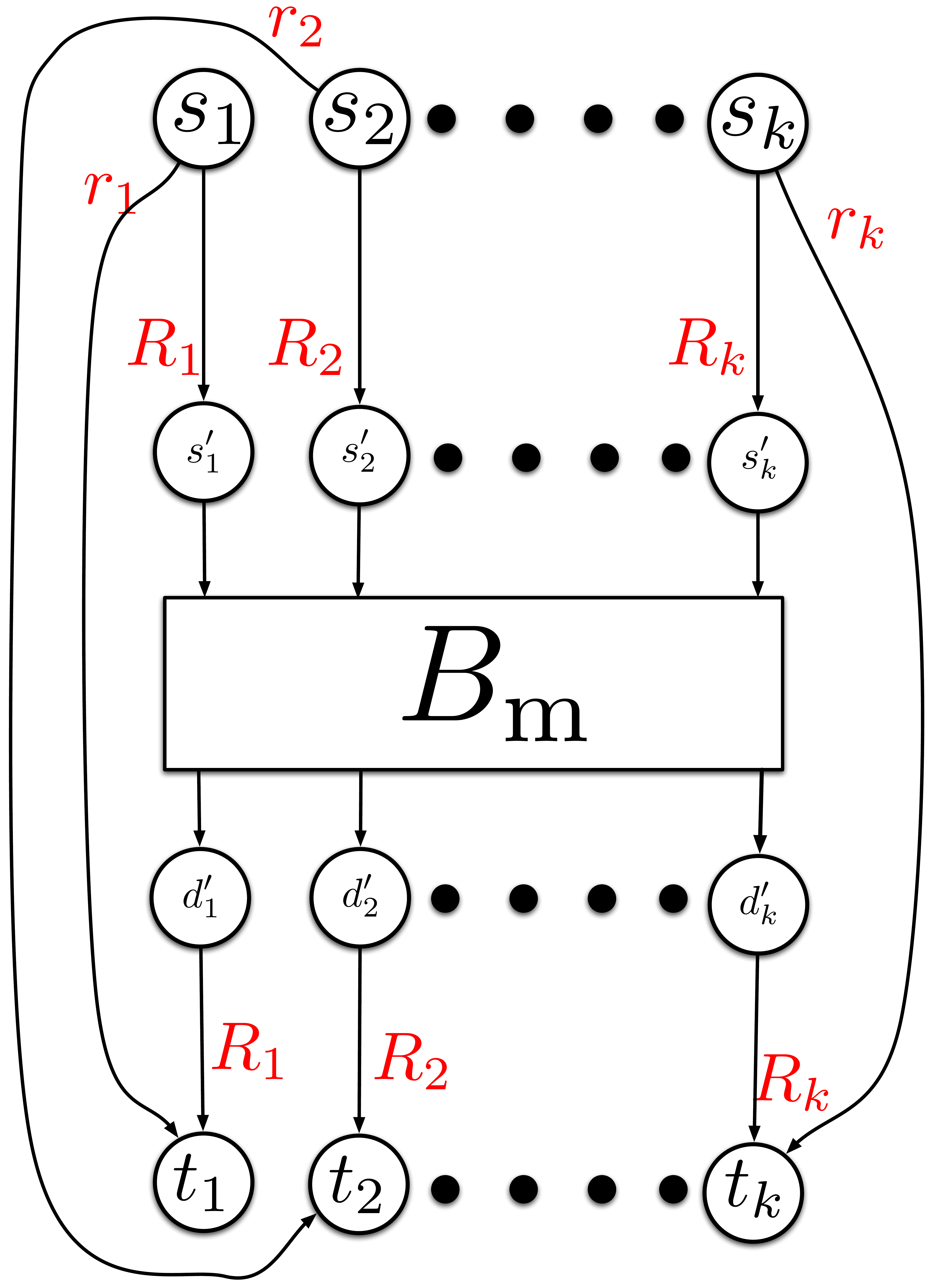}
    % \caption{Monotonicity}
    \label{fig:monotonicity}
  \end{subfigure}
  \caption{A pictorial proof for the Fusion and Monotonicity
    properties of $\preceq.$ The label in red denotes edge
    capacity. In (a), $(R_1+R_2,R_3,\ldots, R_k)$ is zero-error
    achievable in the network block $B_f$ iff $(R_1,R_2,\ldots, R_k)$
    is zero-error achievable in the extended network. In (b),
    $(R_1,R_2,\ldots, R_k)$ is achievable in the network block $B_m$
    iff $(R_1+r_1,R_2+r_2,\ldots, R_k+r_k)$ is achievable in the
    extended network.}
  \label{fig:fusion_and_monotonicity}
\end{figure}

The following properties may be observed very easily (see
Fig.~\ref{fig:fusion_and_monotonicity}):
\begin{itemize}
\item If $\pi$ is any permutation on $\{1,2,\ldots, k\},$ then
  $(R_1,R_2,\ldots, R_k)\preceq (R_{\pi(1)}, R_{\pi(2)}, \ldots,
  R_{\pi(k)}).$
\item \emph{Fusion:}  $(R_1+R_2,R_3,\ldots, R_k) \preceq (R_1,R_2,R_3,\ldots, R_k).$
\item \emph{Monotonicity:} If $R_i, r_i \geq 0$ for each $i,$ then
  $(R_1,,\ldots, R_k)\preceq (R_1+r_1,\ldots, R_k+r_k).$
  % \item Parallel combination: If $(R_1,R_2,\ldots, R_k)\preceq
  %   (R^\prime_1,R^\prime_2,\ldots, R^\prime_n),$ and
  %   $(S_1,S_2,\ldots, S_m)\preceq (S^\prime_1,S^\prime_2,\ldots,
  %   S^\prime_p),$ then $(R_1,R_2,\ldots, R_k,S_1,S_2,\ldots,
  %   S_m)\preceq (R^\prime_1,R^\prime_2,\ldots, R^\prime_n,
  %   S^\prime_1,S^\prime_2,\ldots, S^\prime_p)$
\end{itemize}

  \begin{figure}[h]
    \begin{center}
      \includegraphics[width = 3in, height=!]{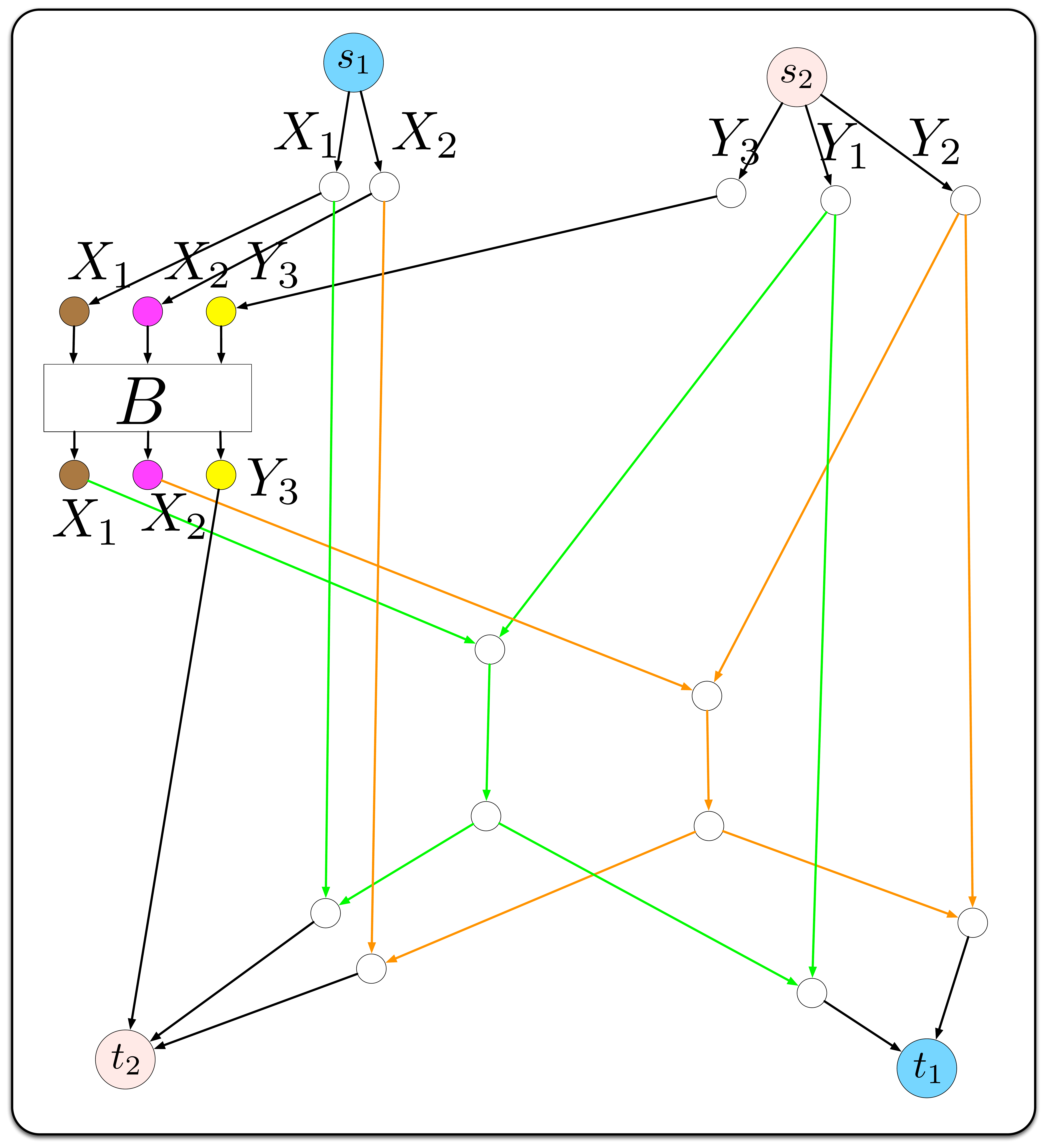}
      \caption{Key idea of the reduction in the proof of
        Theorem~\ref{thm:two-unicast-hard}: (1,1,1) is zero-error
        exactly achievable in the network block $B$ if and only if
        (2,3) is zero-error exactly achievable in the extended
        network}
      \label{fig:two_three_point}
    \end{center}
  \end{figure}

Note that using properties listed above, we can never obtain
$\underbar{R}\preceq \underbar{R}^\prime$ where the number of non-zero
entries in $\underbar{R}^\prime$ is strictly less than that in
$\underbar{R},$ i.e. these properties still suggest that determining
the capacity of a two-unicast network can be strictly easier than
determining that of a $k$-unicast network with $k>2.$ Our result,
Theorem~\ref{thm:two-unicast-hard}, shows that this is not the case.
For pedagogical value, we state and prove the theorem in its full
generality, proposing a reduction from a $(k+m)$-unicast network to an
$(m+1)$-unicast network.

\begin{theorem}
  \label{thm:two-unicast-hard}
  For $k\geq 2, m\geq 1,$ let $R_1, R_2, \ldots, R_k, R_{k+1}, \ldots,
  R_{k+m}, r_1, r_2, \ldots, r_m\geq 0$ be non-negative integers such
  that $\sum_{i=1}^k R_i = \sum_{j=1}^m r_j.$ Then,
  \begin{align}
    (R_1,R_2,\ldots, R_k, R_{k+1}, R_{k+2}, \ldots, R_{k+m}) \preceq
    (\sum_{i=1}^k R_i, R_{k+1}+r_1,R_{k+2}+r_2,\ldots, R_{k+m} + r_m)
  \end{align}
\end{theorem}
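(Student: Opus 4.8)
The plan is to prove the single reduction inequality in the statement; the hardness assertions advertised in the introduction then follow by combining it with the permutation, Fusion and Monotonicity properties of $\preceq$ and the known results of \cite{Zeger_insufficiency, Zeger_matroid}. To establish $\underbar{R}\preceq\underbar{R}'$ for the two tuples it suffices to exhibit a transformation that, given any network $\mathcal{M}$ with source--destination pairs $(s_i;t_i)_{i=1}^{k+m}$, produces an \emph{extended} network $\mathcal{M}'$ on $m+1$ pairs with the property that $(\sum_{i=1}^k R_i,\,R_{k+1}+r_1,\dots,R_{k+m}+r_m)$ is zero-error exactly achievable in $\mathcal{M}'$ if and only if $(R_1,\dots,R_{k+m})$ is zero-error exactly achievable in $\mathcal{M}$; the oracle's answer on $\mathcal{M}'$ then answers the question on $\mathcal{M}$, and the same construction being $\mathbb{F}_2$-linear handles the vector-linear variant in parallel. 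First I would normalize $\mathcal{M}$ (without loss of generality) so that each source $s_i$ has a single out-edge of capacity $R_i$ and each destination $t_i$ a single in-edge of capacity $R_i$, which makes the entropy bookkeeping below clean. The network $\mathcal{M}'$ is built by leaving $\mathcal{M}$ intact and attaching a gadget: a new super-source $\sigma$ (new source $1$, rate $\sum_{i=1}^k R_i$) feeding $s_i$ through a capacity-$R_i$ edge for each $i\le k$; the old sources $s_{k+1},\dots,s_{k+m}$ reused as new sources $2,\dots,m+1$ with rates bumped by $r_1,\dots,r_m$; a new super-sink $\tau$ (new destination $1$); and a small ``mixer'' layer that collects the outputs of $t_1,\dots,t_k$ together with the $r_j$-sized extra streams generated by the bumped sources, combines them by XOR, and redistributes the result to $\tau$ and to $t_{k+1},\dots,t_{k+m}$. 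The identity $\sum_{i=1}^k R_i=\sum_{j=1}^m r_j$ is exactly what makes the capacities of this mixer layer match up with no slack.

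For the forward direction, which is the easy one, I would take a zero-error scheme for $(R_1,\dots,R_{k+m})$ in $\mathcal{M}$ and simulate it in $\mathcal{M}'$: $\sigma$ splits its message into the $R_i$-sized pieces $A_i$ and routes $A_i$ to $s_i$; each bumped source splits its message into an $R_{k+j}$-sized ``core'' (fed into $\mathcal{M}$ exactly as $s_{k+j}$ would) and an $r_j$-sized ``extra''; $\mathcal{M}$'s scheme is run verbatim; the outputs $A_1,\dots,A_k$ that appear at $t_1,\dots,t_k$ and the extra streams are sent into the mixer, which reassembles $\sigma$'s message at $\tau$ and returns the extras to the appropriate bumped destinations. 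All capacity constraints are met by the balance condition and all recoveries are exact.

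The converse is where the bulk of the work lies. Starting from an arbitrary zero-error exact scheme for $\mathcal{M}'$, I would first observe that the rates it achieves \emph{saturate} every cut of the gadget: the cut $\{\sigma\to s_i\}_{i\le k}$, the cut consisting of the edges into $\tau$, the cuts out of and into each bumped source and destination, and the relevant (cutset and GNS) edge-cut bounds all have total capacity equal to the sum of the rates they must carry. Because the scheme is zero-error, Fano's inequality degenerates to exact entropy equalities on each of these cuts, and one can propagate these equalities through the acyclic gadget to pin down, edge by edge, what each gadget edge carries: $X_{\sigma\to s_i}$ is an $R_i$-symbol coordinate $A_i$ of $\sigma$'s message (so $M_\sigma=(A_1,\dots,A_k)$ after relabeling), $s_i$'s feed into $\mathcal{M}$ is a bijective function of $A_i$ alone, each bumped source's feed into $\mathcal{M}$ is a bijective function of an $R_{k+j}$-coordinate of its message, each bumped destination recovers that coordinate from its $\mathcal{M}$-input alone, and --- the crucial point --- the mixer forces the output of $t_i$ inside $\mathcal{M}$ to be a bijective function of exactly $A_i$. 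Granting this, the restriction of the $\mathcal{M}'$-scheme to the edges of $\mathcal{M}$ is, after the relabelings, a bona fide zero-error scheme for $(R_1,\dots,R_{k+m})$ in $\mathcal{M}$, which closes the equivalence.

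The main obstacle is exactly that ``crucial point''. In a general $\mathcal{M}$ nothing prevents information fed at $s_i$ from reaching $t_\ell$ for $\ell\ne i$, so the super-sink's recovery of $M_\sigma$ by itself only forces the $\mathcal{M}$-outputs of $t_1,\dots,t_k$ to \emph{jointly} re-encode $(A_1,\dots,A_k)$; a non-diagonal re-encoding cannot be unwound into a valid unicast scheme for $\mathcal{M}$ since the sources are independent. Making the gadget force the diagonal structure is what the XOR-mixer together with the rate bumps $r_j$ are designed to do, the balance $\sum_i R_i=\sum_j r_j$ leaving no room for cross-mixing, and verifying this via the saturated-cut / zero-error equality argument (using a GNS-type cut at the key step) is the technical heart of the proof. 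I would also note that the zero-error hypothesis is essential --- it is what turns Fano's inequality into the exact equalities the argument relies on, so the statement is made for zero-error exact achievability --- that the restriction to integer rates and the handling of floors are routine in this regime, and that every gadget operation is $\mathbb{F}_2$-linear, so the argument simultaneously yields the vector-linear version needed to transfer the results of \cite{Zeger_insufficiency, Zeger_matroid} to two-unicast networks.
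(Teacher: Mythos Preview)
Your high-level plan matches the paper's: build an extended $(m{+}1)$-unicast network around the given $(k{+}m)$-unicast block, show the equivalence, and note that the construction is linear so the vector-linear variant follows. You also correctly identify the main obstacle in the converse --- forcing the output at $t_i$ to depend only on the piece $A_i$ rather than on some non-diagonal re-encoding of $(A_1,\dots,A_k)$.

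However, the proposal stops precisely where the real work begins, and the hints you give for how to finish do not point at the mechanism that actually resolves the obstacle. Two concrete gaps:

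\emph{The gadget is underspecified.} ``A small mixer layer that XORs things'' is not enough: since the $R_i$'s and the $r_j$'s need not match one-to-one, you first need the \emph{coarsest common refinement} $c_{(i,j)}$ of the two partitions of $\sum_i R_i=\sum_j r_j$, and then a separate \emph{butterfly} component of width $c_{(i,j)}$ for each piece, wiring $x_{(i,j)}$ (a $c_{(i,j)}$-chunk of $A_i$) and $y_{(i,j)}$ (a $c_{(i,j)}$-chunk of the $j$-th extra stream) through a single capacity-$c_{(i,j)}$ bottleneck edge to both $\tau$ and $t_{k+j}$. This refinement is what makes the sizes line up; without it the mixer cannot be built with exactly matching capacities.

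\emph{The converse mechanism is not a GNS-type cut.} Saturated edge-cut bounds only tell you that certain \emph{joint} entropies are maximal; they do not, by themselves, rule out cross-mixing between the $A_i$'s inside the block. What the paper uses instead is the butterfly bottleneck: since each destination of a butterfly must decode its own side, the middle edge $W_{(i,j)}$ satisfies $H(W_{(i,j)}|X_{(i,j)})=H(W_{(i,j)}|Y_{(i,j)})=c_{(i,j)}$ yet $H(W_{(i,j)})\le c_{(i,j)}$. A short independence lemma (if $H(A|B,D)=0$ with $B,C,D$ independent, then $H(A|B,C)=0\Rightarrow H(A|B)=0$ and $H(B|A,C)=0\Rightarrow H(B|A)=0$) then pins down $H(W_{(i,j)}|X_{(i,j)},Y_{(i,j)})=0$, whence $H(X_{(i,j)}|W_{(i,j)},Y_{(i,j)})=0$, whence $H(X_{(i,j)}|Z_{(i,j)})=0$. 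This is what forces the diagonal structure; the argument is local to each butterfly and does not use any cutset or GNS inequality on the block $B$. Your ``saturated-cut / GNS-type cut'' sketch would not reach this conclusion, because once you are inside $B$ there is no edge-cut that separates the $A_i$'s from one another.
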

\vspace{5pt}

{\large 
  \begin{table*}[ht] % table* gives us a table that spans both
                     % columns, just like figure* gives us a figure
                     % that spans both columns
    {\begin{center}
      \begin{tabularx}{\textwidth}{Z|WV} % V here is a newly
                                % defined column type using the array
                                % package        
        %\hline
        & & \\
        & \begin{center}Vector Linear codes\end{center} & \begin{center}General codes\end{center}
        \\
        & & \\
        \hline 
        & & \\ 
        \begin{center}Zero-error Exact Achievability\end{center} 

      & \begin{center}\begin{framed}\checkmark\end{framed}\end{center} 

        & \begin{center}\begin{framed}\checkmark\end{framed} \end{center} 

        \\
        % \hline 
        & & \\
        \begin{center} Zero-error Capacity \end{center} 

        & \begin{center}\begin{framed}\checkmark\end{framed}\end{center}

        & \begin{center}\begin{framed}?\end{framed}\end{center} 
      
      \\
      & & \\
        \begin{center}Shannon Capacity\end{center} 

      & \begin{center}\begin{framed}\checkmark\end{framed}\end{center}
 
       & \begin{center}\begin{framed}?\end{framed}\end{center} 

        \\
        & & \\
        %\hline

      \end{tabularx}
    \end{center}
  }
  \caption{The notions of capacity for which
    Theorem~\ref{thm:two-unicast-hard} holds are shown by \checkmark
    's. ?'s show the notions for which it is unclear if the proposed
    reduction works.}
    \label{table:reduction}
  \end{table*}
}

The main motivation for Theorem~\ref{thm:two-unicast-hard} is the
following implication:
\begin{align}(R_1,R_2,\ldots, R_k) \preceq \left(\sum_{i=1}^{k-1} R_i,
    \sum_{i=1}^k R_i\right),\end{align} 

i.e. the general two-unicast
problem is as hard as the general multiple-unicast problem. Using the
monotonicity property of $\preceq,$ this suggests that the difficulty
of determining achievability of a rate tuple for the $k$-unicast
problem is related more to the magnitude of the rates in the tuple
rather than the size of $k.$ Moreover, we have
\begin{align}(\underbrace{1,1,1,\ldots, 1}_{\text{$k$ times}})\preceq
  (k-1,k),\end{align} i.e. solving the $k$-unicast problem with
unit-rate (which is known to be a hard problem for large $k$
\cite{Zeger_insufficiency, Zeger_matroid}) is no harder than solving
the two-unicast problem with rates $(k-1,k).$ Furthermore, the
construction in our paper along with the networks in
\cite{Zeger_insufficiency}, \cite{Zeger_matroid},
\cite{Zeger_nonreversibility} can be used to show the
following.

\begin{theorem}\label{thm:insufficiency} There exists a two-unicast network in which a
  non-linear code can achieve the rate pair $(9,10)$ but no linear
  code can.\end{theorem}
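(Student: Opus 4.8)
The plan is to feed a known linear‑coding‑insufficiency instance into the reduction of Theorem~\ref{thm:two-unicast-hard}. From \cite{Zeger_insufficiency}, combined with the multiple‑unicast reformulation of \cite{Zeger_nonreversibility}, there is a network in which the all‑ones rate point is zero‑error exactly achievable (solvable) by a general code yet is \emph{not} zero‑error exactly achievable by any vector linear code over any finite field. Possibly after padding with a few dummy unit‑rate sessions, each realized by a single dedicated capacity‑$1$ edge — which changes neither solvability nor linear solvability — I would take this to be a $10$‑unicast network $B$ that has the separation property at the point $(\underbrace{1,1,\ldots,1}_{10})$.

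Next I would invoke the special case $(\underbrace{1,\ldots,1}_{k})\preceq(k-1,k)$ of Theorem~\ref{thm:two-unicast-hard} with $k=10$. Unwinding the proof of that theorem, the construction attaches the fusion and monotonicity gadgets of Fig.~\ref{fig:fusion_and_monotonicity} to $B$ and yields a two‑unicast network $\mathcal{G}'$ for which $(9,10)$ is zero‑error exactly achievable if and only if $(1,1,\ldots,1)$ is zero‑error exactly achievable in $B$; crucially, the same equivalence holds verbatim with ``achievable'' replaced by ``achievable by vector linear coding'' throughout — these are exactly the two checkmarks in the ``Zero‑error Exact Achievability'' row of Table~\ref{table:reduction}.

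Putting the two halves together: in $\mathcal{G}'$ the rate pair $(9,10)$ is achieved by some code, which is therefore necessarily non‑linear, while no vector linear code over any finite field achieves $(9,10)$; and since the linear zero‑error capacity coincides with the linear zero‑error exactly achievable region (Remark~\ref{rem:linear-codes}), the same conclusion holds in the zero‑error capacity sense, which is the sense in which the statement is meant.

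The routine content is the bookkeeping of the reduction: arranging the fusion gadget so that it merges the first nine unit sessions of $B$ into a single source--destination pair $(s_1,t_1)$ at rate $9$, and the monotonicity gadget so that it lifts the tenth session $(s_2,t_2)$ from rate $1$ to rate $1+r_1=10$ with $r_1=9=\sum_{i=1}^{9}R_i$ (in the notation of Theorem~\ref{thm:two-unicast-hard} one uses its parameters $k=9$, $m=1$); and verifying that a linear solution of $\mathcal{G}'$ restricts to a linear solution of $B$ after deleting the gadget edges, while a linear solution of $B$ extends to one of $\mathcal{G}'$ by routing (a linear operation) on the gadget edges. The genuinely delicate point — and the only reason the insufficiency transfers at all — is that the equivalence furnished by Theorem~\ref{thm:two-unicast-hard} is \emph{type‑preserving}: it holds simultaneously but separately in the ``general codes'' and the ``vector linear codes'' columns of Table~\ref{table:reduction}. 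This rests on the capacity choices inside the gadgets forcing every code, linear or not, to behave essentially deterministically (routing‑like) on the gadget edges, so that the only remaining freedom is precisely that of coding in $B$. I expect no further obstacle once this type‑preservation is pinned down.
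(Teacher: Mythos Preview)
Your overall strategy is correct and is exactly the paper's: take the Dougherty--Freiling--Zeger network from \cite{Zeger_insufficiency}, convert it to a $10$-unicast instance via the construction in \cite{Zeger_nonreversibility} (no dummy padding is needed---that conversion already yields ten unit-rate sessions), and then apply Theorem~\ref{thm:two-unicast-hard} with $k=10$, using its type-preservation (the two checkmarks in the first row of Table~\ref{table:reduction}) to carry the linear/nonlinear separation from the all-ones point to $(9,10)$.

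One correction worth making: your ``routine content'' paragraph misidentifies the gadget. The reduction of Theorem~\ref{thm:two-unicast-hard} is \emph{not} built from the fusion and monotonicity gadgets of Fig.~\ref{fig:fusion_and_monotonicity}; the paper explicitly notes just before the theorem statement that those two operations can never decrease the number of nonzero rate entries, so they cannot by themselves collapse a $10$-unicast instance to a $2$-unicast one. The actual construction is the butterfly-based extension of Fig.~\ref{fig:two_three_point} and Fig.~\ref{fig:general_construction}, and the converse direction (a solution on the extended network forces a solution on the block $B$) is established by an entropy argument rather than by ``restricting after deleting gadget edges.'' Since you ultimately invoke Theorem~\ref{thm:two-unicast-hard} as a black box this does not break your proof, but the description of what is inside the box should be amended.
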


\begin{theorem}\label{thm:non-Shannon} There exists a two-unicast network in which non-Shannon
  information inequalities can rule out achievability of the rate pair
  $(5,6),$ but the tighest outer bound obtained using only
  Shannon-type information inequalities cannot.\end{theorem}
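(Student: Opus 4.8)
The plan is to push the V\'amos-matroid network of \cite{Zeger_matroid} through the reduction that proves Theorem~\ref{thm:two-unicast-hard}. Recall that \cite{Zeger_matroid} exhibits a $6$-unicast network $\mathcal N$ with unit edge capacities and unit demands for which (a) the rate point $(1,1,1,1,1,1)$ lies in the region certified feasible by the LP bound of \cite{Yeung99}---there is a polymatroidal vector on the six messages and the edge variables of $\mathcal N$ satisfying all Shannon-type inequalities together with the functional, independence, decoding and capacity constraints of $\mathcal N$---while (b) adjoining the Zhang--Yeung non-Shannon inequality to that linear program makes it infeasible, so non-Shannon inequalities do rule $(1,\ldots,1)$ out. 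Applying the specialization $(\underbrace{1,\ldots,1}_{6\text{ times}})\preceq(5,6)$ of Theorem~\ref{thm:two-unicast-hard} to $\mathcal N$ produces a two-unicast network $\mathcal N'$: it is $\mathcal N$ with the fusion and monotonicity gadgets attached---a new rate-$5$ super-source feeding five of the original sources through unit-capacity edges, a new super-sink collecting from the corresponding five sinks through unit-capacity edges, and a capacity-$5$ direct edge carrying the rate bump on the sixth session---and by Theorem~\ref{thm:two-unicast-hard} the pair $(5,6)$ is zero-error exactly achievable in $\mathcal N'$ if and only if $(1,\ldots,1)$ is zero-error exactly achievable in $\mathcal N$. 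I claim $\mathcal N'$ witnesses Theorem~\ref{thm:non-Shannon}, which requires two things: that the tightest Shannon-type outer bound does not exclude $(5,6)$ for $\mathcal N'$, and that a non-Shannon inequality does.

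For the first, I would \emph{lift} the feasible polymatroidal vector $h$ of (a) to a vector $h'$ on the variables of $\mathcal N'$. The gadget edges out of the super-source and into the super-sink are assigned exact copies of the five unit messages of $\mathcal N$ they carry---legitimate on the sink side precisely because each original sink decodes its own message in $\mathcal N$, so that message is available to be forwarded---the direct edge and the surplus part of the sixth message are assigned a fresh rank-$5$ piece independent of everything already present, and the two super-messages are the obvious joins of the relevant pieces. Adjoining copies, joins of already-independent elements, and fresh independent pieces to a polymatroid again produces a polymatroid, so $h'$ satisfies all Shannon-type inequalities; a direct check shows it meets every functional, independence, capacity and decoding constraint of $\mathcal N'$ at $(R_1,R_2)=(5,6)$. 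Hence the LP bound of \cite{Yeung99}, the tightest bound obtainable from Shannon-type inequalities alone, does not exclude $(5,6)$ for $\mathcal N'$.

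For the second, suppose toward a contradiction that some vector $g$ on the variables of $\mathcal N'$ satisfies all Shannon-type inequalities, the Zhang--Yeung inequality on every four of its variables, and all constraints of $\mathcal N'$ at $(5,6)$. Restrict $g$ to the variables of $\mathcal N$, which sit inside those of $\mathcal N'$. Restriction preserves satisfaction of any inequality involving only the retained variables, so the restriction still obeys all Shannon-type and all Zhang--Yeung inequalities; it inherits the internal functional and capacity constraints of $\mathcal N$ verbatim; the unit-capacity gadget edges leave no slack, forcing the six original messages to be mutually independent of entropy one; and the cut-and-data-processing argument establishing the backward direction of the equivalence in Theorem~\ref{thm:two-unicast-hard}---built only from the chain rule, data processing and edge-cut inequalities---forces $H(\text{message }i\mid\mathsf{In}(b_i))=0$ at each original sink $b_i$, i.e.\ the decoding constraints of $\mathcal N$. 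The restriction is therefore a feasible point of the Shannon-plus-Zhang--Yeung linear program for $\mathcal N$ at $(1,\ldots,1)$, contradicting (b). So the non-Shannon outer bound rules $(5,6)$ out of $\mathcal N'$ while the Shannon-only outer bound does not, which is Theorem~\ref{thm:non-Shannon}.

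The one genuinely non-routine point---used implicitly in both halves---is that the combinatorial equivalence of Theorem~\ref{thm:two-unicast-hard} \emph{lifts} from coding schemes to abstract entropy vectors. On the rule-out side this means every step of that reduction (each gadget edge is forced to carry its designated message, so each original sink must recover its own message) is an inequality of Shannon type, hence stays valid for any vector satisfying the Shannon-type and Zhang--Yeung inequalities; on the feasibility side it means the gadget capacities are arranged so that the copy-and-independent-piece extension of $h$ is an honest polymatroid, not merely a formal relabelling. Both facts are implicit in the proof of Theorem~\ref{thm:two-unicast-hard}; once they are spelled out, the remainder is mechanical.
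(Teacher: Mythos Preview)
Your description of the two-unicast network $\mathcal N'$ is not the construction that proves Theorem~\ref{thm:two-unicast-hard}. What you describe---a super-source feeding five of the original sources, a super-sink collecting from five of the original sinks, and a parallel capacity-$5$ edge on the sixth session---is precisely the fusion-plus-monotonicity gadget of Fig.~\ref{fig:fusion_and_monotonicity}, and the paper explicitly notes (just before the statement of Theorem~\ref{thm:two-unicast-hard}) that those two operations alone \emph{cannot} reduce the number of nonzero sessions. Indeed your construction does not give the claimed equivalence: if $(5,6)$ is achievable in your $\mathcal N'$, the five rate-$1$ message bits of session~$A$ need not split one-per-original-source, so nothing forces $(1,\ldots,1)$ to be achievable in $\mathcal N$. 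The actual reduction (Fig.~\ref{fig:two_three_point} and Fig.~\ref{fig:general_construction}) attaches a \emph{butterfly} component for each original source being absorbed; it is the butterfly structure, together with the tight rate constraints it creates, that forces $Z_{(i,j)}$ to determine $X_{(i,j)}$ and hence yields the backward implication.

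Once you use the correct construction, your ``copies and fresh independent pieces'' extension of the V\'amos polymatroid no longer suffices: the butterfly edge $W_{(i,j)}$ carries (morally) $X_{(i,j)}\oplus Y_{(i,j)}$, and assigning it an entropy consistent with all Shannon inequalities and all functional constraints is exactly the nontrivial step. The paper handles this not by a direct extension but by matroid amalgamation: the gadget portion outside the block admits a linear code and is therefore (linearly) matroidal; its restriction to the six block-interface variables is the full-rank uniform matroid, which is modular; Lemma~IV.7 of \cite{Zeger_matroid} then guarantees a proper amalgam with the V\'amos matroid, making the whole network matroidal and hence immune to Shannon-type bounds. For the other direction the paper does not restrict a hypothetical Shannon+ZY-feasible vector; it simply invokes the (code-level) equivalence of Theorem~\ref{thm:two-unicast-hard} to conclude that $(5,6)$ is not achievable, since $(1,\ldots,1)$ is not achievable in the V\'amos network---the latter having been established in \cite{Zeger_matroid} via a non-Shannon inequality. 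Your restriction idea for that direction is plausible in spirit, but you would need to rerun the entropy argument of Appendix~\ref{subsec:two-unicast-hard} at the polymatroid level with the correct butterfly gadgets, not the fusion/monotonicity ones.
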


We leave the proofs of
Theorems~\ref{thm:two-unicast-hard},~\ref{thm:insufficiency},~\ref{thm:non-Shannon}
to
Appendix~\ref{subsec:two-unicast-hard},~\ref{subsec:insufficiency},~\ref{subsec:non-Shannon}
respectively. We only note here
that the key idea in the proof is a network reduction as shown in
Fig.~\ref{fig:two_three_point}.

We have only considered the notion of zero-error achievability in this
section. We can show that the reduction proposed in the proof of
Theorem~\ref{thm:two-unicast-hard} works for the notions of capacity
as shown in Table~\ref{table:reduction} (see
Remark~\ref{rem:linear-codes} at end of
Appendix~\ref{subsec:two-unicast-hard}).  The reduction works
successfully for the \checkmark 's. It is not known whether the
reduction will work successfully for the notions of capacity given by
?'s.

\section*{Acknowledgements}

VA and SK acknowledge research support during the period this work was
done from the ARO MURI grant W911NF- 08-1-0233, ``Tools for the
Analysis and Design of Complex Multi-Scale Networks'', from the NSF
grant CNS-0910702, and from the NSF Science and Technology Center
grant CCF-0939370, ``Science of Information''. VA also acknowledges
research support from the NSF grant ECCS-1343398, from Marvell
Semiconductor Inc., and from the U.C. Discovery program.

The work of CCW was supported in part by NSF grant CCF-0845968.

\bibliographystyle{IEEE} 
\bibliography{two_unicast_biblio}

\appendix
\subsection{Proof of the Two-Multicast Lemma (Lemma~\ref{lem:two-multicast})}
\label{subsec:lem_two_multicast}

\begin{proof}
  The necessity of these conditions is obvious from the cutset bound
  \cite{ElGamal81}. For proving sufficiency, fix a rate pair
  $(R_1,R_2)$ that satisfies these conditions and consider a new
  network $\tilde{\mathcal{N}}$ obtained by adding a super-source $s$
  with two outgoing edges to $s_1$ and $s_2$ with link capacities
  $R_1$ and $R_2$ respectively as shown in
  Fig.~\ref{fig:two_multicast}. Note that capacities of the newly
  added edges in $\tilde{\mathcal{N}}$ depend on the chosen rate pair
  $(R_1,R_2).$

  \begin{figure}[h]
    \begin{center}
      \includegraphics[width = 3in, height=!]{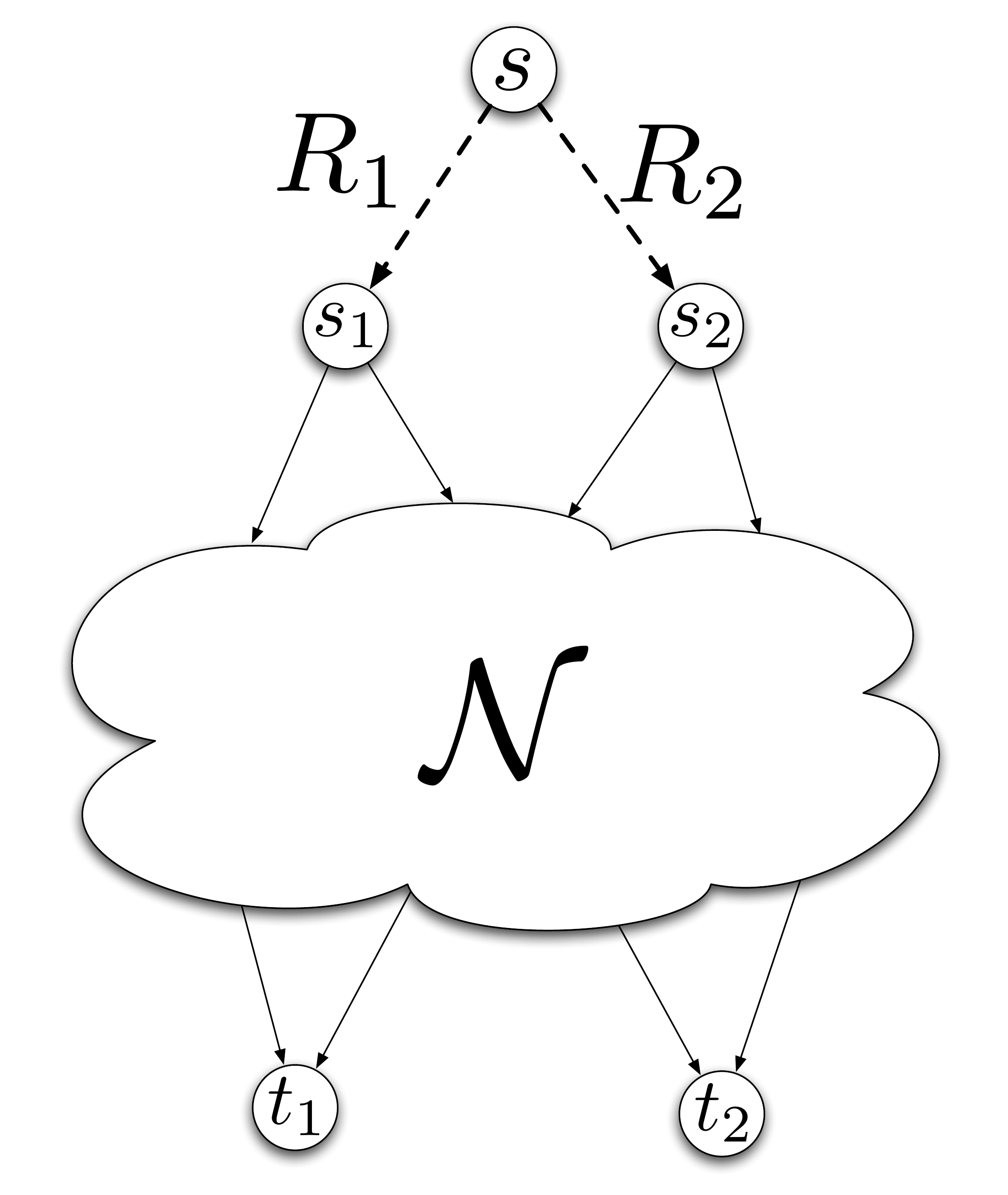}
      \caption{Two-multicast network $\tilde{\mathcal{N}}$ is obtained
        from a given two-unicast network and a specified rate pair $(R_1,R_2).$}
      \label{fig:two_multicast}
    \end{center}
  \end{figure}

  We use the single source multicast theorem \cite{ACLY00},
  \cite{KoetterMedard} on $\tilde{\mathcal{N}}$ to infer the existence
  of a linear coding scheme for super-source $s$ multicasting at rate
  $R_1+R_2$ to the destinations $t_1$ and $t_2.$ This allows us to
  construct a two-multicast scheme in the original network
  $\mathcal{N}$ achieving the desired rate pair.
\end{proof}

\subsection{Proof of Theorem~\ref{thm:gns-not-obviously-loose}}
\label{subsec:proof-gns-not-obvi-loose}
\begin{proof}
  Define $C_i(S):=\min\{C(T): T\subseteq S,\ T \mbox{ is an }s_i-t_i
  \mbox{ cut} \}$ for $i=1,2$ as before.

  As $S$ is a minimal GNS-cut, the GNS outer bound for
  $(\mathcal{G},\underbar{C})$ is given by
  \begin{align}\label{eq:the-gns-bound}
    R_1 \leq C_1(S), R_2 \leq C_2(S), R_1+R_2 \leq C(S).
  \end{align}

  We will assume that $c_e$ is an integer for each $e\in S$ and
  describe scalar linear coding schemes over the binary field
  $\mathbb{F}_2$ with block length $N=1$ achieving the GNS outer
  bound. Having done this, it is easy to see that the theorem would
  also hold for choice of non-negative rational and thus, also
  non-negative real choice of $c_e, e\in S.$ Henceforth, we will
  imagine a link of capacity $c_e$ as having $c_e$ unit capacity edges
  connected in parallel. This change could be made in the graph and in
  this proof, we will use $\mathcal{G}$ to denote the graph with all edges
  having unit capacity, possibly having multiple edges in parallel
  connecting two vertices.

  Note that a given GNS-cut $S$ is minimal if and only if $S\setminus
  e$ is not a GNS-cut for each $e\in S.$ This allows us to partition
  the edges in $S$ by their connectivity in
  $\mathcal{G}\setminus\{S\setminus e\}$ as $S_1^1\cup S_1^2\cup
  S_1^{12}\cup S_2^1\cup S_2^2\cup S_2^{12}\cup S_{12}^1\cup
  S_{12}^2\cup S_{12}^{12}$ where $e\in S$ lies in $S_x^y$ if, in the
  graph $\mathcal{G}\setminus\{S\setminus e\},$ $\mathsf{tail}(e)$ is reachable
  only from source indices $x$ and $\mathsf{head}(e)$ is capable of reaching
  only destination indices $y.$ Eg. $S_{12}^2$ contains edge $e$ in
  $S$ if and only if in $\mathcal{G}\setminus\{S\setminus e\},$ we have
  that $\mathsf{tail}(e)$ is reachable from $s_1,s_2$ and $\mathsf{head}(e)$ can reach
  $t_2,$ but cannot reach $t_1.$

  Define $\hat{S}_1:=S_1^1\cup S_1^{12}\cup S_{12}^1\cup S_{12}^{12}$
  and $\hat{S}_2:=S_2^2\cup S_2^{12}\cup S_{12}^2\cup S_{12}^{12}.$
  Thus, $\hat{S}_i,$ for $i=1,2$ is the set of edges in $S$ which have
  their tails reachable from $s_i$ and their heads reaching $t_i$ by
  paths of infinite capacity. We will show
  $C_i(S)=C(\hat{S}_i)+c_{\mathcal{G}\setminus\hat{S}_i}(s_i;t_i),$ for
  $i=1,2.$ By the Max Flow Min Cut Theorem, there exists a flow of
  value $C_i(S)$ from $s_i$ to $t_i$ in $\mathcal{G}.$ At most
  $C(\hat{S}_i)$ of the flow goes through edges in $\hat{S}_i.$ Thus,
  there exists a flow of value at least $C_i(S)-C(\hat{S}_i)$ in
  $\mathcal{G}\setminus\hat{S}_i.$ So,
  $c_{\mathcal{G}\setminus\hat{S}_i}(s_i;t_i)\geq C_i(S)-C(\hat{S}_i).$
  Now, consider $T_i\subseteq S$ in $\mathcal{G}$ such that $T_i$ is an
  $s_i-t_i$ cut and $C(T_i)=C_i(S).$ Then, since $\hat{S}_i\subseteq
  T_i,$ we have that $T_i\setminus\hat{S}_i$ is an $s_i-t_i$ cut in
  $\mathcal{G}\setminus\hat{S}_i.$ Thus,
  $c_{\mathcal{G}\setminus\hat{S}_i}(s_i;t_i)\leq C(T_i\setminus\hat{S}_i)
  = C(T_i)-C(\hat{S}_i) = C_i(S)-C(\hat{S}_i).$

  \textbf{Case I}: $S$ is a minimal GNS-cut such that $\mathcal{G}\setminus
  S$ has no paths from either of $s_1,s_2$ to $t_1,t_2.$ In this case,
  $S_1^2, S_2^1=\emptyset$ by minimality of $S.$ Thus,
  $C_1(S)+C_2(S)\geq C(\hat{S}_1)+C(\hat{S}_2) =
  C(S)+C(S_{12}^{12})\geq C(S).$ So, in this case, the GNS outer bound
  \eqref{eq:the-gns-bound} is a pentagonal region and we have to show
  achievability of the two corner points $(C_1(S), C(S)-C_1(S))$ and
  $(C(S)-C_2(S),C_2(S)).$

  Consider the following scheme. Edges in
  $S_1^1,S_1^{12},S_{12}^1,S_{12}^{12}$ forward $s_1$'s message bits
  to $t_1$ and edges in $S_2^2, S_{12}^2, S_2^{12}$ forward $s_2$'s
  message bits to $t_2.$ This achieves {\small
    \begin{align*}
      R_1 &= C(\hat{S}_1) = C(S_1^1)+C(S_1^{12})+C(S_{12}^1)+C(S_{12}^{12}), \\
      R_2 &= C(S_2^2)+C(S_{12}^2)+C(S_2^{12}).
    \end{align*} } Note that we have $R_1+R_2=C(S)$ for this rate
  pair. Now, we will increase $R_1$ up to $C_1(S)$ while preserving
  this sum rate. Construct $c_{\mathcal{G}\setminus\hat{S}_1}(s_1;t_1)$
  unit capacity edge-disjoint paths from $s_1$ to $t_1$ in
  $\mathcal{G}\setminus\hat{S}_1.$ This gives us
  $c_{\mathcal{G}\setminus\hat{S}_1}(s_1;t_1)$ paths in $\mathcal{G}$ such that
  none of them use any edge in $\hat{S}_1.$ Any such path encounters a
  first finite capacity edge from $S_{12}^2$ and a last finite
  capacity edge from $S_2^{12}.$ The intermediate finite capacity
  edges may be assumed to lie in $S_2^2$ only. If intermediate finite
  capacity edges lie in $S_{12}^2$ or $S_{2}^{12},$ we can modify the
  path so that this is not the case, while preserving the
  edge-disjointness property.
  % This can be done as follows. Suppose we have a path $\mathcal{P}.$
  % Consider the last finite capacity edge $e\in S_{12}^2$ along the
  % path (possibly none). Now, use an infinite capacity path
  % $\mathcal{P}_1$ from $s_1$ to reach $\mathsf{tail}(e).$ Find the earliest
  % finite capacity edge $e^\prime\in S_2^{12}$ after $e$ on the path
  % $\mathcal{P}.$ Find an infinite capacity path $\mathcal{P}_2$ from
  % $\mathsf{head}(e^\prime)$ to $t_1.$ Then, the path generated from
  % concatenating $\mathcal{P}_1,$ the part of the path $\mathcal{P}$
  % between and including $e$ and $e^\prime$ and finally,
  % $\mathcal{P}_2$ gives a path from $s_1$ to $t_1$ with the desired
  % property.
  Now, a simple XOR coding scheme as shown in
  Fig.~\ref{fig:GNS_achievability2}(a) improves $R_1$ by one bit and
  reduces $R_2$ by one bit as $s_2$ has to set $b_1\oplus b_2\oplus
  b_3=0$ to allow $t_1$ to decode $a.$ In the general case, we have an
  arbitrary number of finite capacity edges from $S_2^2$ along the
  path, for which we perform a similar XOR scheme. Because the paths
  are edge-disjoint, the finite capacity edges on those paths are all
  distinct, so the imposed constraints can all be met by reducing
  $R_2$ by one bit for each such path. When this is carried out for
  each of the $c_{\mathcal{G}\setminus\hat{S}_1}(s_1;t_1)$ paths, we have a
  scheme achieving $(C_1(S), C(S)-C_1(S)).$ Similarly,
  $(C(S)-C_2(S),C_2(S))$ may be shown to be achievable.

  % \begin{figure}[htbp]
  %   \begin{center}
  %     \subfigure[Three finite capacity
  %     edges]{\includegraphics[width=1.00in,height=!]{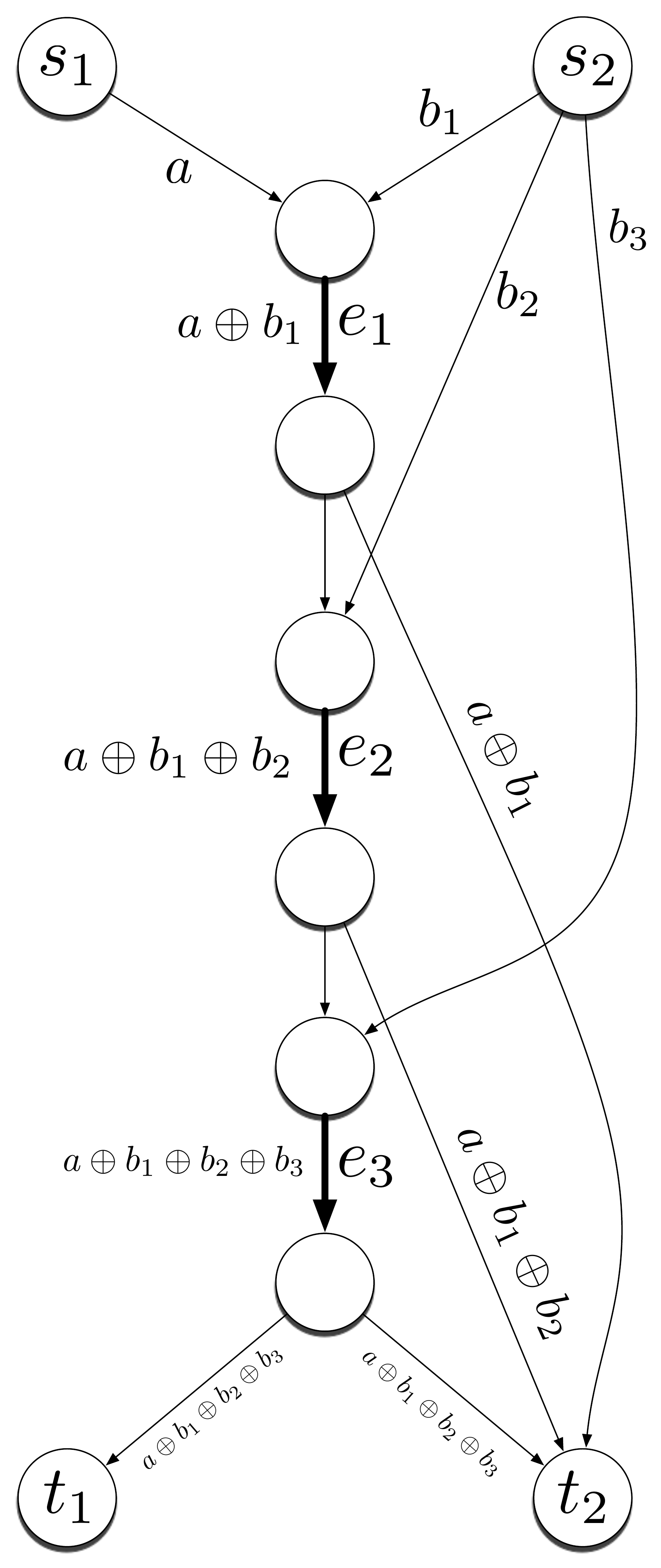}}
  %     \subfigure[Two finite capacity
  %     edges]{\includegraphics[width=1.00in,height=!]{one_old.pdf}}
  %     \subfigure[Four finite capacity
  %     edges]{\includegraphics[width=1.00in,height=!]{one_new.pdf}}
  %     \caption{Improving $R_1$ by one bit while reducing $R_2$ by
  %     one bit}
  %     \label{fig:GNS_achievability1}
  %   \end{center}
  % \end{figure}

  \textbf{Case II}: $S$ is a minimal GNS-cut such that
  $\mathcal{G}\setminus S$ has no paths from $s_1$ to $t_1,$ $s_2$ to
  $t_2,$ or $s_2$ to $t_1$ but it has paths from $s_1$ to $t_2.$ As
  $S$ is a minimal GNS-cut, we have $S_1^2=\emptyset.$ In this case,
  the GNS outer bound \eqref{eq:the-gns-bound} is not necessarily a
  pentagonal region. We first show achievability of the rate pair
  $R_1=C_1(S), R_2=\min\{C_2(S), C(S)-C_1(S)\}.$

  \textbf{Stage I - Basic Scheme}: It is easy to see that we can
  achieve the rate pair given by {\small
    \begin{align*}
      R_1 &= C(\hat{S}_1) = C(S_1^1)+C(S_{12}^1)+C(S_1^{12})+C(S_{12}^{12}), \\
      R_2 &=
      C(S_2^2)+C(S_2^{12})+C(S_{12}^2)+\min\{C(S_2^1),C(S_{12}^{12})\},
    \end{align*}
  } by a routing + butterfly coding approach as follows.
  \begin{itemize}
  \item Edges in $S_1^1, S_1^{12}, S_{12}^1$ forward $s_1$'s message
    bits to $t_1$ and edges in $S_2^2, S_2^{12}, S_{12}^2$ forward
    $s_2$'s message bits to $t_2.$
  \item Edges in $S_{12}^{12}$ and $S_2^1$ along with an infinite
    capacity path from $s_1$ to $t_2$ perform ``preferential routing
    for $s_1$ with butterfly coding for $s_2,$'' i.e.
    \begin{itemize}
    \item if $C(S_2^1)<C(S_{12}^{12}),$ then an amount of
      $C(S_{12}^{12})-C(S_2^1)$ of the capacity of edges in
      $S_{12}^{12}$ is used for routing $s_1$'s message bits, while
      the rest is used for butterfly coding, i.e. an XOR operation is
      performed over $C(S_2^1)$ bits from source $s_1$ with $C(S_2^1)$
      bits from source $s_2$ to be transmitted over the edges in
      $S_{12}^{12}.$ Edges in $S_2^1$ provide $C(S_2^1)$ bits of
      side-information from $s_2$ to $t_1,$ while the infinite
      capacity path from $s_1$ to $t_2$ provides side-information to
      $t_2.$
    \item if $C(S_2^1)\geq C(S_{12}^{12}),$ then all of the capacity
      of edges in $S_{12}^{12}$ is used for butterfly coding.
    \end{itemize}
  \end{itemize}

  \textbf{Stage II - Improving $R_1$ up to $C_1(S)$}: We know
  $c_{\mathcal{G}\setminus\hat{S}_1}(s_1;t_1)=C_1(S)-C(\hat{S}_1).$ Find
  $c_{\mathcal{G}\setminus\hat{S}_1}(s_1;t_1)$ unit capacity edge-disjoint
  paths from $s_1$ to $t_1$ in $\mathcal{G}$ such that none of them use any
  edge in $\hat{S}_1.$ Each such unit capacity path from $s_1$ to
  $t_1$ in $\mathcal{G}$ starts with a first finite capacity edge in
  $S_{12}^2,$ ends with the last finite capacity edge in $S_2^{12}$ or
  $S_2^1$ and with all intermediate edges lying, without loss of
  generality, in $S_2^2.$ Whenever the capacity of all edges in
  $S_2^1$ is used up, we would have reached a sum rate of $C(S),$ as
  all edges are carrying independent linear combinations of message
  bits. In that case, we will increase $R_1$ by one bit and reduce
  $R_2$ by one bit. Else, we will increase $R_1$ by one bit while not
  altering $R_2.$
  \begin{itemize}
  \item If the last finite capacity edge lies in $S_2^{12},$ perform
    coding as in Fig.~\ref{fig:GNS_achievability2}(a). If the capacity
    of $S_2^1$ edges is not fully used, use free unit capacity of some
    edge $e\in S_2^1$ to relay the XOR value of $b_1\oplus b_2\oplus
    b_3$ from $s_2$ to $t_1.$ Use the infinite capacity path from
    $s_1$ to $t_2$ to send the symbol $a.$ If there is no free edge in
    $S_2^1,$ then $s_2$ sets $b_1\oplus b_2\oplus b_3=0.$ This
    increases $R_1$ by one bit and reduces $R_2$ by one bit.

    \begin{figure}[htbp]
      {\center \subfigure[Coding Performed in Case I. Also used in
        Case II, Stage II - Last finite capacity edge in
        $S_2^{12}$]{\includegraphics[width=2in]{one.pdf}}\hspace{15pt}
        \subfigure[Case II, Stage II - Last finite capacity edge in
        $S_2^1$]{\includegraphics[width=2in]{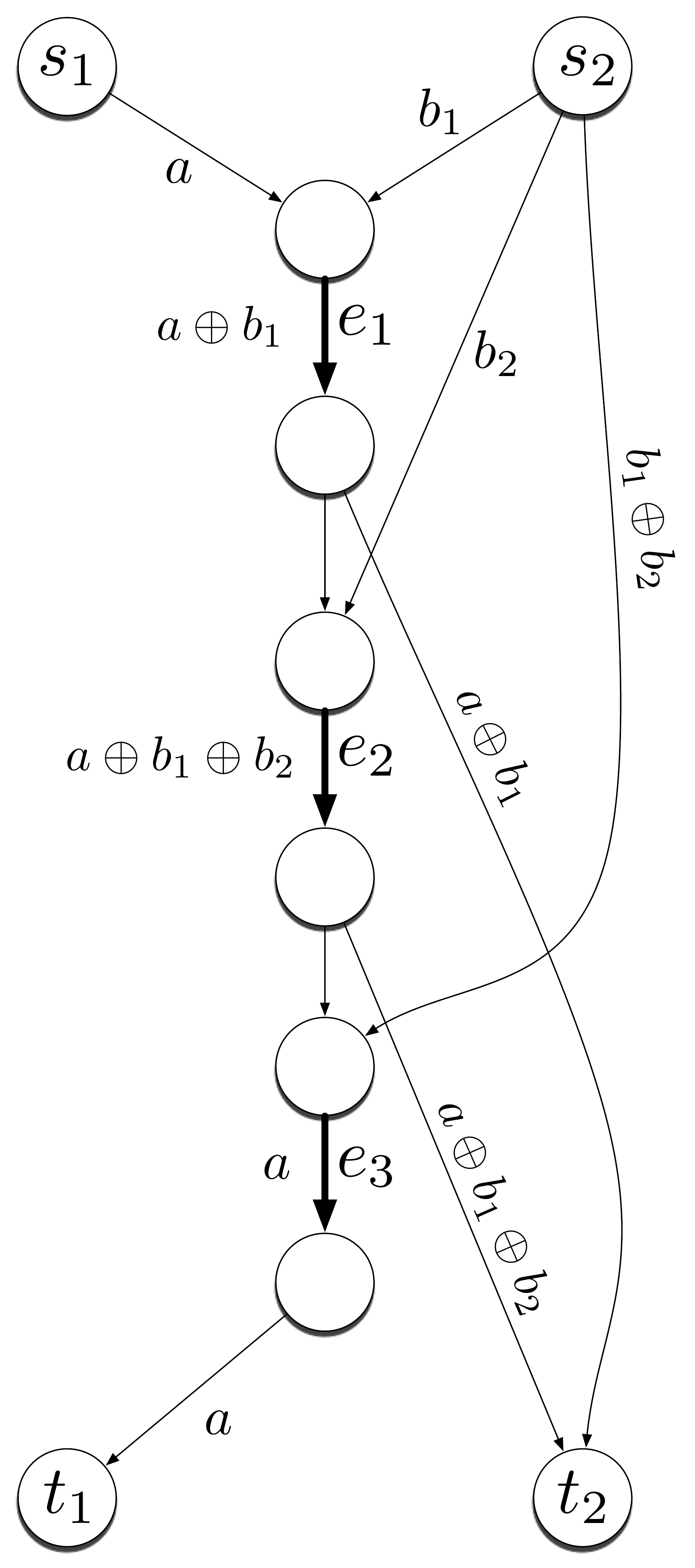}}
        \caption{Improving $R_1$ up to $C_1(S)$}
        \label{fig:GNS_achievability2}
      }
    \end{figure}

  \item Suppose the last finite capacity edge, call it $e_3,$ lies in
    $S_2^1.$ Suppose there is a free edge $e\in S_2^1.$ If $e_3$ is
    being used, it must be used as a conduit for side-information to
    $t_1,$ as part of the butterfly coding. Use $e$ to relay that
    side-information to $t_1.$ So, we can assume $e_3$ is free. Now,
    perform coding as in Fig.~\ref{fig:GNS_achievability2}(b). Use the
    infinite capacity path from $s_1$ to $t_2$ to relay the symbol
    $a.$ This improves $R_1$ by one bit while $R_2$ remains
    unchanged. If there is no free edge in $S_2^1,$ then we must have
    achieved a sum rate of $C(S).$ Edge $e_3$ now relays $a$ to $t_1$
    improving $R_1$ by one bit. However, the edge $e_3$ must have been
    assisting in butterfly coding using some edge in $S_{12}^{12}$ and
    the infinite capacity $s_1-t_2$ path. Now, the edge $e_3$ can no
    longer provide side-information to $t_1.$ So, the corresponding
    unit capacity in some edge in $S_{12}^{12}$ now performs routing
    of $s_1$'s message bit as opposed to XOR mixing of one bit of
    $s_1$'s message and one bit of $s_2$'s message. This reduces $R_2$
    by one bit.
  \end{itemize}

  This can be carried out for the $c_{\mathcal{G}\setminus
    \hat{S}_1}(s_1;t_1)$ edge-disjoint paths sequentially.

  \textbf{Stage III - Improving $R_2$ up to
    $\min\{C(S)-C_1(S),C_2(S)\}$}: If the capacity of $S_2^1$ edges is
  all used up, we have achieved a sum rate of $R_1+R_2=C(S)$ and so,
  $R_2=C(S)-C_1(S).$ If not, we have $R_1=C_1(S), R_2=C(\hat{S}_2).$
  We have $C_2(S)=C(\hat{S}_2)+c_{\mathcal{G}\setminus
    \hat{S}_2}(s_2;t_2).$ Similar to before, we find
  $c_{\mathcal{G}\setminus\hat{S}_2}(s_2;t_2)$ unit capacity edge-disjoint
  paths from $s_2$ to $t_2$ in $\mathcal{G}$ such that the paths don't use
  any edge in $\hat{S}_2.$ Each such unit capacity path encounters a
  first finite capacity edge from $S_{12}^1$ or $S_2^1$ and a last
  finite capacity edge from $S_1^{12}$ while all intermediate finite
  capacity edges may be assumed to lie in $S_1^1.$ Note that edges in
  $S_1^1, S_1^{12}, S_{12}^1$ are all performing pure routing of
  $s_1$'s message. At any point, if the capacity of $S_2^1$ edges is
  fully used, we have reached $R_1=C_1(S), R_2=C(S)-C_1(S).$ If the
  capacity is not fully used, perform the modification as described
  below.

  \begin{itemize}
  \item If the first finite capacity edge lies in $S_{12}^1,$ perform
    coding as in Fig.~\ref{fig:GNS_achievability3}(a).  Use unit
    capacity of a free edge in $S_2^1$ to relay symbol $b$ from $s_2$
    to $t_1$ and use the $s_1$ to $t_2$ infinite capacity path to send
    the XOR value of $a_1\oplus a_2\oplus a_3$ to $t_2.$ This leaves
    $R_1$ unaffected and improves $R_2$ by one bit.

    \begin{figure}[htbp]
      {\center \subfigure[Case II, Stage III - First finite capacity
        edge in
        $S_{12}^1$]{\includegraphics[width=2in]{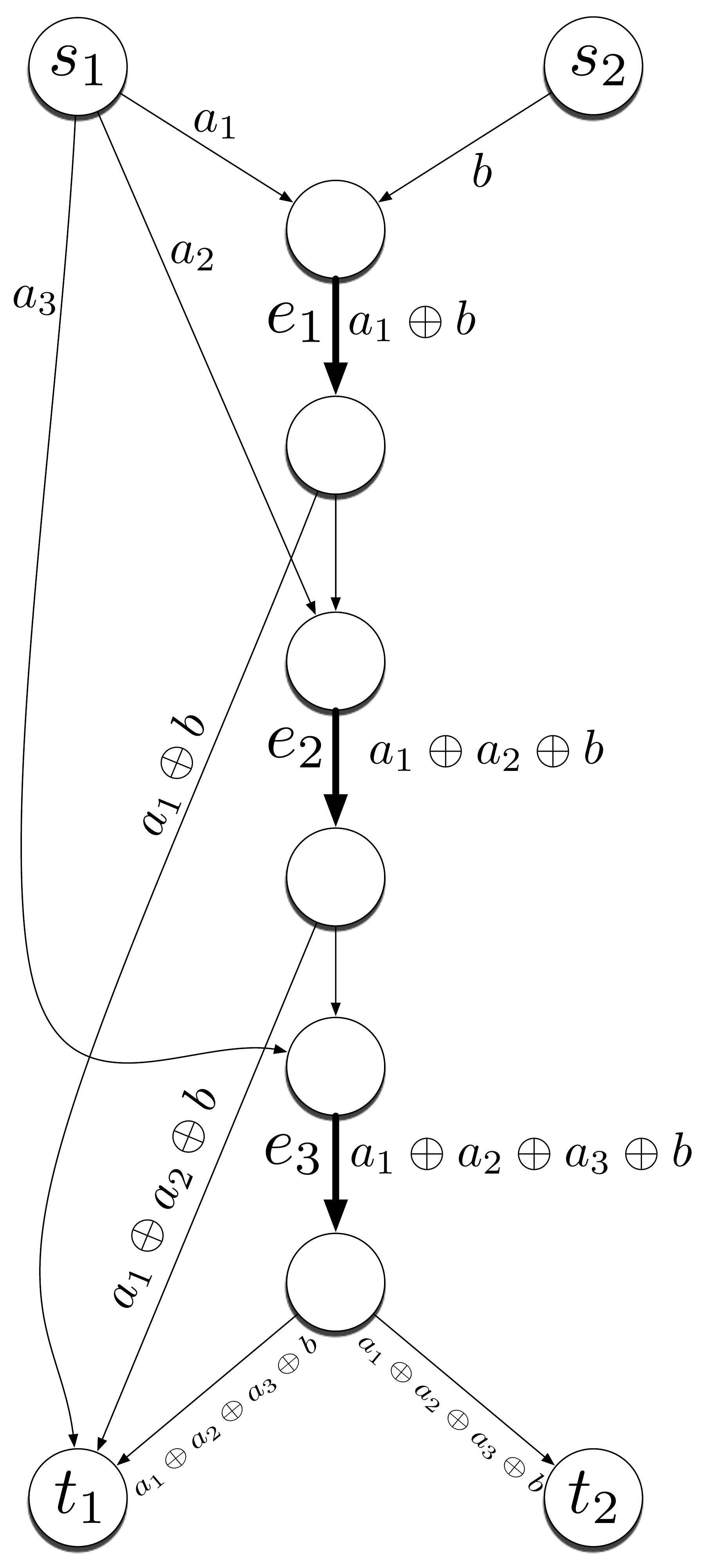}}\hspace{15pt}
        \subfigure[Case II, Stage III - First finite capacity edge in
        $S_2^1$]{\includegraphics[width=2in]{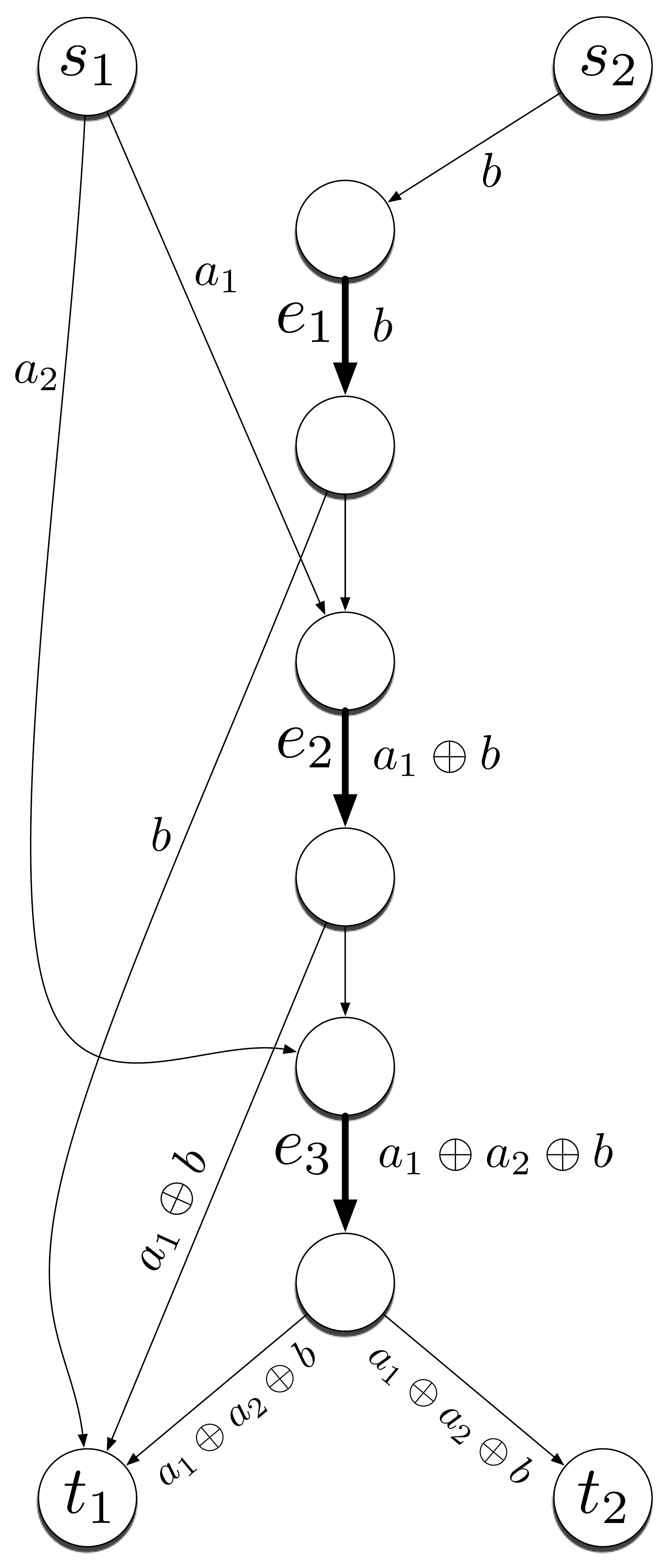}}
        \caption{Improving $R_2$ up to $\min\{C(S)-C_1(S),C_2(S)\}$}
        \label{fig:GNS_achievability3}
      }
    \end{figure}

    \begin{figure}[htbp]
      {\center \subfigure[Case II, Stage III - $e^\prime,
        e^{\prime\prime}, e_1$ are being used in Stage II. $e_2, e_3$
        serve to route $s_1$'s bits to
        $t_1.$]{\includegraphics[width=2.5in]{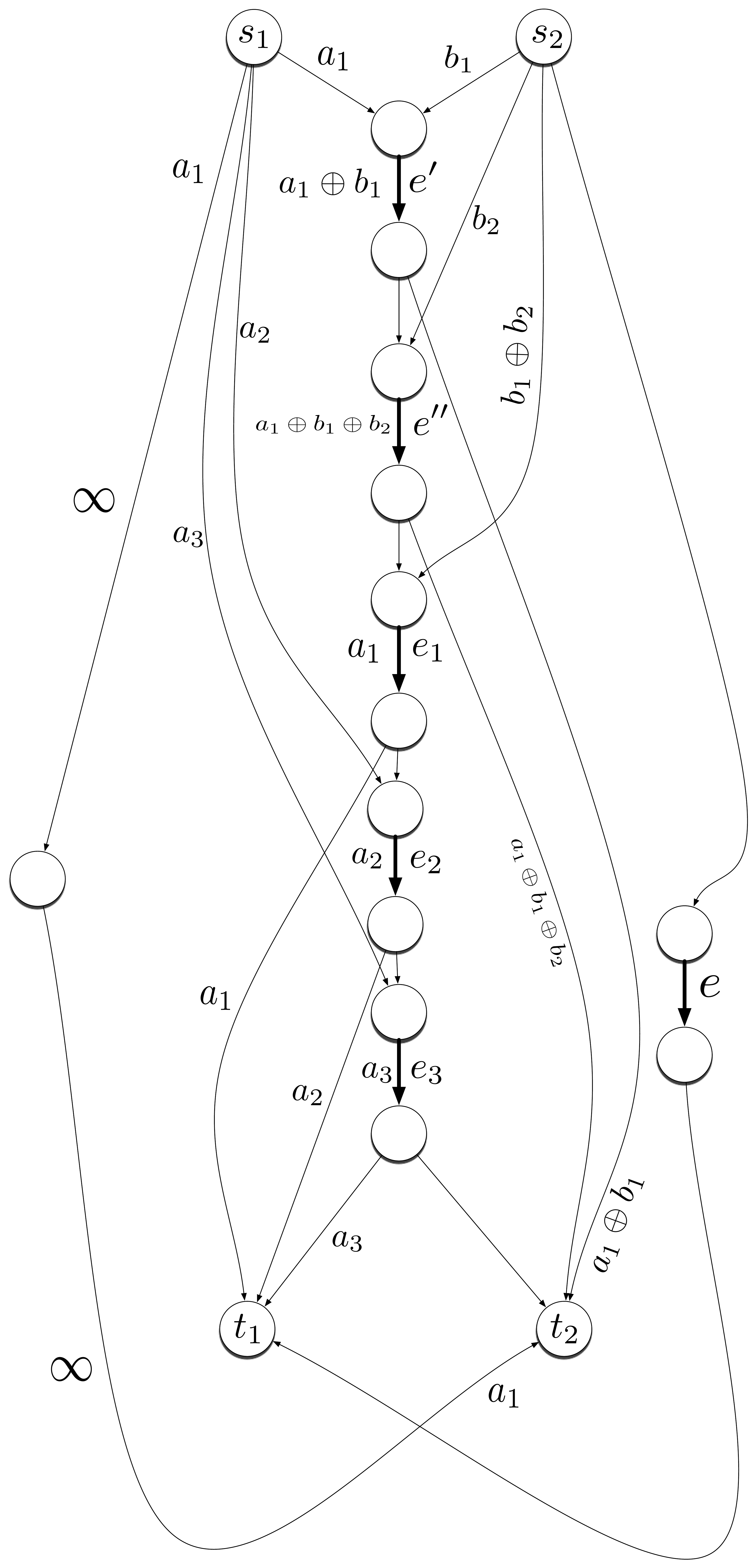}}
        \subfigure[Case II, Stage III - Chosen $s_2$-$t_2$ path uses
        edges $e_1,e_2,e_3$. Modified scheme uses some free edge $e\in
        S_2^1.$]{\includegraphics[width=2.5in]{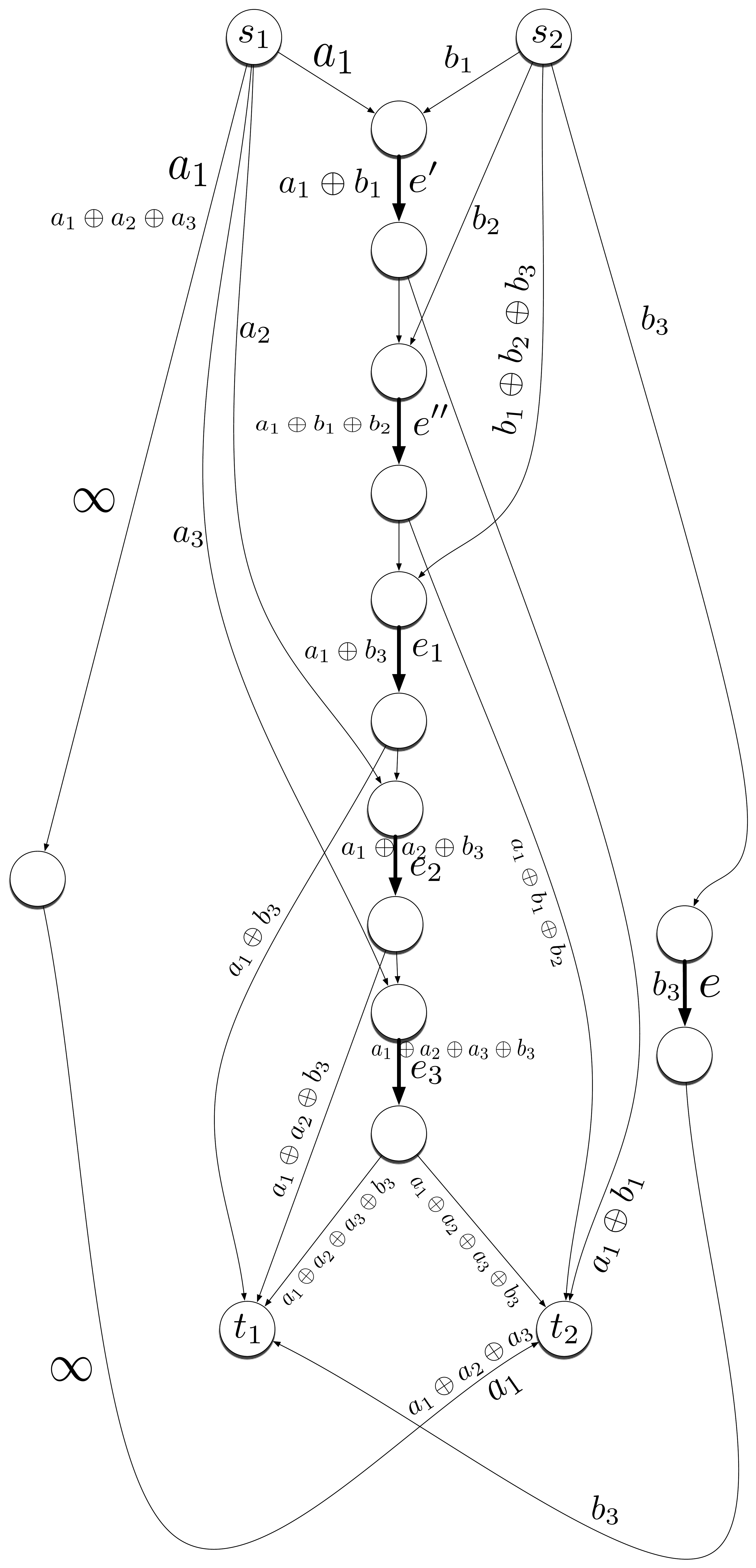}}
        \caption{Improving $R_2$ up to $\min\{C(S)-C_1(S),C_2(S)\}$ in
          the case when $e_1$ was already being used in Stage II.}
        \label{fig:GNS_achievability3_new}
      }
    \end{figure}

  \item Suppose the first finite capacity edge, call it $e_1,$ lies in
    $S_2^1.$ If $e_1$ is not being used, perform coding as in
    Fig.~\ref{fig:GNS_achievability3}(b). Use unit capacity of edge
    $e_1\in S_2^1$ to send a symbol $b$ from $s_2$ to $t_1.$ The
    infinite capacity $s_1$ to $t_2$ path is used to send $a_1\oplus
    a_2$ from $s_1$ to $t_2.$ This allows $t_2$ to decode $b$ and
    improves $R_2$ by one bit while leaving $R_1$ unaffected. If $e_1$
    is being used for sending side-information to $t_1$ (as part of
    the butterfly coding or in Stage II), then pick some free edge
    $e\in S_2^1$ for the transfer of side-information freeing up $e_1$
    and allowing us to use the coding described in
    Fig.~\ref{fig:GNS_achievability3}(b). If $e_1$ is being used but
    not for sending side-information, it must have gotten used in
    Stage II as the last finite capacity edge on an $s_1-t_1$ path. In
    this case, we use some free edge $e\in S_2^1$ and superimpose
    scheme shown in Fig.~\ref{fig:GNS_achievability3}(b) with already
    existing scheme Fig.~\ref{fig:GNS_achievability2}(b). This
    modification is shown via Fig.~\ref{fig:GNS_achievability3_new}(a)
    and Fig.~\ref{fig:GNS_achievability3_new}(b). This improves $R_2$
    by one bit while $R_1$ remains unchanged.
  \end{itemize}

  This stage terminates achieving $R_1=C_1(S),
  R_2=\min\{C_2(S),C(S)-C_1(S)\}.$ Because the GNS-cut is not
  symmetric in indices $1$ and $2,$ we also have to show achievability
  of the rate pair $R_1=\min\{C_1(S),C(S)-C_2(S)\}, R_2=C_2(S).$ This
  can be shown similarly.
  % outlined as follows. Start with a scheme achieving $R_1 =
  % C(S_1^1)+C(S_{12}^1)+C(S_1^{12})+\min\{C(S_{12}^{12}),C(S_2^1)\},$
  % $R_2 = C(\hat{S}_2),$ improve $R_2$ up to $C_2(S)$ and then,
  % improve $R_1$ up to $\min\{C(S)-C_2(S),C_1(S)\}.$

  \textbf{Case III}: $S$ is a minimal GNS-cut such that
  $\mathcal{G}\setminus S$ has no paths from $s_1$ to $t_1,$ $s_2$ to
  $t_2,$ or $s_1$ to $t_2$ but it has paths from $s_2$ to $t_1.$ This
  case is identical to Case II. \qed

\end{proof}

\subsection{Proof of Thm.~\ref{thm:two-unicast-hard}}
\label{subsec:two-unicast-hard}
\begin{proof}

  % Let us split the number $\sum_{i=1}^k R_i = \sum_{j=1}^m r_j$ into
  % a
  % `coarsest common partition' formed by $c_1,c_2,\ldots, c_l$ as
  % follows. Set $\alpha_0=0.$ Recursively define
  % \begin{align}
  %   \alpha_h & = \min\left\{ \min_{s: \sum_{i=1}^s a_i >
  %     \alpha_{h-1}}
  %     \sum_{i=1}^s a_i, \min_{t: \sum_{j=1}^t b_j > \alpha_{h-1}}
  %     \sum_{j=1}^t b_j\right\},
  % \end{align}
  % if $\alpha_{h-1} < \sum_{i=1}^k R_i.$ Define $l$ by $\alpha_l =
  % \sum_{i=1}^k R_i.$ $l$ satisfies $\max\{k,m\}\leq l\leq k+m-1.$
  % Then, define $c_h = \alpha_h-\alpha_{h-1}$ for $h=1,2,\ldots, l.$
  % We
  % will alternately denote $c_h$ by $c_{(i,j)}$ where
  % $i=\min\{i^\prime: \sum_{s=1}^{i^\prime} a_s >\alpha_{h-1}\}, j =
  % \min\{j^\prime: \sum_{t=1}^{j^\prime} b_t>\alpha_{h-1}\}.$ See
  % Fig.~\ref{fig:split} for an example. We will use $\mathcal{I}$ to
  % denote the indices $(i,j)$ that correspond to $c_{(i,j)}=c_h$ for
  % some $h.$ In the rest of this proof, we will have $i, i_0$ denote
  % an
  % index belonging to $\{1,2,\ldots, k\},$ $j, j_0$ denote an index
  % belonging to $\{1,2,\ldots, m\}$ and $(i,j)$ or $(i_0,j_0)$ denote
  % an index belonging to $\mathcal{I}.$ Note that

  Let us split the number $\sum_{i=1}^k R_i = \sum_{j=1}^m r_j$ into a
  `coarsest common partition' formed by $c_1,c_2,\ldots, c_l$ as shown
  in Fig.~\ref{fig:split}. Set $c_0=0.$ Recursively define $c_h$ as
  the minimum of
  \begin{align}
    \min_{s: \sum_{i^\prime=1}^s R_{i^\prime} > \sum_{u=1}^{h-1}c_u}
    \sum_{i^\prime=1}^s R_{i^\prime}
    -\sum_{u=1}^{h-1}c_u\label{eq:min_i}
  \end{align}
  and
  \begin{align}
    \min_{t: \sum_{j^\prime=1}^t r_{j^\prime} > \sum_{u=1}^{h-1}c_u}
    \sum_{j^\prime=1}^t r_{j^\prime}
    -\sum_{u=1}^{h-1}c_u.\label{eq:min_j}
  \end{align}
  Note that by definition, $c_h>0$ even if some of the $R_i$ and/or
  $r_j$ were equal to $0.$ Define $l$ by
  $\sum_{u=1}^l c_u = \sum_{i=1}^k R_i.$ $l$ satisfies
  $\max\{k,m\}\leq l\leq k+m-1.$ We will alternately denote $c_h$ by
  $c_{(i,j)}$ where $i$ and $j$ are the $\arg\min$'s in
  \eqref{eq:min_i} and \eqref{eq:min_j} respectively.  We will use
  $\mathcal{I}$ to denote the indices $(i,j)$ that correspond to
  $c_{(i,j)}=c_h$ for some $h.$ In the rest of this proof, we will
  have $i, i_0$ denote an index belonging to $\{1,2,\ldots, k\},$
  $j, j_0$ denote an index belonging to $\{1,2,\ldots, m\}$ and
  $(i,j)$ or $(i_0,j_0)$ denote an index belonging to $\mathcal{I}.$
  Note that

  \begin{align}
    \sum_{j: (i,j)\in\mathcal{I}} c_{(i,j)} & = \sum_j c_{(i,j)} = R_i, \label{eq:sum-c-j}\\
    \sum_{i: (i,j)\in\mathcal{I}} c_{(i,j)} & = \sum_i c_{(i,j)} =
    r_j. \label{eq:sum-c-i}
  \end{align}
  
  \begin{figure}[h]
    \begin{center}
      \includegraphics[width = 4in, height=!]{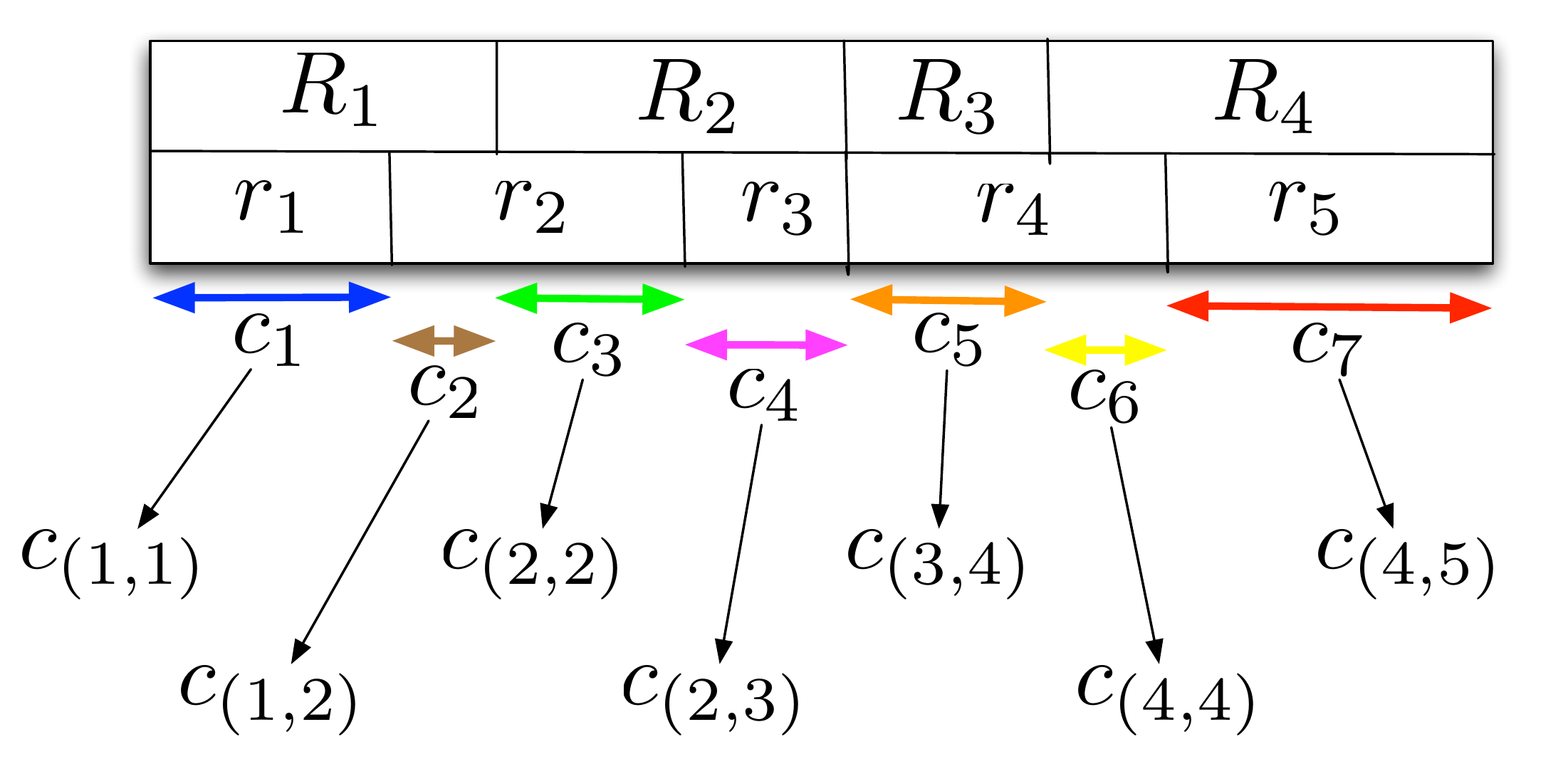}
      \caption{Splitting of the number $\sum_{i=1}^k R_i =
        \sum_{j=1}^m r_j$ to obtain $(c_h: h=1,2,\ldots,l)$}
      \label{fig:split}
    \end{center}
  \end{figure}
  
  Given a $(k+m)$-unicast network block $B$ with source-destination
  pairs $(s^\prime_h,t^\prime_h),$ $h=1,2,\ldots, k+m,$ we extend it
  into an $(m+1)$-unicast network $\mathcal{N}$ as follows:
  \begin{itemize}
  \item Create sources $s, s_1,s_2,\ldots, s_m$ and their
    corresponding destinations $t, t_1,t_2,\ldots, t_m.$ Create nodes
    $v_1,v_2,\ldots, v_m.$ Create nodes $x_{(i,j)}, y_{(i,j)},
    z_{(i,j)}, w_{(i,j)},w^1_{(i,j)}, w^2_{(i,j)}, w^3_{(i,j)} $ for
    each $(i,j)\in\mathcal{I}.$
  \item For $j=1,2,\ldots, m,$ create edges of capacity $R_{k+j}$ from
    $s_j$ to $v_j,$ $v_j$ to $s^\prime_{k+j},$ and $t^\prime_{k+j}$ to
    $t_j.$ (see Fig.~\ref{fig:general_construction}(a))
  \item For each $(i,j)\in\mathcal{I},$ create edges of capacity
    $c_{(i,j)}$ from $s$ to $x_{(i,j)}, x_{(i,j)}$ to $s^\prime_i,
    s_j$ to $y_{(i,j)}, t^\prime_i$ to $z_{(i,j)},$ as shown in
    Fig.~\ref{fig:general_construction}(a).  and the butterfly edges
    as shown in Fig.~\ref{fig:general_construction}(b).
    % $z_{(i,j)}$ to $w_{(i,j)}$, $y_{(i,j)}$ to $w_{(i,j)}$,
    % $w_{(i,j)}$ to $w^1_{(i,j)}$, $x_{(i,j)}$ to $w^2_{(i,j)}$,
    % $w^1_{(i,j)}$ to $w^2_{(i,j)}$, $y_{(i,j)}$ to $w^3_{(i,j)}$,
    % $w^1_{(i,j)}$ to $w^3_{(i,j)}$, $w^2_{(i,j)}$ to $t_j$,
    % $w^3_{(i,j)}$ to $d$, $w^2_{(i,j)}$ to $t_j$, $w^3_{(i,j)}$ to
    % $d.$
  \end{itemize}

  \begin{figure}[htbp]
    \begin{center}
      \subfigure[]{\includegraphics[width = 4in,
        height=!]{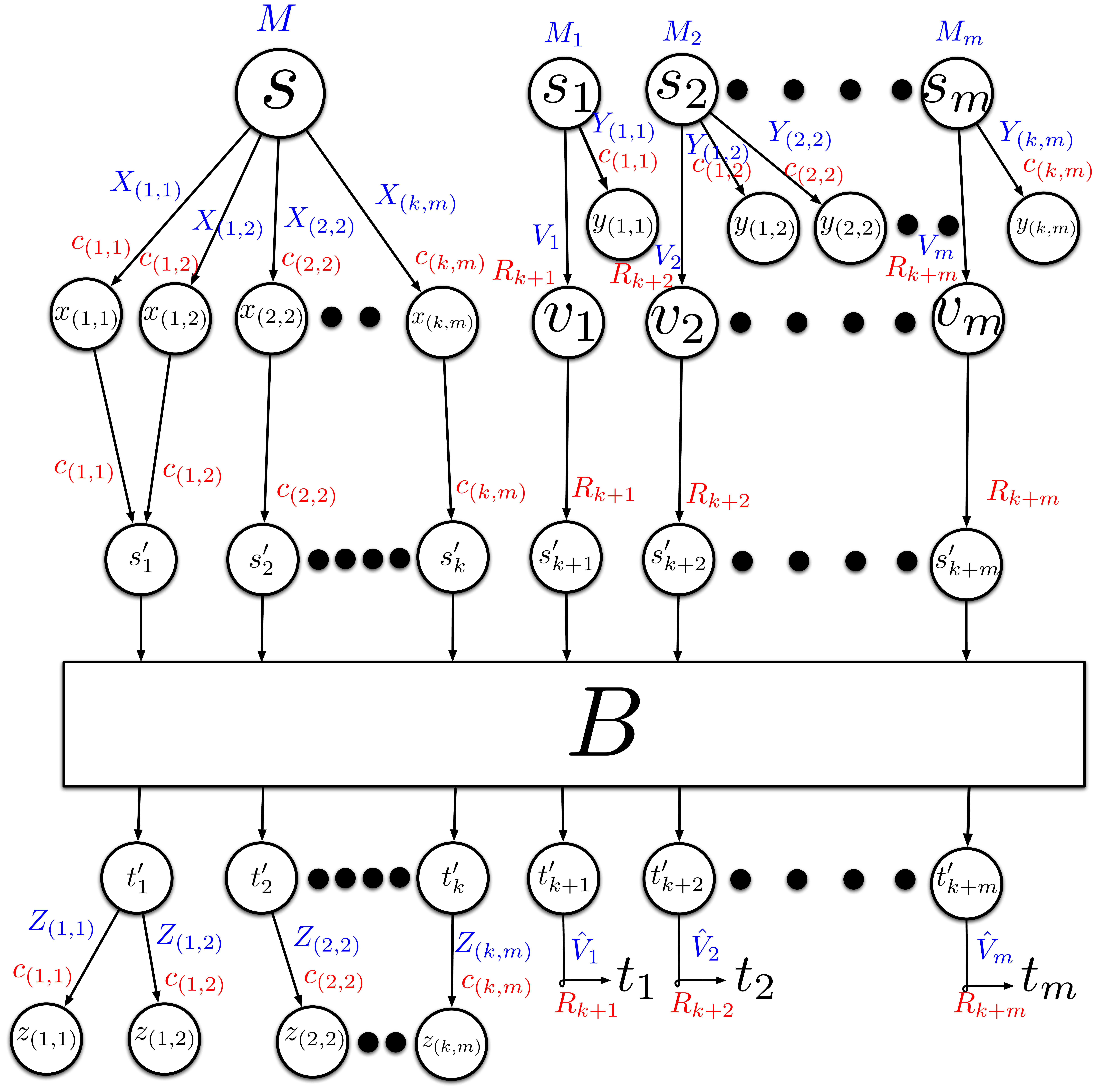}}\hspace{25pt}
      \subfigure[]{\includegraphics[width = 2in,
        height=!]{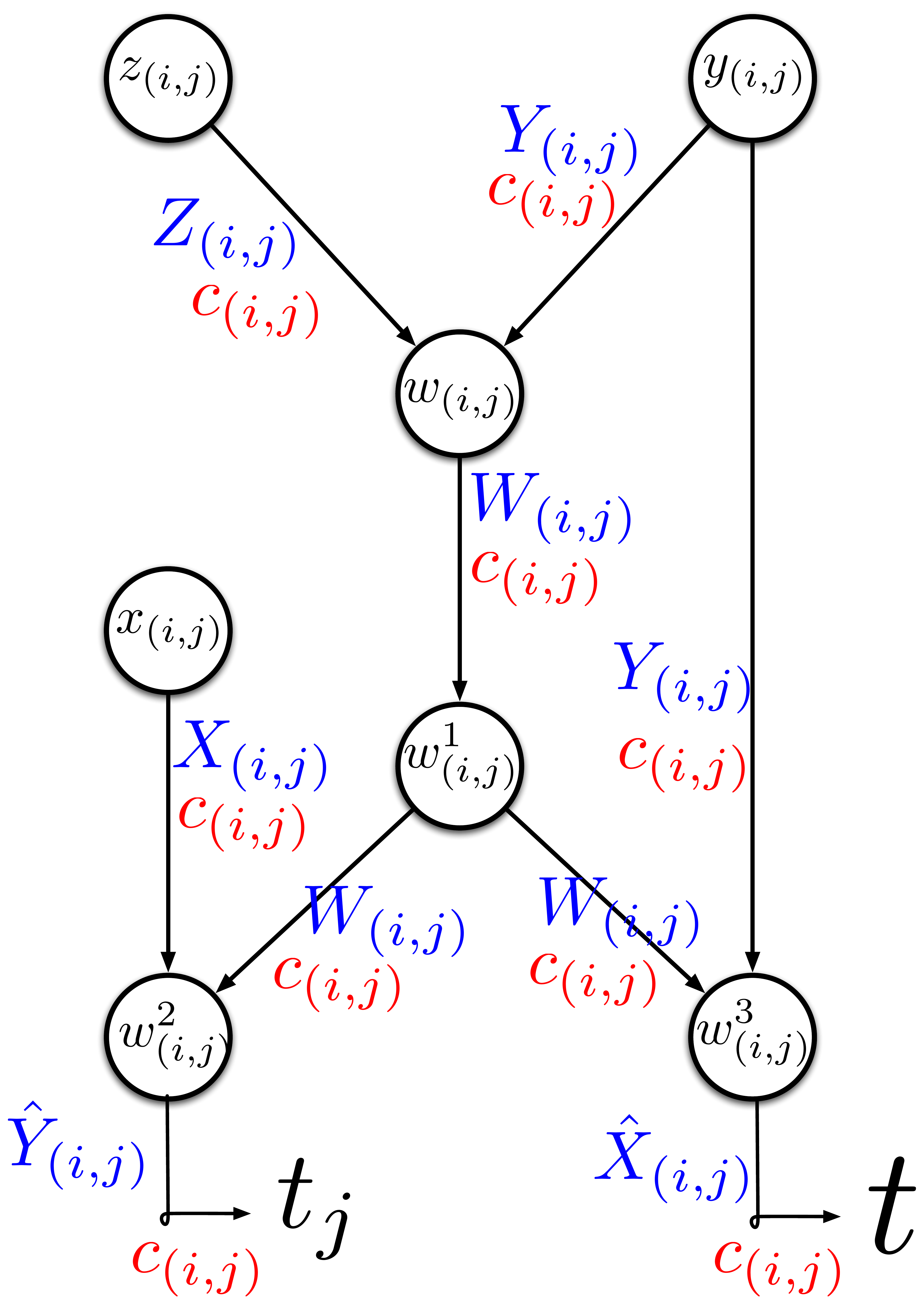}}
      \caption{(a) shows the $(m+1)$-unicast network $\mathcal{N}$
        constructed around a given $(k+m)$-unicast network block $B.$
        The label in red is the edge capacity and the label in blue is
        the random variable that flows through that edge. (b) shows
        the butterfly network component in the extended
        $(m+1)$-unicast network $\mathcal{N}$ for each
        $(i,j)\in\mathcal{I}.$ Each edge in this component has
        capacity $c_{(i,j)}.$ The label in red is the edge capacity
        and the label in blue is the random variable that flows
        through that edge.}
      \label{fig:general_construction}
    \end{center}
  \end{figure}

  We will prove that $(R_1,R_2,\ldots, R_{k+m})$ is achievable (or
  achievable by vector linear coding) in the $(k+m)$-unicast network
  $B$ if and only if $(\sum_{i=1}^k R_i, R_{k+1}, \ldots, R_{k+m})$ is
  achievable (respectively achievable by vector linear coding) in the
  $(m+1)$-unicast extended network $\mathcal{N}.$

  This will establish that any algorithm that can determine
  achievability of $(\sum_{i=1}^k R_i, R_{k+1}, \ldots, R_{k+m})$ may
  be applied to our extended network to determine achievability of
  $(R_1,R_2,\ldots, R_{k+m})$ in the network block $B.$  

  Suppose $(R_1,R_2,\ldots, R_{k+m})$ is achievable in the
  $(k+m)$-unicast network block $B.$ Then, we can come up with a
  `butterfly' coding scheme which proves the achievability of the rate
  tuple $(\sum_{i=1}^k R_i, R_{k+1}+r_1,R_{k+2}+r_2, \ldots,
  R_{k+m}+r_m)$ in the $(m+1)$-unicast network $\mathcal{N}$. This can
  be done simply by making $X_{(i,j)} = Z_{(i,j)}$ and performing
  butterfly coding over each butterfly network component. This means
  we set $W_{i,j} = Z_{i,j}+Y_{i,j}, \hat{Y}_{i,j} = X_{i,j} +
  W_{i,j}$ and $\hat{X}_{i,j} = W_{i,j} + Y_{i,j}.$ The addition here
  is done over an Abelian group $Z_l$ (summation modulo integer $l$)
  where $l$ is the size of the common finite alphabet that $X_{i,j},
  Y_{i,j}, W_{i,j}$ take values in, with the alphabet mapped to the
  Abelian group according to some arbitrary bijection.

  Now suppose that the rate tuple $(\sum_{i=1}^k R_i,
  R_{k+1}+r_1,R_{k+2}+r_2, \ldots, R_{k+m}+r_m)$ is achievable in the
  network $\mathcal{N}.$ We want to show that $(R_1,R_2,\ldots,
  R_{k+m})$ is achievable in the $(k+m)$-unicast network block $B.$ We
  will first present a plausibility argument for this.
  
  \emph{By tightness of the incoming rate constraint at $t_j$ and
    looking at all symbols that enter block $B$, we will inevitably
    require $\hat{V}_j$ to have all the information about $V_j.$
    Similarly, we will require $\hat{Y}_{(i,j)}, \hat{X}_{(i,j)}$ to
    have all the information about $Y_{(i,j)}, X_{(i,j)}$ respectively
    which will further necessitate that butterfly coding be done over
    the butterfly component and this will be possible only if
    $Z_{(i,j)}$ has all the information about $X_{(i,j)}.$}

  Now, if the capacity of any outgoing edge of a vertex is at least as
  large as the sum of the capacities of all incoming edges at that
  vertex, we will assume, without loss of generality, that the vertex
  sends all of its received data on that outgoing edge. We first state
  and prove a straightforward lemma that will be useful.

  \begin{lemma}\label{lem:simple}
    Suppose $A,B,C,D$ are random variables with $B,C,D$ mutually
    independent and satisfying $H(A|B,D) = 0.$ Then,
    \begin{itemize}
    \item[a)] $H(A|B,C) = 0 \implies H(A|B) = 0.$
    \item[b)] $H(B|A,C) = 0\implies H(B|A) = 0.$
    \end{itemize}
  \end{lemma}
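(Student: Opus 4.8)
The plan is to reformulate the hypothesis $H(A|B,D)=0$ as the statement that $A$ equals, almost surely, a deterministic measurable function of the pair $(B,D)$, and then to exploit that mutual independence of $B,C,D$ forces $C$ to be independent of the \emph{pair} $(B,D)$, not merely of each coordinate separately. Since $A$ is a function of $(B,D)$, the $\sigma$-algebra generated by $(A,B,D)$ coincides (up to null sets) with that generated by $(B,D)$, so $C\independent (A,B,D)$; in particular $C\independent (A,B)$ and $C\independent A$.

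From $C\independent (A,B)$ we get $I(C;A,B)=0$, and the chain rule $I(C;A,B)=I(C;B)+I(C;A\mid B)$, together with $I(C;B)=0$ (again by independence) and nonnegativity of mutual information, forces $I(C;A\mid B)=0$, i.e. $H(A\mid B)=H(A\mid B,C)$. This single identity settles part (a) at once: the hypothesis $H(A\mid B,C)=0$ then yields $H(A\mid B)=0$. Notice this step does not even use $H(A\mid B,C)=0$ to derive the identity — only to conclude.

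For part (b), I would start from $H(B\mid A)=H(B\mid A,C)+I(B;C\mid A)$. The first term vanishes by the hypothesis $H(B\mid A,C)=0$, so it suffices to show $I(B;C\mid A)=0$. Expanding the other way, $I(A,B;C)=I(A;C)+I(B;C\mid A)$; here $I(A;C)=0$ because $C\independent A$, and $I(A,B;C)=I(C;A,B)=0$ from the step above. Hence $I(B;C\mid A)=0$ and therefore $H(B\mid A)=0$, which is the desired conclusion.

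The only place that calls for care — and what I would flag as the main (if modest) obstacle — is precisely the passage from ``$B,C,D$ mutually independent'' to ``$C\independent (A,B)$'': one must invoke \emph{joint} independence to get $C\independent(B,D)$, and then justify that $C$ stays independent after adjoining $A$, using that $A$ is (up to a null set) a measurable function of $(B,D)$. Once this is in hand, the rest is a two-line manipulation of mutual informations via the chain rule and nonnegativity, identical in spirit for both parts.
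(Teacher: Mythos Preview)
Your proof is correct and follows essentially the same route as the paper: both arguments establish $I(C;A,B)=0$ (the paper via the entropy identity $H(B,C,D)+H(A|B,C,D)=H(B,D)+H(A|B,D)+H(C|A,B,D)$, you via the observation that $A$ is a function of $(B,D)$ and $C\independent(B,D)$), and then both use the chain rule for mutual information to deduce $I(C;A\mid B)=0$ and $I(C;B\mid A)=0$, from which (a) and (b) follow immediately. The only difference is cosmetic---you phrase the key step in terms of functional dependence and independence of $\sigma$-algebras, whereas the paper does the equivalent computation directly with entropies.
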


  \begin{proof}
    $H(B,C,D) + H(A|B,C,D) = H(B,D) + H(A|B,D) + H(C|A,B,D).$ Mutual
    independence of $B,C,D$ gives $H(B,C,D) = H(B,D) + H(C).$ Further,
    $0\leq H(A|B,C,D) \leq H(A|B,D) = 0.$ So, we get $H(C) =
    H(C|A,B,D),$ i.e. $I(C;A,B,D) = 0.$ By the chain rule, this
    implies both $I(C;A|B) = 0$ and $I(C;B|A) = 0,$ i.e. $H(A|B) =
    H(A|B,C)$ and $H(B|A) = H(B|A,C).$ This completes the proof.
  \end{proof}

  % REPLACEMENT REPLACEMENT REPLACEMENT REPLACEMENT
  Define random variables as shown in
  Fig.~\ref{fig:general_construction}, i.e. let $M$ denote the input
  message at source $s$ and for each $j=1,2,\ldots, m,$ let $M_j$
  denote the input message at source $s_j.$ Furthermore, let the
  random variables $V_j, \hat{V}_j$ for $j=1,2,\ldots, m,$ and
  $X_{(i,j)},$ $Y_{(i,j)},$ $Z_{(i,j)},$ $\hat{X}_{(i,j)},$
  $\hat{Y}_{(i,j)}$ for $(i,j)\in\mathcal{I}$ be as shown in
  Fig.~\ref{fig:general_construction}(a) and
  Fig.~\ref{fig:general_construction}(b).
  % REPLACEMENT REPLACEMENT REPLACEMENT REPLACEMENT MORE DETAILS! MORE
  % DETAILS! MORE DETAILS! MORE DETAILS! MORE DETAILS!  Define random
  % variables as shown in Fig.~\ref{fig:general_construction}(a). Let
  % $M$ denote the input message at source $s$ and for each
  % $j=1,2,\ldots, m,$ let $M_j$ denote the input message at source
  % $s_j,$ $V_j$ on the edge from $s_j$ to $v_j$ and $\hat{V}_j$ from
  % $t^\prime_{k+j}$ to $t_j.$ For each $(i,j)\in\mathcal{I},$ we
  % denote
  % \begin{itemize}
  % \item $X_{(i,j)}$ on edge from $s$ to $x_{(i,j)},$
  % \item $Y_{(i,j)}$ on edge from $s_j$ to $y_{(i,j)},$
  % \item $Z_{(i,j)}$ on edge from $t^\prime_i$ to $z_{(i,j)},$
  % \item $W_{(i,j)}$ on edge from $w_{(i,j)}$ to $w^1_{(i,j)},$
  % \item $\hat{Y}_{(i,j)}$ on edge from $w^2_{(i,j)}$ to $t_j,$
  % \item $\hat{X}_{(i,j)}$ on edge from $w^3_{(i,j)}$ to $d.$
  % \end{itemize}
  % The names $\hat{X}_{(i,j)}, \hat{Y}_{(i,j)}, \hat{V}_{(i,j)}$ are
  % only suggestive and don't mean that they are precise
  % reconstructions for $X_{(i,j)}, Y_{(i,j)}, V_{(i,j)}.$ MORE
  % DETAILS! MORE DETAILS! MORE DETAILS! MORE DETAILS! MORE DETAILS!
  We will measure entropy with logarithms to the base
  $|\mathcal{A}|^N,$ where $\mathcal{A}$ is the alphabet and $N$ is
  the block length. $M, M_1,M_2,\ldots, M_m$ are mutually
  independent. $H(M) = \sum_{i=1}^k R_i,$ and for each $j,$ $H(M_j) =
  R_{k+j}+r_j.$ Let us write $A\leftrightarrow B$ if $H(A|B) = H(B|A)
  = 0.$ By observing the tight outgoing rate constraints and encoding
  at sources $s,s_1,s_2,\ldots, s_k,$ and the tight incoming rate
  constraints and decodability at destinations $t, t_1, t_2,\ldots,
  t_k,$ we can easily conclude \vspace{-0.1in}
  \begin{align}
    & M\leftrightarrow \cup_{(i,j)} \{X_{(i,j)}\} \leftrightarrow
    \cup_{(i,j)} \{\hat{X}_{(i,j)}\},\label{eq:M}\\
    & \forall j:M_j\leftrightarrow \cup_i \{Y_{(i,j)}\}\cup \{V_j\}
    \leftrightarrow \cup_i
    \{\hat{Y}_{(i,j)}\}\cup \{\hat{V}_j\},\label{eq:M_j}\\
    & \forall (i,j):\ \ \ H(X_{(i,j)}) =
    H(\hat{X}_{(i,j)}) = c_{(i,j)},\label{eq:X}\\
    & \forall (i,j):\ \ \ H(Y_{(i,j)}) = H(\hat{Y}_{(i,j)}) =
    c_{(i,j)},\label{eq:Y}\\
    &\forall j:\ \ \ H(V_j) = H(\hat{V}_j) = R_{k+j}.\label{eq:V}
  \end{align}
  \vspace{-0.1in} The random variables in the collection
  \begin{align}
    \cup_{(i,j)}\left(\{X_{(i,j)}\}\cup \{Y_{(i,j)}\}\right)\cup
    \left(\cup_j \{V_j\}\right)
    \label{eq:independence}
  \end{align}
  are mutually independent. In particular, \eqref{eq:X}, \eqref{eq:V}
  \eqref{eq:independence} imply that the messages received by the
  sources of the network block $s^\prime_h$ for $h=1,2,\ldots, k+m,$
  are mutually independent and the symbol received by $s^\prime_h$ has
  entropy $R_h.$

  Now, fix any $(i,j)\in\mathcal{I}.$
  \begin{align}
    & H(W_{(i,j)}|X_{(i,j)}) - H(W_{(i,j)}|M_j,X_{(i,j)}) \nonumber \\
    = & I(W_{(i,j)};M_j|X_{(i,j)}) \label{eq:-1}\\
    = & I(X_{(i,j)},W_{(i,j)};M_j) -
    I(X_{(i,j)};M_j) \\
    \stackrel{(a)}{=}&   I(X_{(i,j)},W_{(i,j)};M_j) - 0 \\
    \stackrel{(b)}{\geq} & I(\hat{Y}_{(i,j)};M_j) \\
    \stackrel{(c)}{=} & c_{(i,j)}, \label{eq:0}
  \end{align}
  where $(a)$ holds because $X_{(i,j)}$ is a function of $M$ and $M$
  is independent of $M_j,$ $(b)$ holds because $\hat{Y}_{(i,j)}$ is a
  function of $(X_{(i,j)}, W_{(i,j)}),$ and $(c)$ follows from
  \eqref{eq:M_j}, \eqref{eq:Y}, \eqref{eq:V}. Combining the inequality
  chain \eqref{eq:-1}-\eqref{eq:0} with the edge capacity constraint
  $H(W_{(i,j)}|X_{(i,j)})\leq H(W_{(i,j)}) \leq c_{(i,j)},$ we obtain
  \begin{align}
    H(W_{(i,j)}) & = c_{(i,j)}, \label{eq:1} \\
    H(W_{(i,j)}|X_{(i,j)}) & = c_{(i,j)}, \label{eq:2}\\
    H(W_{(i,j)}|X_{(i,j)},M_j) & = 0. \label{eq:3}
  \end{align}

  Similarly,
  \begin{align}
    & H(W_{(i,j)}|Y_{(i,j)}) - H(W_{(i,j)}|M, Y_{(i,j)}) \nonumber \\
    = &  I(W_{(i,j)};M|Y_{(i,j)}) \label{eq:4-}\\
    = & I(Y_{(i,j)}, W_{(i,j)};M) - I(Y_{(i,j)};M)  \\
    \stackrel{(d)}{=} &  I(Y_{(i,j)}, W_{(i,j)};M) - 0 \\
    \stackrel{(e)}{\geq} &  I(\hat{X}_{(i,j)};M) \\
    \stackrel{(f)}{=} & c_{(i,j)}, \label{eq:4}
  \end{align}
  where $(d)$ holds because $Y_{(i,j)}$ is a function of $M_j$ and
  $M_j$ is independent of $M,$ $(e)$ holds because $\hat{X}_{(i,j)}$
  is a function of $(Y_{(i,j)}, W_{(i,j)}),$ and $(f)$ follows from
  \eqref{eq:M}, \eqref{eq:X}. Combining the inequality chain
  \eqref{eq:4-}-\eqref{eq:4} with the edge capacity constraint
  $H(W_{(i,j)}|Y_{(i,j)})\leq H(W_{(i,j)}) \leq c_{(i,j)},$ we obtain
  \begin{align}
    % H(W_{(i,j)}) & = c_{(i,j)}, \label{eq:5} \\
    H(W_{(i,j)}|Y_{(i,j)}) & = c_{(i,j)}, \label{eq:6}\\
    H(W_{(i,j)}|Y_{(i,j)},M) & = 0. \label{eq:7}
  \end{align}

  From \eqref{eq:M_j}, we may rewrite \eqref{eq:3} as
  \begin{align}
    H(W_{(i,j)}|X_{(i,j)},V_j,\cup_{i_0} \{Y_{(i_0,j)}\}) =
    0. \label{eq:8}
  \end{align}
  From \eqref{eq:M}, we may rewrite \eqref{eq:7} as
  \begin{align}
    H(W_{(i,j)}|Y_{(i,j)},\cup_{i_0,j_0} \{X_{(i_0,j_0)}\}) =
    0. \label{eq:9}
  \end{align}

  Using Lemma~\ref{lem:simple}.a) with $A=W_{(i,j)},
  B=\{X_{(i,j)},Y_{(i,j)}\}, C=\{V_j\}\cup_{i_0}
  \{Y_{(i_0,j)}\}\setminus\{Y_{(i,j)}\}, D=\cup_{i_0,j_0}
  \{X_{(i_0,j_0)}\}\setminus\{X_{(i,j)}\},$ and using
  \eqref{eq:independence}, \eqref{eq:8}, \eqref{eq:9}, we obtain
  \begin{align}
    H(W_{(i,j)}|Y_{(i,j)},X_{(i,j)}) = 0. \label{eq:10}
  \end{align}

  Now, by the chain rule for entropy,
  \begin{align}
    & H(Y_{(i,j)}) + H(W_{(i,j)}|Y_{(i,j)}) +
    H(X_{(i,j)}|W_{(i,j)},Y_{(i,j)})  \nonumber \\
    & = H(X_{(i,j)},Y_{(i,j)}) +
    H(W_{(i,j)}|X_{(i,j)},Y_{(i,j)}) \label{eq:11}
  \end{align}
  Using \eqref{eq:independence}, \eqref{eq:6}, \eqref{eq:X},
  \eqref{eq:Y}, \eqref{eq:10} in \eqref{eq:11}, we get
  \begin{align}
    H(X_{(i,j)}|W_{(i,j)},Y_{(i,j)}) = 0. \label{eq:12}
  \end{align}

  From the encoding constraint at node $w_{(i,j)},$ we have
  \begin{align}
H(W_{(i,j)}|Y_{(i,j)},Z_{(i,j)}) = 0. \label{eq:13}
\end{align}
Putting \eqref{eq:12} and \eqref{eq:13} together, we get 

\begin{align}
H(X_{(i,j)}|Y_{(i,j)},Z_{(i,j)}) = 0. \label{eq:14}
\end{align}
From the encoding constraint for network block $B,$ we have that
\begin{align}
H(Z_{(i,j)}|\cup_{j_0} \{V_{j_0}\} \cup_{(i_0,j_0)} \{X_{i_0,j_0}\}) = 0. \label{eq:15}
\end{align}

Using Lemma~\ref{lem:simple}.b) with $A=Z_{(i,j)}, B=X_{(i,j)},
C=Y_{(i,j)}, D=\cup_{j_0} \{V_{j_0}\} \cup_{(i_0,j_0)}
\{X_{i_0,j_0}\}\setminus\{X_{(i,j)}\}$ and using
\eqref{eq:independence}, \eqref{eq:14}, \eqref{eq:15}, we obtain
\begin{align}
H(X_{(i,j)}|Z_{(i,j)}) = 0. \label{eq:16}
\end{align}

From \eqref{eq:independence}, \eqref{eq:X}, we have that for any
$i=1,2,\ldots, k,$ $H(\cup_j \{X_{(i,j)}\}) = R_i$ and \eqref{eq:16}
implies $H(\cup_j \{X_{(i,j)}\}|\cup_j \{Z_{(i,j)}\}) = 0.$ This shows
that in the block $B,$ the destination $t^\prime_i$ can decode source
$s^\prime_i$'s message for $i=1,2,\ldots, k.$

Fix any $j=1,2,\ldots, m.$ From \eqref{eq:M_j}, we have
\begin{align}
  H(\hat{V}_j|M_j) & = 0 \label{eq:20} 
\end{align}
By using \eqref{eq:M_j}, we can rewrite \eqref{eq:20} as 
\begin{align}
  \label{eq:21}
  H(\hat{V}_j|V_j, \cup_{i_0} \{Y_{(i_0,j)}\}) & = 0.
\end{align}
By the encoding constraint provided by the network block $B,$ we get
\begin{align}
  \label{eq:22}
  H\left(\hat{V}_j|\cup_{i_0,j_0}
    X_{(i_0,j_0)},\cup_{j_0}\{V_{j_0}\}\right) = 0.
\end{align}
Using Lemma~\ref{lem:simple}.a) with $A=\hat{V}_j, B=V_j, C=\cup_{i_0}
\{Y_{(i_0,j)}\}, D=(\cup_{i_0,j_0} X_{(i_0,j_0)})\cup
(\cup_{j_0}\{V_{j_0}\} \setminus \{V_j\}),$ and using
\eqref{eq:independence}, \eqref{eq:21} and \eqref{eq:22}, we obtain
\begin{align}
  \label{eq:23}
  H(\hat{V}_j|V_j) = 0.
\end{align}
By the chain rule for entropy,
\begin{align}
\label{eq:24}
H(V_j|\hat{V}_j) + H(\hat{V}_j) & = H(\hat{V}_j|V_j) +
H(V_j).
\end{align}
Using \eqref{eq:V} and \eqref{eq:23} in \eqref{eq:24},
we get 
\begin{align}
  \label{eq:25}
  H(V_j|\hat{V}_j) =0.
\end{align}

This implies that the destination $t_{k+j}^\prime$ can decode
$s_{k+j}^\prime$'s message for $j=1,2,\ldots, m$ within block $B.$

The case of the rate tuples assumed to be achievable by a vector
linear coding scheme is identical. This completes the proof.

\end{proof}

\begin{remark}\label{rem:linear-codes}
  The proof above used the notion of zero-error exactly achievable
  rates. It can be shown to go through for the notion of zero-error
  asymptotically achievable linear coding rates and vanishing error
  linear coding rates as follows.
  
  First, a very simple argument yields that if a linear code makes an error
  with positive probability, then the error probability is at least
  $\frac 12.$ Let $g$ be the global decoding function for the network,
  i.e. the function that maps all source messages to all decoded
  messages at the respective destinations (as in
  Definition~\ref{def:achievable}). If the code used is linear, then
  $g$ is a linear map over some finite field and so, is
  $g-\mathsf{id}$ where $\mathsf{id}$ is the identity mapping. Let the
  null space of this map be $S = \{v: g(v)-v=0\}.$ If $g(v_0)\neq v_0$
  for some $v_0,$ then
  \begin{align}
    g(v)\neq v,\ \ \forall v\in v_0+S:=\{v_0+v_1: v_1\in S\}.
  \end{align}
  As $S$ and $v_0+S$ are disjoint, $\mathsf{Pr}(v\in S) +
  \mathsf{Pr}(v\in v_0+S)\leq 1.$ But $\mathsf{Pr}(v\in S) =
  \mathsf{Pr}(v\in v_0+S).$ So, $\mathsf{Pr}(v\in S)\leq \frac 12.$
  The probability of error is $\mathsf{Pr}(v\not\in S)\geq \frac 12.$
  This means that the zero-error asymptotically achievable linear
  coding capacity is identical to the zero-error linear coding Shannon
  capacity for any multiple unicast network.

  To show that the reduction works for zero-error asymptotically
  achievable rates, note that if for any $\epsilon>0,$ the rate tuple
  $(\sum_{i=1}^k R_i-\epsilon,
  R_{k+1}+r_1-\epsilon,R_{k+2}+r_2-\epsilon,\ldots, R_{k+m} +
  r_m-\epsilon)$ is zero-error asymptotically achievable by linear
  codes, then every single equality and inequality stated in the proof
  continues to hold upto a correction term $\delta(\epsilon)$ where
  $\delta(\epsilon)\to 0$ as $\epsilon\to 0.$ From linearity, we can
  find a suitable choice of subspaces at $s_1^\prime, s_2^\prime,
  \ldots, s^\prime_{k+m}$ so that each source $s_i^\prime$ is
  transmitting an independent message at rate
  $R_i-\delta^\prime(\epsilon),$ where $\delta^\prime(\epsilon)\to 0$
  as $\epsilon\to 0.$

  This completes the justification of the summary in
  Table~\ref{table:reduction}. In particular, the reduction works in
  full generality for \emph{linear codes}.  There are technical
  difficulties with showing that the reduction goes through for
  zero-error asymptotically achievable rates or vanishing error
  achievable rates when \emph{general codes} are used. It is not known
  whether the reduction will work in these cases.
\end{remark}

\subsection{Proof of Theorem~\ref{thm:insufficiency}}
\label{subsec:insufficiency}

\begin{proof}
  Consider the network in Fig.~3 of \cite{Zeger_insufficiency}, for
  which linear codes were shown to be insufficient to achieve
  capacity. This network can be converted into a 10-unicast network
  using the construction in \cite{Zeger_nonreversibility}. By applying
  our construction, we find a two-unicast network in which the rate
  point $(9,10)$ is achievable by non-linear codes but not by linear
  codes.
\end{proof}

\subsection{Proof of Theorem~\ref{thm:non-Shannon}}
\label{subsec:non-Shannon}

\begin{proof}
  Consider the 6-unicast V\'{a}mos network in Fig.~13 of
  \cite{Zeger_matroid}. This network is shown in \cite{Zeger_matroid}
  to be matroidal (as defined by them in Definition~V.1) and hence,
  the best bound that can be obtained using Shannon inequalities on
  the maximum uniform rate achievable is 1. However, a unit rate per
  unicast is shown to not be achievable for this network. By applying
  our construction treating the V\'{a}mos network as a block $B,$ we
  get a two-unicast network with desired rate pair $(5,6).$ This can
  be naturally viewed as an 11-unicast network $\mathcal{N}$ with unit
  rate requirement for each unicast session.

  Consider the auxiliary 11-unicast network $\mathcal{N}^\prime$
  obtained by removing the block $B$ from $\mathcal{N}$ and fusing
  source-destination nodes of $B$, i.e. fusing $s_i^\prime$ with
  $t_i^\prime$ for $i=1,2,\ldots, 6.$ It is easy to verify that the
  butterfly coding scheme proposed in the proof of
  Theorem~\ref{thm:two-unicast-hard} can provide a linear coding scheme that
  achieves unit rate for each of the 11 source-destination pairs of
  $\mathcal{N}^\prime.$ This linear coding scheme naturally makes
  $\mathcal{N}^\prime$ a matroidal network (as elaborated in Theorem
  V.4 of \cite{Zeger_matroid}).

  The restriction of the matroids corresponding to the network in
  block $B$ and the network $\mathcal{N}^\prime$ to the intersection
  of their ground sets yields the full rank matroid on six
  elements. This is obviously a modular matroid, as defined in
  Sec. IV-A of \cite{Zeger_matroid}. Using Lemma~IV.7 from
  \cite{Zeger_matroid} (which guarantees the existence of a proper
  amalgam of matroids as long as their restrictions to the
  intersection of the ground sets is modular), we obtain that the
  11-unicast network $\mathcal{N}$ is matroidal too.

  Since the 11-unicast network $\mathcal{N}$ is matroidal, Shannon
  inequalities cannot rule out achievability of unit rate for each of
  the 11-unicast sessions or alternately, achievability of $(5,6)$ for
  the two-unicast network. However, if $(5,6)$ was achievable in the
  two-unicast network, then the construction would imply achievability
  of unit rate for each session of the 6-unicast V\'{a}mos network,
  which is proved to be impossible in \cite{Zeger_matroid}. Hence,
  $(5,6)$ is not achievable in the two-unicast network constructed
  from the V\'{a}mos network.
\end{proof}

\end{document}